\documentclass[10pt]{article}

\usepackage[margin=1in]{geometry}
\usepackage[T1]{fontenc}
\usepackage[english]{babel}
\usepackage[round]{natbib} 
\usepackage[dvipsnames]{xcolor}
\usepackage{amsmath, amssymb, amsthm}
\usepackage{bbm}
\usepackage{blkarray}
\usepackage{cancel}
\usepackage{enumitem}
\usepackage{float}
\usepackage{bm}
\usepackage{graphicx}
\usepackage{mathrsfs}
\usepackage{microtype}
\usepackage{ragged2e}
\usepackage{thmtools}
\usepackage{tikz}
\usepackage[Symbolsmallscale]{upgreek}
\usepackage{caption}
\usepackage{subcaption}
\usepackage[pagebackref]{hyperref}
\hypersetup{citecolor=Blue, colorlinks=true, linkcolor=purple}
\usepackage{fancyhdr}
\usepackage{algorithm}
\usepackage{algpseudocode}
\usepackage{comment}
\usepackage{subfiles}
\usepackage{mathtools} 
\mathtoolsset{showonlyrefs}

\usepackage{silence}
\def\spacingset#1{\renewcommand{\baselinestretch}%
    {#1}\small\normalsize} \spacingset{1}

\DeclareUnicodeCharacter{2161}{II} 

\newtheorem{theorem}{Theorem}[section]
\newtheorem{proposition}[theorem]{Proposition}

\newtheorem{lemma}[theorem]{Lemma}
\newtheorem{corollary}[theorem]{Corollary}

\newtheorem{assumption}[theorem]{Assumption}

\newtheoremstyle{recall}
{3pt}{3pt}{}{}{\bfseries}{.}{.5em}
{\thmname{#1}\thmnote{ #3}}

\title{ \Large \textbf{Recursive Maximum Likelihood Estimation for Interacting Particle Systems using Virtual Particles}}
\author{
Louis Sharrock\thanks{Department of Statistical Science, University College London. \texttt{l.sharrock@ucl.ac.uk}} \and 
Nikolas Kantas\thanks{Department of Mathematics, Imperial College London. \texttt{n.kantas@imperial.ac.uk}} \and 
Grigorios A. Pavliotis\thanks{Department of Mathematics, Imperial College London. \texttt{g.pavliotis@imperial.ac.uk}}
}

\date{}

\begin{document}
\maketitle

\spacingset{1.0}

\begin{abstract}
We study recursive maximum likelihood estimation for stochastic interacting particle systems based on continuous observation of a single particle. In this regime, consistent estimation of the finite-particle log-likelihood is not possible, even in the limit as the number of particles $N\rightarrow\infty$ and the time horizon $t\rightarrow\infty$. We thus seek to optimise the stationary log-likelihood of the limiting mean-field system. We achieve this via a form of stochastic gradient estimate in continuous time, with stochastic gradient estimates computed using the continuous trajectory of the single observed particle, alongside a virtual interacting particle system and a virtual tangent interacting particle system, which are integrated with the online parameter estimate. For fixed numbers of real and virtual particles, we show that the resulting algorithms drive the gradient of a finite-particle surrogate objective to zero as $t\to\infty$. We then prove that, in the iterated limit $t\to\infty$ followed by $N,M\to\infty$, these surrogate gradients converge uniformly to the gradient of the stationary log-likelihood of the limiting mean-field system, yielding convergence to its stationary points. We illustrate the method on several numerical examples, including a model with quadratic confinement and interaction potentials, a model of interacting FitzHugh--Nagumo neurons, and a stochastic Kuramoto model.
\end{abstract}

\section{Introduction}

Interacting particle systems provide a flexible framework for modelling populations of particles whose dynamics depend on collective behaviour. Such models arise in a broad range of applications, including statistical physics \citep{benedetto1997kinetic}, collective behaviour and multi-agent systems \citep{benachour1998nonlinear,canuto2012eulerian}, mean-field games and stochastic control \citep{buckdahn2017meanfield,cardaliaguet2018mean,carmona2018probabilistic,cardaliaguet2019master}, nonlinear filtering \citep{crisan2010approximate}, neuroscience and other problems in mathematical biology \citep{burger2007aggregation,baladron2012meanfield}, opinion dynamics and social interaction models \citep{chazelle2017wellposedness,goddard2022noisy}, quantitative finance \citep{giesecke2020inference}, Bayesian computation \citep{liu2016stein}, and the mean-field analysis of neural networks \citep{mei2018mean,sirignano2020mean,hu2020meanfield,rotskoff2022trainability}.

Since the seminal work of \citet{mckean1966class}, together with other early contributions such as \cite{vlasov1968vibrational,oelschlager1984martingale,sznitman1991topics,meleard1996asymptotic}, the theory of interacting particle systems and their mean-field limits has grown into a substantial modern literature. Existing results cover, amongst other topics, well-posedness \citep[e.g.,][]{huang2019distribution,chaudruderaynal2020strong}, strong existence and uniqueness \citep[e.g.,][]{jourdain2008nonlinear,bauer2018strong,mishura2020existence}, long-time behaviour \citep[e.g.,][]{carrillo2006contractions,cattiaux2008probabilistic,bolley2013uniform,eberle2019quantitative,bashiri2020longtime}, and large-particle limits \citep[e.g.,][]{malrieu2001logarithmic,malrieu2003convergence,durmus2020elementary,lacker2023sharp}. 

In parallel, there has been significant progress regarding statistical inference for this class of processes \citep[e.g.,][]{kasonga1990maximum,bishwal2011estimation,giesecke2020inference,chen2021maximum,sharrock2022parameterestimation,sharrock2023online,amorino2023parameter,dellamaestra2023lan,jasra2025bayesian,nickl2025bayesian}. In particular, there is now a substantial body of work devoted to likelihood-based parameter estimation for interacting particle systems and their mean-field limits, based on continuous or discrete observation of the entire particle system, or repeated observation of mean-field trajectories \citep[e.g.,][]{kasonga1990maximum,bishwal2011estimation,giesecke2020inference,chen2021maximum,sharrock2022parameterestimation,sharrock2023online,amorino2023parameter,dellamaestra2023lan,iguchi2025parameter,sharrock2026efficient}. A growing number of papers also consider statistical inference in the sparse observation regime, i.e., in the case where it is only possible to observe a small number of particles from the interacting particle system, or a small number of trajectories from the (stationary) mean-field system \citep[e.g.,][]{genoncatalot2022inference,genoncatalot2023parametric,pavliotis2022eigenfunction,pavliotis2024method,pavliotis2026fourierbased}. This regime is well motivated in practice. For example, it may be computationally very expensive to perform estimation based on a large number of particles, or else very expensive to accurately measure more than a small number of particles from the system due to financial or physical constraints.

In this paper, we focus on the second of these settings, namely the case in which the cost of acquiring accurate measurements dominates the computational cost of implementation. In such situations, an estimator that requires observation of only a \emph{single} particle trajectory is preferable to one that requires even a small number of observed particles (e.g., two or three), even if this comes at the price of additional computation. We are also particularly interested in {online} or {recursive} methods, which update the parameter estimate in real time as observations arrive. These methods contrast with {offline} or {batch} approaches, which typically require optimisation of a function, such as the log-likelihood, over the entire observed data path, and may become impractically slow when data are collected over long time horizons or when the model itself is costly to evaluate. Online methods have attracted considerable recent interest \citep[e.g.,][]{sirignano2017stochastic,sirignano2020stochastic,bhudisaksang2021online,bourguin2026quantitative}, including in recent work by the present authors on online parameter estimation for interacting particle systems \citep{sharrock2023online,sharrock2026efficient}. 

Of particular relevance to the current work is \citet{sharrock2026efficient}, which introduced an online estimator requiring observation of three particles from the data-generating interacting particle system: one primary particle and two auxiliary particles. Under mild assumptions, this estimator was shown to converge to a stationary point of the asymptotic log-likelihood of the interacting particle system and, under additional conditions, to the true parameter value. Crucially, however, this approach relied on access to two auxiliary particles to obtain a consistent estimate of the gradient of the asymptotic log-likelihood and hence a consistent estimator of the true parameter. Thus, the question of whether it is possible to obtain a comparable online estimator which only requires observation of a \emph{single} particle trajectory remains open. In this paper, we provide a positive answer to this question.

\subsection{Contributions}
Our main contributions are summarised below.
 
\begin{itemize}
\item We introduce two new algorithms for online parameter estimation in interacting particle systems that require observation of only a single particle trajectory. In addition to the observed trajectory, both estimators use a collection of virtual particles, which are integrated alongside the current parameter estimate.
\item For fixed numbers of real and virtual particles, we show that both estimators drive the gradient of a finite-particle surrogate objective to zero as $t\to\infty$. We then prove that, in the iterated limit $t\to\infty$ followed by $N,M\to\infty$, these surrogate gradients converge to the gradient of the stationary log-likelihood of the limiting mean-field process.
\item We illustrate the performance of the estimators in several numerical experiments, including a model with quadratic confinement and quadratic interaction, a model of interacting FitzHugh--Nagumo neurons, and a stochastic Kuramoto model.
\end{itemize}

\subsection{Related Work}
\label{sec:related-work}

There is now a substantial body of work on recursive parameter estimation for continuous-time stochastic processes. Early contributions include \citet{gerencser1984continuoustime,levanony1994recursive,gerencser2009recursive}. More recently, \citet{sirignano2017stochastic,sirignano2020stochastic} introduced \emph{stochastic gradient descent in continuous time}, an efficient method for statistical inference in fully observed diffusion processes, and established almost sure convergence and asymptotic normality; see also \citet{bourguin2026quantitative} for a more recent analysis based on Malliavin calculus. Related ideas have since been developed for partially observed diffusions \citep{surace2019online,sharrock2022theory,sharrock2022twotimescale, sharrock2022joint,sharrock2023twotimescale}, jump diffusions \citep{bhudisaksang2021online}, nonlinear diffusions and interacting particle systems \citep{sharrock2023online,sharrock2026efficient}, and models driven by coloured noise \citep{pavliotis2025filtered}. Closely related continuous-time gradient methods for objectives defined through the stationary law of a diffusion have also been proposed and analysed by \citet{wang2022forward,wang2024continuous}.

In this paper, our focus is on the application of such techniques to statistical inference for interacting particle systems and their mean-field limits. In recent years, there has been growing interest in this problem, building on early contributions by \citet{kasonga1990maximum,bishwal2011estimation,giesecke2020inference}. Recent contributions include new results on maximum likelihood estimation \citep{chen2021maximum,sharrock2022parameterestimation,dellamaestra2023lan}, online estimation procedures \citep{sharrock2023online}, local asymptotic normality \citep{dellamaestra2023lan,heidari2025local}, joint estimation of drift and diffusion coefficients \citep{amorino2023parameter}, and estimation for weakly interacting hypoelliptic diffusions \citep{iguchi2025parameter}. Whilst less directly related to our work, we note in passing that a number of authors have also considered nonparametric and semiparametric approaches to statistical inference in mean-field processes. See, e.g., \cite{lu2019nonparametric,lu2021learning,dellamaestra2022nonparametric,yao2022meanfield,lang2021identifiability,comte2022nonparametric,amorino2025polynomial,belomestny2024nonparametric,comte2024nonparametric,nickl2025bayesian,pavliotis2026fourierbased}.

Many of the aforementioned works operate in a dense observation regime, whereby it is assumed possible to observe the entire interacting particle system, or else multiple i.i.d.\ trajectories of the limiting mean-field dynamics. On the other hand, several papers have considered sparse observation regimes. These include \citet{genoncatalot2022inference,genoncatalot2023parametric}, who developed parametric procedures for statistical inference in ergodic McKean-Vlasov processes based on observation of a single trajectory, and \citet{comte2024nonparametric}, who studied nonparametric estimation in a similar framework. From the interacting particle system perspective, \citet{pavliotis2022eigenfunction,pavliotis2024method,pavliotis2026fourierbased} proposed estimators based on eigenfunction martingales, moment identities, and Fourier expansions, each using only a single observed particle. Most recently, \citet{sharrock2026efficient} introduced an online parameter-estimation method based on the observation of three continuous trajectories from the interacting particle system (see below for a more detailed comparison). Partial observations have also been considered in complementary settings, including kinetic interacting particle systems with incomplete discrete data \citep{amorino2024kinetic} and partially observed nonlinear diffusions estimated via likelihood-based or Bayesian multilevel methods \citep{jasra2025parameter, jasra2025bayesian}.

The closest methodological neighbour to the present work is the recent paper by the same authors \citep{sharrock2026efficient}. In that paper, we also study online maximum likelihood estimation for interacting particle systems in the sparse-observation regime. There, however, the estimator is defined with respect to the asymptotic, jointly in time and in the number of particles, log-likelihood of the \emph{interacting particle system} itself. This yields an algorithm which, in order to obtain consistent parameter estimates, requires observation of \emph{three} particles. By contrast, in the present paper the estimator is defined with respect to the asymptotic, now only in time, log-likelihood of the limiting \emph{mean-field system}. This yields an estimator that requires observation of only a \emph{single} particle from the interacting particle system, together with a collection of virtual particles propagated at the current parameter estimate.

\subsection{Paper Organisation}
The remainder of this paper is organised as follows. In Section~\ref{sec:prelims}, we define notation, the model, and the likelihood. In Section~\ref{sec:methodology}, we present our main methodological contributions. In Section~\ref{sec:theory}, we state our assumptions and our main theoretical results. In Section~\ref{sec:numerics}, we present several numerical examples illustrating our proposed methodology. Finally, in Section~\ref{sec:conclusion}, we provide some concluding remarks. 

\section{Preliminaries}
\label{sec:prelims} 

\subsection{Notation}
\paragraph{Norms and Inner Products.} We use $\langle\cdot,\cdot\rangle$ and $\|\cdot\|$ to denote, respectively, the Euclidean inner product and the Euclidean norm on $\mathbb{R}^d$. For a symmetric positive definite matrix $A$, we write $\|z\|^2_A :=\langle z, A^{-1}z \rangle$. For matrices and higher order tensors, we use $\|\cdot\|$ to denote the Frobenius norm. Finally, we write $\|\cdot\|_q$ to denote the $\ell^q$ norm. 

\paragraph{Probability Measures.}
We write $\mathcal{P}(\mathbb{R}^d)$ for the collection of all probability measures on $\mathbb{R}^d$. In addition, for $q\geq 1$, we write
\begin{equation}
\mathcal{P}_q(\mathbb{R}^d):=\{\mu\in\mathcal{P}(\mathbb{R}^d):\int_{\mathbb{R}^d}\|x\|^q\,\mu(\mathrm{d}x)<\infty\}.
\end{equation}
for the set of all probability measures with finite $q^{\text{th}}$ moment. For $\mu\in\mathcal{P}_q(\mathbb{R}^d)$, we occasionally write $\mu(\|\cdot\|^q)=\int_{\mathbb{R}^d}\|x\|^q\,\mu(\mathrm{d}x)$ for the $q$th moment of $\mu$.

\paragraph{Signed Measures.} We write $\mathcal{M}(\mathbb{R}^d)$ for the collection of all finite signed Borel measures on $\mathbb{R}^d$. For $\eta\in\mathcal{M}(\mathbb{R}^d)$, we write $|\eta|$ for its total variation measure. In addition, for $r\geq 0$, we define
\begin{equation}
\mathcal{M}_r(\mathbb{R}^d)
:=
\left\{
\eta\in\mathcal{M}(\mathbb{R}^d)
:
\int_{\mathbb{R}^d}(1+\|x\|^r)\,|\eta|(\mathrm{d}x)<\infty
\right\}.
\end{equation}
For $\eta\in\mathcal{M}_r(\mathbb{R}^d)$, we define the polynomially weighted total variation distance by
\begin{equation}
\|\eta\|_{\mathrm{TV},r}
:=
\int_{\mathbb{R}^d}(1+\|x\|^r)\,|\eta|(\mathrm{d}x).
\end{equation}
We will also consider $\mathbb{R}^p$-valued finite signed measures of the form $\eta=(\eta_1,\dots,\eta_p)$, where $\eta_\ell\in\mathcal M(\mathbb R^d)$ for $\ell=1,\dots,p$. In this case, we write $\eta\in\mathcal M_r(\mathbb R^d;\mathbb R^p)$ if each component belongs to $\mathcal M_r(\mathbb R^d)$. We also define
$
\|\eta\|_{\mathrm{TV},r}
:=
\sum_{\ell=1}^p \|\eta_\ell\|_{\mathrm{TV},r}.
$
Finally, for a measurable map $\varphi:\mathbb R^d\to\mathbb R^m$,  we write $\langle \eta,\varphi\rangle\in\mathbb R^{p\times m}$ for the matrix whose $\ell$-th row is given by $\int_{\mathbb R^d}\varphi(x)\,\eta_\ell(\mathrm d x)$, whenever these integrals are finite.

\paragraph{The Wasserstein Distance.} For $q\geq 1$ and $\mu,\nu\in\mathcal{P}_q(\mathbb{R}^d)$, we write $\mathsf{W}_{q}(\mu,\nu)$ for the Wasserstein distance of order $q$, namely,
\begin{equation}
\mathsf{W}_{q}(\mu,\nu)
:=
\inf_{\pi\in\Pi(\mu,\nu)}
\left[\int_{\mathbb{R}^d\times\mathbb{R}^d}\|x-y\|^q\,\pi(\mathrm{d}x,\mathrm{d}y)\right]^{\frac{1}{q}},
\end{equation}
where $\Pi(\mu,\nu)$ denotes the set of all couplings of $\mu$ and $\nu$. That is, the set of all probability measures on $\mathbb{R}^d\times\mathbb{R}^d$ with marginals $\mu$ and $\nu$.
For $r\geq 0$ and $\mu,\nu\in\mathcal P_{r+1}(\mathbb R^d)$, we also define the weighted $1$-Wasserstein discrepancy
\begin{equation}
\mathsf W_{1,r}(\mu,\nu)
:=
\inf_{\Gamma\in\Pi(\mu,\nu)}
\int_{\mathbb R^d\times\mathbb R^d}
\bigl(1+\|u\|^r+\|v\|^r\bigr)\|u-v\|\,\Gamma(\mathrm d u,\mathrm d v).
\label{eq:weighted-wasserstein}
\end{equation}

\subsection{The Model}

\subsubsection{The Interacting Particle System}
We consider a weakly interacting particle system (IPS) on $\mathbb{R}^d$, parameterised by $\theta\in\Theta\subseteq\mathbb{R}^p$, where $\Theta$ is an open set, of the form
\begin{align}
\mathrm{d}x_t^{\theta,i,N} &= 
    \Big[\frac{1}{N}\sum_{j=1}^N b(\theta,x_t^{\theta,i,N},x_t^{\theta,j,N})\Big]\mathrm{d}t+ \sigma\mathrm{d}w_t^{i,N}, \label{eq:IPS} 
\end{align}
where $\sigma\in\mathbb{R}^{d\times d}$ is a constant and invertible matrix, and $w^{i,N}:=(w_t^{i,N})_{t\geq0}$ are a set of independent $\mathbb{R}^d$-valued standard Brownian motions. We assume that $\smash{(x_0^{i,N})_{i=1}^N}$ are a set of i.i.d. $\mathbb{R}^d$-valued random variables, with common law $\mu_0$, independent of $(w_t^{i,N})_{t\geq 0}$. 

Let $\smash{\mu_t^{\theta,N}:=\frac{1}{N}\sum_{j=1}^N\delta_{x_t^{\theta,j,N}}}$ denote the empirical law of the IPS. In addition, suppose that we define the function $B(\theta,x,\mu) := \int_{\mathbb{R}^d} b(\theta,x,y)\mu(\mathrm{d}y)$. Using this notation, the IPS above can be written in the form
\begin{align}
\mathrm{d}x_t^{\theta,i,N} &= 
    B(\theta,x_t^{\theta,i,N},\mu_t^{\theta,N})\mathrm{d}t+ \sigma\mathrm{d}w_t^{i,N}. \label{eq:IPS-rewrite} 
\end{align}
It will sometimes be useful to view this IPS as an SDE on $(\mathbb{R}^d)^N$. In particular, suppose that we write $\boldsymbol{x}_t^{\theta,N} = (x_t^{\theta,1,N},\dots,{x}_t^{\theta,N,N})^{\top}\in(\mathbb{R}^d)^N$. Then this process is the solution of
\begin{equation}
    \label{eq:vector-sde}
    \mathrm{d}\boldsymbol{x}_t^{\theta,N} = {B}^N(\theta,\boldsymbol{x}_t^{\theta,N})\mathrm{d}t + \Sigma_N \mathrm{d}\boldsymbol{w}_t^N,
\end{equation}
where $\Sigma_N = \boldsymbol{I}_N\otimes \sigma$, $\boldsymbol{w}^N = (w^{1,N},\dots,w^{N,N})^{\top}$ is a $(\mathbb{R}^d)^N$-valued standard Brownian motion, and the function $B^N(\theta,\cdot):(\mathbb{R}^d)^N\rightarrow(\mathbb{R}^d)^N$ is defined according to the form $ B^N(\theta,\boldsymbol{x}^{N}) = (B^{1,N}(\theta,\boldsymbol{x}^{N}),\dots,B^{N,N}(\theta,\boldsymbol{x}^{N}))^{\top}$, where, for each $i\in[N]:=\{1,\dots,N\}$, the function 
 $B^{i,N}(\theta,\cdot):(\mathbb{R}^d)^N\rightarrow\mathbb{R}^d$ is defined according to $B^{i,N}(\theta,\boldsymbol{x}^{N}) =B(\theta,x^{i,N},\mu^N)$.
 
We will assume that there exists a true, static parameter $\theta_0\in\Theta$ which generates the observed system $\smash{(x_t^{i,N})_{t\geq 0}:= (x_t^{\theta_0,i,N})_{t\geq 0}}$. Thus, we operate under the exact modelling regime, and in our notation will suppress the dependence of the observed path on the true parameter $\theta_0$.

\subsubsection{The McKean-Vlasov SDE} 
We are interested in the case where the number of particles $N\gg 1$ so that, under appropriate conditions \citep[e.g.,][]{malrieu2001logarithmic,cattiaux2008probabilistic}, any single particle in the IPS can be well approximated by the solution of the limiting McKean-Vlasov SDE (MVSDE), namely, 
\begin{align}
\mathrm{d}x_t^{\theta}&= B(\theta,x_t^{\theta},\mu_t^{\theta})\mathrm{d}t + \sigma\mathrm{d}w_t,\quad \mu_t^{\theta} := \mathrm{Law}(x_t^{\theta}) \label{eq:MVSDE} 
\end{align}
where $w=(w_t)_{t\geq 0}$ is a standard $\mathbb{R}^d$-valued Brownian motion. This phenomenon is known as the \emph{propagation of chaos} \citep{sznitman1991topics,chaintron2022propagation,chaintron2022propagationa}.

\subsection{Model Assumptions}
\label{sec:model-assumptions}
We are now ready to state some initial assumptions on the model. We will impose these assumptions throughout.

\begin{assumption}
\label{assumption:moments}
    The initial law satisfies $\mu_0\in\mathcal{P}_q(\mathbb{R}^d)$ for all $q\in\mathbb{N}$.
\end{assumption}

\begin{assumption}
\label{assumption:model}
    For each $\theta\in\Theta$, the IPS \eqref{eq:IPS} and the MVSDE \eqref{eq:MVSDE} are well posed, admit unique invariant measures $\pi_{\theta}^N$ and $\pi_{\theta}$, and satisfy uniform-in-time moment bounds together with uniform-in-time propagation of chaos.
\end{assumption}

\begin{assumption}
\label{assumption:drift}
    For each $\theta\in\Theta$, the drift is continuous, locally Lipschitz, and of polynomial growth in both spatial variables. The same is also true of $\partial_{\theta} b(\theta,\cdot,\cdot)$, $\partial_x b(\theta,\cdot,\cdot)$, and $\partial_{y}b(\theta,\cdot,\cdot)$.
\end{assumption}

Our assumptions are deliberately stated in some generality. In typical models, they can be verified under standard dissipativity or convexity assumptions on the confinement and interaction potentials; see, e.g., \cite{malrieu2001logarithmic,malrieu2003convergence} for some classical assumptions, and \cite{carrillo2020longtime,delgadino2023phase,lacker2018strong,lacker2023hierarchies,lacker2023sharp} for some more recent results.

\subsection{The Likelihood Function}
\label{sec:log-lik}
We are interested in online inference for the unknown parameter $\theta_0$. We will perform this task based on recursive maximisation of an appropriate likelihood function. 

\subsubsection{The Log-Likelihood of the Interacting Particle System}
\label{sec:log-lik-ips}
Let $\mathbb{P}_t^{\theta,N}$ denote the probability measure induced by the trajectories $\smash{(x_s^{\theta,i,N})_{s\in[0,t]}^{i\in[N]}}$ of the IPS. Then, using Girsanov's Theorem \citep[e.g.,][]{oksendal2003stochastic},
we have a log-likelihood function given (up to an additive constant) by \citep[e.g.,][]{kasonga1990maximum,dellamaestra2023lan}
\begin{align}
\mathcal{L}_t^{N}(\theta)
&=\sum_{i=1}^N\Big[ \int_0^t \big \langle  B(\theta,x_s^{i,N},\mu_s^{N}), (\sigma\sigma^{\top})^{-1}\mathrm{d}x_s^{i,N}\big\rangle - \frac{1}{2} \int_0^t \| B(\theta,x_s^{i,N},\mu_s^{N})\|_{\sigma\sigma^{\top}}^2\mathrm{d}s \Big]. \label{ips:log-likelihood}
\end{align}
The behaviour of this function in the joint limit as $N\rightarrow\infty$ and $t\rightarrow\infty$ is the subject of the following result.

\begin{proposition}
\label{prop:ips-likelihood-n-t-limit}
    Suppose that Assumption \ref{assumption:moments}, Assumption \ref{assumption:model}, and Assumption~\ref{assumption:drift} hold. Then, as $N\rightarrow\infty$ and then $t\rightarrow\infty$, it holds that
    \begin{align}
         \frac{1}{Nt} \left[\mathcal{L}_t^N(\theta) - \mathcal{L}_t^{N}(\theta_0)\right]
         &\stackrel{L^1}{\longrightarrow} -\mathcal{L}(\theta),\qquad \mathcal{L}(\theta):=\frac{1}{2}\int_{\mathbb{R}^d} \|B(\theta,x,\pi_{\theta_0}) - B(\theta_0,x,\pi_{\theta_0})\|_{\sigma\sigma^{\top}}^2\pi_{\theta_0}(\mathrm{d}x) \label{eq:asymptotic-ll-IPS}
    \end{align}
    where $\pi_{\theta_0}\in\mathcal{P}(\mathbb{R}^d)$ denotes the unique invariant measure of the MVSDE evaluated at the true parameter $\theta_0$.
\end{proposition}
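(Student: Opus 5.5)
Substituting the observation dynamics $\mathrm{d}x_s^{i,N} = B(\theta_0,x_s^{i,N},\mu_s^N)\mathrm{d}s + \sigma\mathrm{d}w_s^{i,N}$ into \eqref{ips:log-likelihood} reduces everything to a difference of drifts. Write $\Delta_s^{i,N}(\theta) := B(\theta,x_s^{i,N},\mu_s^N) - B(\theta_0,x_s^{i,N},\mu_s^N)$. Completing the square, which is a routine calculation, gives the exact decomposition
\begin{equation}
\frac{1}{Nt}\left[\mathcal{L}_t^N(\theta)-\mathcal{L}_t^N(\theta_0)\right]
= -\frac{1}{2Nt}\sum_{i=1}^N\int_0^t \|\Delta_s^{i,N}(\theta)\|^2_{\sigma\sigma^\top}\mathrm{d}s
+ \frac{1}{Nt}\sum_{i=1}^N\int_0^t\big\langle \Delta_s^{i,N}(\theta),(\sigma\sigma^\top)^{-1}\sigma\,\mathrm{d}w_s^{i,N}\big\rangle .
\end{equation}
I will show that the martingale term vanishes in $L^1$ and that the drift term converges in $L^1$ to $-\mathcal{L}(\theta)$.

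\textbf{Martingale term.} By the It\^o isometry and Cauchy--Schwarz, its $L^1$ norm is at most $(Nt)^{-1}\big(\sum_i\int_0^t \mathbb{E}\|\sigma^{-1}\Delta_s^{i,N}\|^2\mathrm{d}s\big)^{1/2}$. The Brownian motions $w^{i,N}$ are independent, so the cross-variations vanish. Assumption~\ref{assumption:drift} gives $\|\Delta_s^{i,N}\|\le C(1+\|x_s^{i,N}\|^k + \mu_s^N(\|\cdot\|^k))$, and the uniform-in-time moment bounds in Assumption~\ref{assumption:model} make this expectation bounded uniformly in $s$ and $N$. The term is therefore $O((Nt)^{-1/2})$, which tends to zero in either order of limits.

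\textbf{Drift term, limit in $N$.} By exchangeability, its expectation is $-\frac{1}{2t}\int_0^t\mathbb{E}\|\Delta_s^{1,N}\|^2\mathrm{d}s$. For fixed $t$ and $N\to\infty$, I replace $x_s^{i,N}$ by independent McKean--Vlasov copies $\bar x_s^{i}$, driven by the same Brownian motions and initial conditions, and replace $\mu_s^N$ by $\mu_s^{\theta_0}$. The error is controlled by propagation of chaos together with local Lipschitz continuity and polynomial growth, via H\"older and the uniform moment bounds; it is bounded by $C\,\epsilon_N$ with $\epsilon_N\to 0$ uniformly in $s$. The empirical average $\frac1N\sum_i f_s(\bar x_s^i)$, with $f_s(x) := \frac12\|B(\theta,x,\mu_s^{\theta_0})-B(\theta_0,x,\mu_s^{\theta_0})\|^2_{\sigma\sigma^\top}$, then converges in $L^2$ to $\int f_s\,\mathrm{d}\mu_s^{\theta_0}$ by the i.i.d.\ law of large numbers, with variance $O(1/N)$ uniformly in $s$. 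In the iterated limit the drift term therefore converges in $L^1$ to $-\frac1t\int_0^t\int f_s\,\mathrm{d}\mu_s^{\theta_0}\mathrm{d}s$. This limit is deterministic.

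\textbf{Limit in $t$.} The remaining step, and the main obstacle, is to show that $\int f_s\,\mathrm{d}\mu_s^{\theta_0}\to\int f\,\mathrm{d}\pi_{\theta_0}$, where $f$ is $f_s$ with $\mu_s^{\theta_0}$ replaced by $\pi_{\theta_0}$; Ces\`aro averaging then gives the claim. This requires convergence of $\mu_s^{\theta_0}$ to $\pi_{\theta_0}$, both for integrating the polynomially growing test function and inside the measure argument of $B$. I will obtain it from the uniform-in-time propagation of chaos combined with ergodicity of the IPS in Assumption~\ref{assumption:model}: first $\mathsf{W}_q(\mu_s^{\theta_0},\pi^{N}_{\theta_0}\text{-marginal})$ is small uniformly in $s$ for large $N$, and then the finite-$N$ system converges to its unique invariant law $\pi^N_{\theta_0}$. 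If this is too weak, I will instead read the assumption as including convergence of $\mu_s^{\theta_0}$ to $\pi_{\theta_0}$ in $\mathsf{W}_q$. With that convergence, $|\int f_s\,\mathrm{d}\mu_s - \int f\,\mathrm{d}\pi|$ is bounded using the local Lipschitz and polynomial growth bounds and H\"older's inequality with uniform moments, for example through the weighted distance $\mathsf{W}_{1,r}$ in \eqref{eq:weighted-wasserstein}, and so tends to zero.
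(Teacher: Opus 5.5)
The paper gives no argument here beyond citing Corollary~9 of \citet{sharrock2026efficient}, so your proposal has to stand on its own. Most of it does. The completing-the-square decomposition is exact, the martingale bound $O((Nt)^{-1/2})$ is correct, and the $N\to\infty$ step works. In fact, because your $N$-errors are uniform in $s$ (hence in $t$), you get the joint limit in $(N,t)$ and not only the iterated one. The problem is the $t$-limit, which you correctly flag as the main step.

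As written, that step does not go through. You claim that $\mathsf{W}_q(\mu_s^{\theta_0},\pi_{\theta_0}^{1,N})$ is small uniformly in $s$ for large $N$, but this is false in general: at $s=0$ it equals $\mathsf{W}_q(\mu_0,\pi_{\theta_0}^{1,N})$, which need not be small. What uniform-in-time propagation of chaos gives is $\sup_s\mathsf{W}_1(\mu_s^{\theta_0},\mathrm{Law}(x_s^{1,N}))\le\varepsilon_N$. Combined with convergence of the elliptic finite-$N$ system to $\pi^N_{\theta_0}$, this yields only $\limsup_{s\to\infty}\mathsf{W}_1(\mu_s^{\theta_0},\pi_{\theta_0}^{1,N})\le\varepsilon_N$. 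That says nothing about $\pi_{\theta_0}$ until you also know $\pi_{\theta_0}^{1,N}\to\pi_{\theta_0}$.

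The missing ingredient is stationary propagation of chaos. This is how the paper itself reads Assumption~\ref{assumption:model}: in the proof of Proposition~\ref{prop:inf-n-convergence-1} it couples $x^{i,N}\sim\pi_{\theta_0}^{1,N}$ with $\bar x\sim\pi_{\theta_0}$ at cost $\varepsilon_N$. With it, the triangle inequality
\[
\mathsf{W}_1(\mu_s^{\theta_0},\pi_{\theta_0})\le\mathsf{W}_1\bigl(\mu_s^{\theta_0},\mathrm{Law}(x_s^{1,N})\bigr)+\mathsf{W}_1\bigl(\mathrm{Law}(x_s^{1,N}),\pi_{\theta_0}^{1,N}\bigr)+\mathsf{W}_1\bigl(\pi_{\theta_0}^{1,N},\pi_{\theta_0}\bigr)
\]
gives $\limsup_{s}\mathsf{W}_1(\mu_s^{\theta_0},\pi_{\theta_0})\le 2\varepsilon_N$ for every $N$. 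Hence $\mu_s^{\theta_0}\to\pi_{\theta_0}$, and you do not need to add this convergence as an extra hypothesis.

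Without stationary propagation of chaos, your bound still shows that $(\mu_s^{\theta_0})_s$ is Cauchy. However, identifying its limit as $\pi_{\theta_0}$ would require showing the limit is invariant for the McKean--Vlasov flow, which needs continuity of that flow in the initial law, and then invoking uniqueness.

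Once $\mu_s^{\theta_0}\to\pi_{\theta_0}$, the rest is simpler than you suggest. The quantity $\int f_s\,\mathrm{d}\mu_s^{\theta_0}$ is the integral of a fixed continuous, polynomially bounded function of $(x,y,z)$ against $(\mu_s^{\theta_0})^{\otimes 3}$. Weak convergence plus the uniform moment bounds therefore suffice, and no $\mathsf{W}_{1,r}$ Lipschitz estimate is needed.
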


\begin{proof}
    See Corollary 9, \citet{sharrock2026efficient}.
\end{proof}

\subsubsection{The Log-Likelihood of the McKean--Vlasov SDE}
\label{sec:log-lik-mvsde}
Let $\mathbb{P}_t^{\theta}$ denote the probability measure induced by the solution $(x_s^{\theta})_{s\in[0,t]}$ of the MVSDE \eqref{eq:MVSDE}. Then, once more appealing to Girsanov's Theorem, we have a log-likelihood function given by \citep[e.g.,][Section 2.3]{dellamaestra2023lan}
\begin{equation}
    \mathcal{L}_t(\theta) = \int_0^t \big\langle B(\theta,x_s,\mu_s^{\theta}), (\sigma\sigma^{\top})^{-1}\mathrm{d}x_s\big\rangle - \frac{1}{2}\int_0^t \big\|B(\theta,x_s,\mu_s^{\theta})\big\|^2_{\sigma\sigma^{\top}}\mathrm{d}s \label{eq:MVSDE-log-likelihood}
\end{equation}
where $(x_s)_{s\geq 0}:=(x_s^{\theta_0})_{s\geq 0}$ denotes the path of the MVSDE at the true parameter $\theta_0$. The asymptotic behaviour as the time horizon $t\rightarrow\infty$ is characterised by the following result. 

\begin{proposition}
\label{prop:mvsde-likelihood-t-limit}
    Suppose that Assumption \ref{assumption:moments}, Assumption \ref{assumption:model}, and Assumption~\ref{assumption:drift} hold.  Then, as $t\rightarrow\infty$, it holds that
    \begin{align}
        \frac{1}{t}\left[\mathcal{L}_t(\theta) - \mathcal{L}_t(\theta_0)\right]
        &\stackrel{L^1}{\longrightarrow} - \mathcal{J}(\theta), \qquad \mathcal{J}(\theta):=\frac{1}{2}\int_{\mathbb{R}^d} \| B(\theta,x,\pi_{\theta}) - B(\theta_0,x,\pi_{\theta_0})\|_{\sigma\sigma^{\top}}^2\pi_{\theta_0}(\mathrm{d}x). \label{eq:asymptotic-ll-mvsde}
    \end{align}
    where $\pi_{\theta}, \pi_{\theta_0}\in\mathcal{P}(\mathbb{R}^d)$ denote the unique invariant measures of the MVSDE, evaluated at the parameter $\theta$ and the true parameter $\theta_0$, respectively.
\end{proposition}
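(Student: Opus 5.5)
We substitute the dynamics of the observed path into the log-likelihood difference, split it into a martingale part and a time-averaged part, and send $t\to\infty$ in each separately. Since $\mathrm{d}x_s = B(\theta_0,x_s,\mu_s^{\theta_0})\mathrm{d}s + \sigma\mathrm{d}w_s$, write $\Delta_s(\theta) := B(\theta,x_s,\mu_s^{\theta}) - B(\theta_0,x_s,\mu_s^{\theta_0})$. Completing the square in \eqref{eq:MVSDE-log-likelihood} gives the exact identity
\begin{equation}
\mathcal{L}_t(\theta)-\mathcal{L}_t(\theta_0) = \int_0^t \big\langle \Delta_s(\theta), (\sigma\sigma^{\top})^{-1}\sigma\,\mathrm{d}w_s\big\rangle - \frac12\int_0^t \|\Delta_s(\theta)\|^2_{\sigma\sigma^{\top}}\,\mathrm{d}s.
\end{equation}
After dividing by $t$, it therefore suffices to show two things: the stochastic integral term vanishes in $L^1$, and the time average of $\frac12\|\Delta_s(\theta)\|^2_{\sigma\sigma^\top}$ converges in $L^1$ to $\mathcal{J}(\theta)$.

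\textbf{Martingale term.} By Itô's isometry and Cauchy--Schwarz, its $L^1$ norm after division by $t$ is at most
\begin{equation}
t^{-1}\Big(\int_0^t \mathbb{E}\|\Delta_s(\theta)\|^2\mathrm{d}s\Big)^{1/2}.
\end{equation}
By Assumption~\ref{assumption:drift}, $b$ has polynomial growth, so $\|\Delta_s(\theta)\|^2$ is bounded by a polynomial in $\|x_s\|$ and the moments of $\mu_s^\theta$ and $\mu_s^{\theta_0}$. Under Assumptions~\ref{assumption:moments}--\ref{assumption:model} these moments are uniformly bounded in time, so the integral is $O(t)$ and the whole term is $O(t^{-1/2})\to0$.

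\textbf{Replacing time-marginals by invariant measures.} Let $F(x,\mu,\nu) := \frac12\|B(\theta,x,\mu)-B(\theta_0,x,\nu)\|^2_{\sigma\sigma^\top}$. I would first replace $\mu_s^\theta$ and $\mu_s^{\theta_0}$ by $\pi_\theta$ and $\pi_{\theta_0}$. The local Lipschitz and polynomial growth assumptions give
\begin{equation}
\|B(\theta,x,\mu)-B(\theta,x,\pi_\theta)\| \le C(1+\|x\|^r)\,\mathsf{W}_{1,r}(\mu,\pi_\theta),
\end{equation}
obtained by integrating $b$'s increment in $y$ against an optimal coupling. Convergence $\mathsf{W}_{1,r}(\mu_s^\theta,\pi_\theta)\to0$ follows from the convergence to the unique invariant measure in Assumption~\ref{assumption:model} together with the uniform moment bounds, via uniform integrability. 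Combining this with Cauchy--Schwarz and the moment bounds, the replacement error in $L^1$ is $t^{-1}\int_0^t o(1)\,\mathrm{d}s\to0$ (a Cesàro average).

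\textbf{Ergodic step (main obstacle).} It remains to show
\begin{equation}
\frac1t\int_0^t F(x_s,\pi_\theta,\pi_{\theta_0})\,\mathrm{d}s \to \int F(x,\pi_\theta,\pi_{\theta_0})\,\pi_{\theta_0}(\mathrm{d}x) = \mathcal{J}(\theta)
\end{equation}
in $L^1$. The path $x_s$ solves a time-inhomogeneous SDE, because its drift depends on $\mu_s^{\theta_0}$. I would first couple it synchronously, using the same Brownian motion, with the solution $\bar x_s$ of the time-homogeneous linear SDE with frozen drift $B(\theta_0,\cdot,\pi_{\theta_0})$. That process has $\pi_{\theta_0}$ as an invariant measure, and I would take its uniqueness and ergodicity as part of Assumption~\ref{assumption:model}.

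Plan to control the coupling gap: bound $\mathbb{E}\|x_s-\bar x_s\|$ using the vanishing drift perturbation $\mathsf{W}_{1,r}(\mu_s^{\theta_0},\pi_{\theta_0})$, and then use local Lipschitzness of $F$ with moment bounds to show the Cesàro average of $\mathbb{E}|F(x_s)-F(\bar x_s)|$ vanishes. This requires some contractivity of the frozen dynamics, which is the delicate part; if it is unavailable I would instead cite the uniform-in-time stability implicit in Assumption~\ref{assumption:model}.

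For $\bar x$, the $L^1$ ergodic theorem applies to $F(\cdot,\pi_\theta,\pi_{\theta_0})\in L^1(\pi_{\theta_0})$. Combining it with the Cesàro average of $\mathrm{Law}(\bar x_s)\to\pi_{\theta_0}$, and using the polynomial bounds on $F$ to get uniform integrability, upgrades almost sure convergence to $L^1$ convergence. This completes the argument.
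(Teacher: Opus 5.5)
The paper gives no in-house argument for this statement; it cites Proposition~10 of the companion paper. So the question is whether your argument is complete on its own. Your reduction is sound. The completed-square identity is exact. The stochastic integral is $O(t^{-1/2})$ in $L^1$ by It\^o's isometry and the uniform-in-time moments. Replacing $\mu_s^{\theta},\mu_s^{\theta_0}$ by $\pi_\theta,\pi_{\theta_0}$ through $\mathsf W_{1,r}$ and a Ces\`aro average also works.

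The gap is the ergodic step, which is where the real content lies. Synchronous coupling with the frozen process gives $\mathrm d(x_s-\bar x_s)=[B(\theta_0,x_s,\mu_s^{\theta_0})-B(\theta_0,\bar x_s,\pi_{\theta_0})]\,\mathrm ds$. The noise cancels in the difference, so $\mathbb E\|x_s-\bar x_s\|\to0$ needs a global one-sided Lipschitz (strong monotonicity) condition on $x\mapsto B(\theta_0,x,\pi_{\theta_0})$. Assumptions~\ref{assumption:moments}--\ref{assumption:drift} give only local Lipschitz bounds with polynomially growing constants, so Gr\"onwall yields at best an exponentially growing bound. For non-convex confinement, or whenever the frozen stochastic flow has a positive top Lyapunov exponent, synchronously coupled paths need not merge. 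The Ces\`aro average of $\mathbb E|F(x_s)-F(\bar x_s)|$ then need not vanish. Your fallback does not repair this: the uniform-in-time stability in Assumption~\ref{assumption:model} compares the particle system with the MVSDE, not the MVSDE path with the frozen linear process.

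The missing idea is to change the comparison process. Once the martingale and replacement errors are removed, $\mathbb E\bigl|\frac1t\int_0^tF(x_s,\pi_\theta,\pi_{\theta_0})\,\mathrm ds-\mathcal J(\theta)\bigr|$ depends only on the law of the path $(x_s)_{s\ge0}$, so any coupling is admissible. Couple $(x_s)$ with particle $i$ of the IPS at $\theta_0$ via the pathwise form of uniform-in-time propagation of chaos, $\sup_{s\ge0}\mathbb E\|x_s-x_s^{i,N}\|^2\le\varepsilon_N^2$. The argument then runs in three steps.
\begin{enumerate}
\item Local Lipschitzness of $F$ and the moment bounds give an error $C\varepsilon_N$, uniformly in $t$.
\item The IPS is time-homogeneous and nondegenerate with unique invariant law $\pi_{\theta_0}^N$. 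The ergodic theorem plus uniform integrability therefore gives $\frac1t\int_0^tF(x_s^{i,N})\,\mathrm ds\to\pi_{\theta_0}^{1,N}(F)$ in $L^1$.
\item Stationary propagation of chaos, as used in the proof of Proposition~\ref{prop:inf-n-convergence-1}, gives $|\pi_{\theta_0}^{1,N}(F)-\pi_{\theta_0}(F)|\le C\varepsilon_N$.
\end{enumerate}
Hence the $\limsup_{t\to\infty}$ is at most $C\varepsilon_N$ for every $N$, i.e.\ zero.

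The same device also justifies $\mu_s^\theta\to\pi_\theta$, which Assumption~\ref{assumption:model} does not literally state; a unique stationary law of a nonlinear equation need not attract. Combined with convergence to equilibrium of the $N$-particle system, it gives $\limsup_{s}\mathsf W_1(\mu_s^\theta,\pi_\theta)\le 2\varepsilon_N$.

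Alternatively, assume a polynomially growing solution $v$ (with polynomially growing $\nabla v$) of the frozen Poisson equation $\bar{\mathcal{A}}v=F-\pi_{\theta_0}(F)$, as the paper does elsewhere. It\^o's formula along the true path then leaves boundary and martingale terms of order $O(t^{-1/2})$, plus $\frac1t\int_0^t\langle\nabla v(x_s),B(\theta_0,x_s,\mu_s^{\theta_0})-B(\theta_0,x_s,\pi_{\theta_0})\rangle\,\mathrm ds$. Your replacement estimate already controls that last term. Neither route needs contractivity.
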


\begin{proof}
See Proposition 10, \citet{sharrock2026efficient}.
\end{proof}

It is worth noting that the asymptotic log-likelihood of the IPS in \eqref{eq:asymptotic-ll-IPS} differs from the asymptotic log-likelihood of the MVSDE in \eqref{eq:asymptotic-ll-mvsde}. This difference arises because the model drift in $\mathcal{L}$ is evaluated at the true invariant law $\pi_{\theta_0}$, while the model drift in $\mathcal{J}$ is evaluated at the parameter dependent invariant law $\pi_{\theta}$. Nonetheless, under standard identifiability assumptions, both functions are non-negative and uniquely minimised at the true parameter $\theta_0$ \citep[e.g.,][]{genoncatalot2023parametric}.


\section{Methodology}
\label{sec:methodology}

Our objective is to estimate the true parameter $\theta_{0}$ in an online fashion, based on the continuous stream of observations of a single particle $\smash{(x_t^{i,N})_{t\geq 0}}$ from the IPS. To achieve this task, we will seek to recursively minimise an appropriately chosen objective.

\subsection{The Objective Function}
 We are interested in the case where the number of particles $N\gg 1$, and thus any single particle in the IPS resembles a solution of the MVSDE. In this regime, there are two natural choices for the objective function.
 \begin{itemize}
     \item[(i)] The first is the asymptotic -- both in time and in the number of particles -- {negative} log-likelihood of the IPS, as defined in Proposition~\ref{prop:ips-likelihood-n-t-limit}.    
     \item[(ii)] The second is the asymptotic -- now only in time --  negative log-likelihood of the limiting MVSDE, as defined in Proposition~\ref{prop:mvsde-likelihood-t-limit}.
 \end{itemize} 
 In the companion to this paper, we studied algorithms designed with reference to the first of these two objectives. In this paper, we instead consider algorithms designed with reference to the second, which for convenience we recall again here in the form
\begin{align}
    \mathcal{J}(\theta)&= \int_{\mathbb{R}^d} \frac{1}{2}\langle B(\theta,x,\pi_{\theta})-B(\theta_0,x,\pi_{\theta_0}), B(\theta,x,\pi_{\theta})-B(\theta_0,x,\pi_{\theta_0})\rangle_{\sigma\sigma^{\top}} \pi_{\theta_0}(\mathrm{d}x)\\
    &:=\int_{\mathbb{R}^d}J(\theta,x,\pi_{\theta},\pi_{\theta},\pi_{\theta_0})\pi_{\theta_0}(\mathrm{d}x). \label{eq:obj-func-2}
\end{align}
By expanding and simplifying the integrand, this objective can also be written in a slightly more explicit form, namely, 
\begin{align}
\mathcal{J}(\theta)
&=
\int_{(\mathbb{R}^d)^3}
\frac{1}{2}
\big\langle
b(\theta,x,y)-B(\theta_0,x,\pi_{\theta_0}),
\,b(\theta,x,z)-B(\theta_0,x,\pi_{\theta_0})
\big\rangle_{\sigma\sigma^{\top}}
\,\pi_{\theta_0}(\mathrm{d}x)\pi_{\theta}(\mathrm{d}y)\pi_{\theta}(\mathrm{d}z) \\
&:= \int_{(\mathbb{R}^d)^3} j(\theta,x,y,z,\pi_{\theta_0})\,\pi_{\theta_0}(\mathrm{d}x)\pi_{\theta}(\mathrm{d}y)\pi_{\theta}(\mathrm{d}z).
\end{align}
As we will see below, the estimators derived with respect to this objective will have rather different properties from those obtained in \citet{sharrock2026efficient}.

\subsection{The Gradient of the Objective Function}

We would like to use a (stochastic) gradient based approach to optimise the objective function. Our first task is thus to characterise its gradient. Let $\nu_{\theta}:=\partial_{\theta}\pi_{\theta}\in\mathcal M(\mathbb R^d;\mathbb R^p)$ denote the weak derivative of the invariant law with respect to the parameter. That is,
\begin{equation}
\partial_{\theta_\ell}\int_{\mathbb R^d}\varphi(x)\,\pi_\theta(\mathrm d x)
=
\int_{\mathbb R^d}\varphi(x)\,\nu_{\theta,\ell}(\mathrm d x),
\qquad \ell=1,\dots,p,
\end{equation}
for every sufficiently regular test function $\varphi\in C^1(\mathbb R^d)$ for which the derivative exists. We will require the following integrability assumption on $(\pi_{\theta})_{\theta\in\Theta}$ and $(\nu_{\theta})_{\theta\in\Theta}$.

\begin{assumption}
\label{ass:appendix-meanfield-c1}

For all $q\geq 1$, the families $(\pi_\theta)_{\theta\in\Theta}$ and $(\nu_\theta)_{\theta\in\Theta}$ have uniformly bounded moments of order $q$, where for $\nu_\theta$ this is understood in total variation.
\end{assumption}

We then have the following result.

\begin{proposition}
\label{prop:ips-likelihood-n-t-limit-grad}
    Suppose that Assumption \ref{assumption:moments} - \ref{assumption:drift} and Assumption~\ref{ass:appendix-meanfield-c1} hold. Then the gradient of $\mathcal{J}$ with respect to $\theta$ is given by
    \begin{align}
     \partial_{\theta}\mathcal{J}(\theta) &= \int_{\mathbb{R}^d} G(\theta,x,\pi_{\theta},\nu_{\theta}) (\sigma\sigma^{\top})^{-1}\left(B(\theta,x,\pi_{\theta}) - B(\theta_0,x,\pi_{\theta_0})\right)  \pi_{\theta_0}(\mathrm{d}x) \\
     &:=\int_{\mathbb{R}^d} H(\theta,x,\pi_{\theta},\nu_{\theta},\pi_{\theta},\pi_{\theta_0})\pi_{\theta_0}(\mathrm{d}x)
        \label{eq:asymptotic-obj-grad}
    \end{align}
    where $G:\Theta\times\mathbb{R}^d\times\mathcal{P}(\mathbb{R}^d)\times\mathcal{M}(\mathbb{R}^d;\mathbb{R}^p)\rightarrow\mathbb{R}^{p\times d}$ is defined according to $G(\theta,x,\mu,\eta) := \int_{\mathbb{R}^d} 
    \partial_{\theta} b(\theta,x,y)\mu(\mathrm{d}y) + \int_{\mathbb{R}^d} b(\theta,x,z)\eta(\mathrm{d}z)$.
\end{proposition}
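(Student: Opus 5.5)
We differentiate the triple-integral representation of $\mathcal{J}$ under the integral sign. Write
\begin{equation}
\mathcal{J}(\theta)=\int_{\mathbb{R}^d}\Phi(\theta,x)\,\pi_{\theta_0}(\mathrm{d}x),\qquad \Phi(\theta,x):=\tfrac12\big\|B(\theta,x,\pi_\theta)-B(\theta_0,x,\pi_{\theta_0})\big\|^2_{\sigma\sigma^{\top}}.
\end{equation}
The integrating measure $\pi_{\theta_0}$ does not depend on $\theta$, so the only work is to differentiate $\theta\mapsto B(\theta,x,\pi_\theta)$ for fixed $x$, and then to justify exchanging $\partial_\theta$ with the outer $x$-integral.

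\textbf{Step 1: pointwise derivative of $B(\theta,x,\pi_\theta)$.} Fix $x$ and $\ell\in\{1,\dots,p\}$. Split the difference quotient as
\begin{equation}
\tfrac{1}{h}\big[B(\theta+he_\ell,x,\pi_{\theta+he_\ell})-B(\theta,x,\pi_\theta)\big]
=\int \tfrac{b(\theta+he_\ell,x,y)-b(\theta,x,y)}{h}\,\pi_{\theta+he_\ell}(\mathrm{d}y)+\tfrac1h\Big[\int b(\theta,x,y)\,\pi_{\theta+he_\ell}(\mathrm{d}y)-\int b(\theta,x,y)\,\pi_\theta(\mathrm{d}y)\Big].
\end{equation}

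The second term tends to $\int b(\theta,x,z)\,\nu_{\theta,\ell}(\mathrm{d}z)$ by the definition of $\nu_\theta$. Here we take $\varphi=b(\theta,x,\cdot)$, which is $C^1$ in $y$ with polynomially growing derivative by Assumption~\ref{assumption:drift}; this regularity is what admits it as a test function.

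For the first term, write the quotient via the mean value theorem as $\int_0^1\partial_{\theta_\ell}b(\theta+she_\ell,x,y)\,\mathrm{d}s$. We then need its integral against $\pi_{\theta+he_\ell}$ to converge to $\int\partial_{\theta_\ell}b(\theta,x,y)\,\pi_\theta(\mathrm{d}y)$. This needs two ingredients:
\begin{itemize}
\item continuity of $\partial_\theta b$ in $\theta$, locally uniformly with polynomial growth in $y$;
\item weak continuity $\pi_{\theta+he_\ell}\to\pi_\theta$ together with uniformly bounded moments of all orders (Assumption~\ref{ass:appendix-meanfield-c1}), so that polynomially growing integrands are uniformly integrable.
\end{itemize}
Weak continuity itself follows from the existence of $\nu_\theta$ with bounded total-variation moments: $\pi_{\theta+h e_\ell}-\pi_\theta=\int_0^h\nu_{\theta+s e_\ell}\,\mathrm{d}s$ tested against regular $\varphi$, with the difference of order $h$.

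Summing the two limits gives $\partial_\theta B(\theta,x,\pi_\theta)=G(\theta,x,\pi_\theta,\nu_\theta)$, read row-wise as in the notation $\langle\eta,\varphi\rangle$.

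\textbf{Step 2: chain rule and interchange with the outer integral.} By the chain rule,
\begin{equation}
\partial_\theta\Phi(\theta,x)=G(\theta,x,\pi_\theta,\nu_\theta)\,(\sigma\sigma^{\top})^{-1}\big(B(\theta,x,\pi_\theta)-B(\theta_0,x,\pi_{\theta_0})\big),
\end{equation}
which is exactly the integrand $H$. To pass $\partial_\theta$ inside the $x$-integral, use dominated convergence on the difference quotients of $\Phi$ over a small ball around $\theta$. By polynomial growth of $b$ and $\partial_\theta b$ and the uniform moment bounds on $\pi_\theta$ and $|\nu_\theta|$, both $\|G(\theta',x,\pi_{\theta'},\nu_{\theta'})\|$ and $\|B(\theta',x,\pi_{\theta'})\|$ are bounded by $C(1+\|x\|^r)$, uniformly for $\theta'$ in the ball. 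The mean value theorem then bounds the difference quotient by $C(1+\|x\|^{2r})$, which is $\pi_{\theta_0}$-integrable by Assumption~\ref{assumption:model}. This yields \eqref{eq:asymptotic-obj-grad}.

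\textbf{Main obstacle.} The delicate step is the uniformity in Step 1: the polynomial-growth constants of $b$ and $\partial_\theta b$ must be locally uniform in $\theta$. Assumption~\ref{assumption:drift} is stated for each $\theta$, so we read it as holding locally uniformly in $\theta$, as is implicit. We also need the map $\theta\mapsto\pi_\theta$ to be continuous in the sense required above. The cleanest route is to interpret the weak derivative in the integrated sense $\int\varphi\,\mathrm{d}\pi_{\theta+he_\ell}-\int\varphi\,\mathrm{d}\pi_\theta=\int_0^h\int\varphi\,\mathrm{d}\nu_{\theta+se_\ell}\,\mathrm{d}s$, which together with the uniform total-variation moment bounds on $\nu$ gives Lipschitz continuity of $\theta\mapsto\int\varphi\,\mathrm{d}\pi_\theta$ for polynomially weighted $\varphi$. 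Continuity of $\theta\mapsto\int b(\theta,x,z)\,\nu_\theta(\mathrm{d}z)$ is not needed; we only need existence of the derivative at each $\theta$.
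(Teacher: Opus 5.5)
Your proposal is correct and follows essentially the same route as the paper: since $\pi_{\theta_0}$ does not depend on $\theta$, you differentiate under the outer integral and compute $\partial_\theta B(\theta,x,\pi_\theta)=G(\theta,x,\pi_\theta,\nu_\theta)$ by splitting it into a $\partial_\theta b$ term and a $\nu_\theta$ term, which is exactly the content of the paper's auxiliary Lemma~\ref{lem:appendix-diff-under-integral}. You go further than the paper only in spelling out the dominated-convergence details it leaves implicit, namely polynomial growth that is locally uniform in $\theta$ and weak continuity of $\theta\mapsto\pi_\theta$ obtained from the total-variation moment bounds on $\nu_\theta$.
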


\begin{proof}
See Appendix~\ref{appendix:proofs-section3}.
\end{proof}

We can also obtain an alternative, more explicit representation for the gradient of the objective function.

\begin{proposition}
\label{prop:ips-likelihood-n-t-limit-grad-explicit}
     Suppose that Assumption \ref{assumption:moments} - \ref{assumption:drift} and Assumption~\ref{ass:appendix-meanfield-c1} hold. Then the gradient of $\mathcal{J}$ with respect to $\theta$ is given by
   \begin{align}
    \partial_{\theta} \mathcal{J}(\theta)
    &=\int_{(\mathbb{R}^d)^3} \big[g(\theta,x,y,\nu_{\theta}) \big]  (\sigma\sigma^{\top})^{-1} \big[b(\theta,x,z) - B(\theta_0,x,\pi_{\theta_0})\big] \pi_{\theta_0}(\mathrm{d}x) \pi_{\theta}(\mathrm{d}y)\pi_{\theta}(\mathrm{d}z)  \\
    &:=\int_{(\mathbb{R}^d)^3} h(\theta,x,y,\nu_{\theta},z,\pi_{\theta_0})\pi_{\theta_0}(\mathrm{d}x) \pi_{\theta}(\mathrm{d}y)\pi_{\theta}(\mathrm{d}z)
\end{align}
    where $g:\Theta\times\mathbb{R}^d\times \mathbb{R}^d\times\mathcal{M}(\mathbb{R}^d;\mathbb{R}^p)\rightarrow\mathbb{R}^{p\times d}$ is defined according to $g(\theta,x,y,\eta) := \partial_{\theta}b(\theta,x,y) + \int b(\theta,x,z)\eta(\mathrm{d}z)$.
\end{proposition}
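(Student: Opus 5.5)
The plan is to derive Proposition~\ref{prop:ips-likelihood-n-t-limit-grad-explicit} directly from Proposition~\ref{prop:ips-likelihood-n-t-limit-grad} by unpacking the definitions of $G$ and $B$ and applying Fubini's theorem; no new differentiation is needed. By Proposition~\ref{prop:ips-likelihood-n-t-limit-grad},
\begin{equation}
\partial_\theta\mathcal J(\theta)=\int_{\mathbb R^d} G(\theta,x,\pi_\theta,\nu_\theta)(\sigma\sigma^\top)^{-1}\bigl(B(\theta,x,\pi_\theta)-B(\theta_0,x,\pi_{\theta_0})\bigr)\pi_{\theta_0}(\mathrm dx).
\end{equation}
First I will write $G(\theta,x,\pi_\theta,\nu_\theta)=\int_{\mathbb R^d}\bigl[\partial_\theta b(\theta,x,y)+\int b(\theta,x,z')\nu_\theta(\mathrm dz')\bigr]\pi_\theta(\mathrm dy)=\int g(\theta,x,y,\nu_\theta)\,\pi_\theta(\mathrm dy)$. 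This uses that $\pi_\theta$ is a probability measure, so the $y$-independent term $\int b(\theta,x,z')\nu_\theta(\mathrm dz')$ can be integrated against $\pi_\theta(\mathrm dy)$ without change. Second, since $\pi_\theta(\mathbb R^d)=1$, I will write $B(\theta,x,\pi_\theta)-B(\theta_0,x,\pi_{\theta_0})=\int[b(\theta,x,z)-B(\theta_0,x,\pi_{\theta_0})]\pi_\theta(\mathrm dz)$.

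Substituting both expressions gives a product of a $y$-integral and a $z$-integral, which equals the integral of the product over $\pi_\theta\otimes\pi_\theta$. Integrating against $\pi_{\theta_0}(\mathrm dx)$ then yields the triple integral in the statement. The integrand is exactly $h(\theta,x,y,\nu_\theta,z,\pi_{\theta_0})$, and the order of integration can be rearranged by Fubini.

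The one real step is justifying Fubini, i.e.\ showing that $\int \|g(\theta,x,y,\nu_\theta)\|\,\|(\sigma\sigma^\top)^{-1}\|\,\|b(\theta,x,z)-B(\theta_0,x,\pi_{\theta_0})\|\,\mathrm d(\pi_{\theta_0}\otimes\pi_\theta\otimes\pi_\theta)<\infty$, and that the inner $\nu_\theta$-integral is finite.
\begin{itemize}
\item By Assumption~\ref{assumption:drift}, $b$ and $\partial_\theta b$ have polynomial growth, so $\|b(\theta,x,z)\|+\|\partial_\theta b(\theta,x,y)\|\le C(1+\|x\|^r+\|y\|^r+\|z\|^r)$.
\item Hence $\|\int b(\theta,x,z')\nu_\theta(\mathrm dz')\|\le C\int(1+\|x\|^r+\|z'\|^r)|\nu_\theta|(\mathrm dz')\le C'(1+\|x\|^r)$ by Assumption~\ref{ass:appendix-meanfield-c1}.
\item Similarly, $\|B(\theta_0,x,\pi_{\theta_0})\|\le C(1+\|x\|^r)$ by the moment bounds for $\pi_{\theta_0}$ from Assumption~\ref{assumption:model}.
\end{itemize}
The product is then bounded by a polynomial in $(\|x\|,\|y\|,\|z\|)$. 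Because the measure is a product measure, this polynomial is integrable provided $\pi_{\theta_0}$ and $\pi_\theta$ have all moments, which Assumptions~\ref{assumption:model} and \ref{ass:appendix-meanfield-c1} give. This integrability check is routine bookkeeping and is where the hypotheses are used; I do not expect any genuine difficulty beyond it.
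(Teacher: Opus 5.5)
Your proposal is correct and matches the paper's proof: both start from Proposition~\ref{prop:ips-likelihood-n-t-limit-grad}, rewrite $G(\theta,x,\pi_\theta,\nu_\theta)=\int g(\theta,x,y,\nu_\theta)\,\pi_\theta(\mathrm dy)$ and $B(\theta,x,\pi_\theta)-B(\theta_0,x,\pi_{\theta_0})=\int [b(\theta,x,z)-B(\theta_0,x,\pi_{\theta_0})]\,\pi_\theta(\mathrm dz)$ using that $\pi_\theta$ is a probability measure, and conclude by Fubini. The only difference is that you spell out the polynomial-growth and moment bounds that justify Fubini, which the paper leaves implicit.
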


\begin{proof}
See Appendix~\ref{appendix:proofs-section3}.
\end{proof}

\subsection{The Stochastic Gradient of the Objective Function}
\label{sec:sgdct}
The formulae above are exact but not implementable, since both the stationary law $\pi_{\theta}$ and its derivative $\nu_{\theta}=\partial_{\theta}\pi_{\theta}$ are unknown. In order to proceed, we thus seek stochastic estimates.

\subsubsection{The Averaged Virtual Particle Estimate}
\label{sec:stochastic-estimate-averaged}
Our first estimate is derived based on our first expression for the gradient of the objective (cf. Proposition~\ref{prop:ips-likelihood-n-t-limit-grad}). Let $(x_s)_{s\geq 0}$ denote a solution of the MVSDE, with $\mu_s=\mathrm{Law}(x_s)$. Let $(\mu_s^{\theta})_{s\geq 0}$ denote the law of a solution $(x_s^{\theta})_{s\geq 0}$ of the MVSDE evaluated at $\theta\in\Theta$, and $(\eta_s^{\theta})_{s\geq 0}:=(\partial_{\theta}\mu_s^{\theta})_{s\geq 0}$ denote the derivative of this law with respect to the parameter.  

We begin with the observation that, assuming ergodicity and convergence of $(\mu_s^{\theta})_{\theta\in\Theta}\rightarrow(\pi_{\theta})_{\theta\in\Theta}$ and $(\eta_s^{\theta})_{\theta\in\Theta}\rightarrow(\nu_{\theta})_{\theta\in\Theta}$ as $s\rightarrow\infty$, we have
\begin{align}
\partial_{\theta} \mathcal{J}(\theta)& \approx 
\displaystyle \lim_{t\rightarrow\infty}\frac{1}{t} \left[ \int_{0}^t G(\theta,x_s,\mu_s^{\theta},\eta_s^{\theta}) 
(\sigma\sigma^{\top})^{-1} \left(B(\theta,x_s,\mu_s^{\theta}) - B(\theta_0,x_s,\mu_s)\right)\mathrm{d}s\right].
\label{IPS_estimator_4-a} 
\end{align}
Substituting the true dynamics for $(x_s)_{s\geq 0}$, and noting that the additional martingale term converges to zero under our conditions, it follows that
\begin{align}
\partial_{\theta}\mathcal{J}(\theta)&\approx \lim_{t\rightarrow\infty} \frac{1}{t}\left[ \int_0^t G(\theta,x_s,\mu_s^{\theta},\eta_s^{\theta}) (\sigma\sigma^{\top})^{-1} \big[B(\theta,x_s,\mu_s^{\theta})\mathrm{d}s - \mathrm{d}x_s \big]  \right].
\end{align}
Finally, assuming uniform-in-time propagation of chaos for the IPS and the tangent IPS, uniformly in $\theta\in\Theta$, 
we arrive at
\begin{align}
 \nabla_{\theta}\mathcal{J}(\theta)&\approx \lim_{t\rightarrow\infty} \lim_{N,M\rightarrow\infty} \frac{1}{t} \left[ \int_0^t G(\theta,x_s^{i,N},{\mu}_s^{\theta,M},{\eta}_s^{\theta,M}) (\sigma\sigma^{\top})^{-1} \Big( B(\theta,x_s^{i,N},\bar{\mu}_s^{\theta,M})\mathrm{d}s - \mathrm{d}x_s^{i,N}\Big)  \right]. 
\end{align}
where \(\vphantom{({x}_s^{\theta,j,M})_{s\geq 0}^{j\in[M]}}{\mu}_s^{\theta,M} = \frac{1}{M}\sum_{j=1}^M \delta_{{x}_s^{\theta,j,M}}\) and \(\bar{\mu}_s^{\theta,M} = \frac{1}{M}\sum_{k=1}^M \delta_{\bar{x}_s^{\theta,k,M}}\) denote the empirical laws of two independent solutions $\vphantom{{\mu}_s^{\theta,M} = \frac{1}{M}\sum_{j=1}^M \delta_{{x}_s^{\theta,j,M}}}({x}_s^{\theta,j,M})_{s\geq 0}^{j\in[M]}$ and $\vphantom{{\mu}_s^{\theta,M} = \frac{1}{M}\sum_{j=1}^M \delta_{{x}_s^{\theta,j,M}}}(\bar{x}_s^{\theta,k,M})_{s\geq 0}^{k\in[M]}$ of the IPS; and where $\eta_s^{\theta,M}=\frac{1}{M}\sum_{j=1}^M y_s^{\theta,j,M}\delta'_{x_s^{\theta,j,M}}$ is the formal derivative of the empirical measure \(\vphantom{({x}_s^{\theta,j,M})_{s\geq 0}^{j\in[M]}}{\mu}_s^{\theta,M} = \frac{1}{M}\sum_{j=1}^M \delta_{{x}_s^{\theta,j,M}}\) with respect to the parameter, with $(y_s^{\theta,j,M})_{s\ge0}^{j\in[M]}:=(\partial_{\theta} x_s^{\theta,j,M})_{s\ge0}^{j\in[M]}$ denoting the tangent IPS associated with $({x}_s^{\theta,j,M})_{s\geq 0}^{j\in[M]}$. This expression suggests that, for $N,M\gg 1$, a natural stochastic estimate for $\nabla_{\theta}\mathcal{J}(\theta_t)\mathrm{d}t \vphantom{{\mu}_s^{\theta,M} = \frac{1}{M}\sum_{j=1}^M \delta_{{x}_s^{\theta,j,M}}}$ is given by $\vphantom{{\mu}_s^{\theta,M} = \frac{1}{M}\sum_{j=1}^M \delta_{{x}_s^{\theta,j,M}}}$
\begin{align}
\partial_{\theta}\mathcal{J}(\theta_t)\mathrm{d}t &\approx G(\theta_t,x_t^{i,N},\hat{\mu}_t^M, \hat{\eta}_t^M)(\sigma\sigma^{\top})^{-1}\Big( B(\theta_t,x_t^{i,N},\tilde{\mu}_{t}^{M})\mathrm{d}t - \mathrm{d}x_t^{i,N}\Big), 
\label{eq:average-stochastic-estimate}
\end{align}
where $\hat{\mu}_t^M:= \frac{1}{M}\sum_{j=1}^M \delta_{\hat{{x}}_t^{j,M}}$ and $\tilde{\mu}_t^M:= \frac{1}{M}\sum_{k=1}^M \delta_{\tilde{x}_t^{k,M}}$ are the empirical measures of two independent solutions $(\hat{{x}}_t^{j,M})_{t\geq 0}^{j\in[M]}$ and $(\tilde{x}_t^{k,M})_{t\geq 0}^{k\in[M]}$
of the IPS, both integrated with the online parameter estimate $(\theta_t)_{t\geq 0} \vphantom{{\mu}_s^{\theta,M} = \frac{1}{M}\sum_{j=1}^M \delta_{{x}_s^{\theta,j,M}}}$, viz 
\begin{align}
\mathrm{d}\hat{{x}}_t^{j,M}&=  \bigg[\frac{1}{M}\sum_{\ell=1}^Mb(\theta_t,\hat{{x}}_t^{j,M}, \hat{{x}}_t^{\ell,M}) \bigg]\mathrm{d}t + \sigma\mathrm{d}\hat{w}_t^{j,M}, \quad t\geq 0,  \quad j\in[M], \label{eq:IPS-hat} \\
\mathrm{d}\tilde{x}_t^{k,M}&=  \bigg[\frac{1}{M}\sum_{\ell=1}^Mb(\theta_t,\tilde{x}_t^{k,M}, \tilde{x}_t^{\ell,M}) \bigg]\mathrm{d}t + \sigma\mathrm{d}\tilde{w}_t^{k,M}, \quad t\geq 0, \quad k\in[M], \label{eq:IPS-tilde} 
\end{align}
and where $\hat{\eta}_t^M=\frac{1}{M} \sum_{j=1}^M \hat{y}_t^{j,M} \delta'_{\hat{{x}}_t^{j,M}}$ is the formal derivative of $\hat{\mu}_t^M$ with respect to the parameter, also integrated with the online parameter estimate, $(\hat{y}_t^{j,M})^{j\in[M]}_{t\geq 0}$ denoting the tangent IPS associated with $(\hat{{x}}_{t}^{j,M})_{t\geq 0}^{j\in[M]}\vphantom{\hat{\eta}_t^M(\mathrm{d}x)=\frac{1}{M} \sum_{i=1}^M \hat{y}_t^{i,M} \delta'_{\hat{\boldsymbol{x}}_t^{i,M}}(\mathrm{d}x)}$, namely, 
\begin{align}
\mathrm{d}\hat{y}_{t}^{i,M} 
&= \frac{1}{M}\sum_{j=1}^M \Big[
\partial_{\theta} b(\theta_t, \hat{x}_t^{i,M}, \hat{x}_t^{j,M})
+ \hat{y}_{t}^{i,M}\nabla_x b(\theta_t, \hat{x}_t^{i,M}, \hat{x}_t^{j,M})
+ \hat{y}_t^{j,M} \nabla_y b(\theta_t,\hat{x}_{t}^{i,M}, \hat{x}_t^{j,M})
\Big]\mathrm{d}t.
\label{eq:IPS-hat-tangent}
\end{align}
We will refer to $(\hat{x}_t^{j,M})^{j\in[M]}_{t\geq 0}$ and $(\tilde{x}_t^{k,M})^{k\in[M]}_{t\geq 0}$ as \emph{virtual particles}, to distinguish them from the \emph{real particles} $(x_t^{i,N})^{i\in[N]}_{t\geq 0}$ defining the data-generating process.

\subsubsection{The Particlewise Virtual Particle Estimate}
\label{sec:stochastic-estimate-particlewise}
We can also obtain a different stochastic estimate, based on the alternative representation for the gradient of the objective (cf. Proposition~\ref{prop:ips-likelihood-n-t-limit-grad-explicit}). Let $(x_s)_{s\geq 0}$ denote a solution of the MVSDE; $(x_s^{\theta})_{s\geq 0}$ and $(\bar{x}_s^{\theta})_{s\geq 0}$ denote two independent solutions of the MVSDE, both evaluated at $\theta\in\Theta$; and $(y_s^{\theta})_{s\geq 0}:=(\partial_{\theta}x_s^{\theta})_{s\geq 0}$ denote the tangent solution of the MVSDE associated with $(x_s^{\theta})_{s\geq 0}$. 

We begin by noting that, under the assumption that $(x_s)_{s\geq 0}$ is ergodic, $(x_s^{\theta})_{s\geq 0}$ and $(y_s^{\theta})_{s\geq 0}$ are jointly ergodic, and $(\bar{x}_s^{\theta})_{s\geq 0}$ is ergodic, we have that
\begin{align}
    \partial_{\theta} \mathcal{J}(\theta)
&\approx\lim_{t\rightarrow\infty}\frac{1}{t}\bigg[\int_0^t \Big(b_{\theta}(\theta,x_s,x_s^{\theta}) + y_s^{\theta} b_{x_s^{\theta}}(\theta,x_s,x_s^{\theta}) \Big)(\sigma\sigma^{\top})^{-1} \Big(b(\theta,x_s,\bar{x}_s^{\theta}) - B(\theta_0,x_s,\mu_s)\Big)\mathrm{d}s\bigg], \nonumber
\end{align}
Substituting the true dynamics for $(x_s)_{s\geq 0}$, and using the fact that the additional martingale term converges to zero, we have that
\begin{align}
\partial_{\theta} \mathcal{J}(\theta)
&\approx \lim_{t\rightarrow\infty}\frac{1}{t}\bigg[\int_0^t \Big(b_{\theta}(\theta,x_s,x_s^{\theta}) + y_s^{\theta}  b_{x_s^{\theta}}(\theta,x_s,x_s^{\theta}) \Big)(\sigma\sigma^{\top})^{-1} \Big(b(\theta,x_s,\bar{x}_s^{\theta}) \mathrm{d}s - \mathrm{d}x_s\Big)\bigg]. 
\end{align}
Finally, under the assumption of uniform-in-time propagation of chaos, it follows from the previous display that
\begin{align}
\partial_{\theta} \mathcal{J}(\theta)
&\approx \lim_{t\rightarrow\infty}\lim_{M,N\rightarrow\infty}\frac{1}{t}\bigg[\int_0^t \Big(g(\theta,x_s^{i,N}, x_s^{\theta,j,M}, \eta_s^{\theta,j,M}) \Big)(\sigma\sigma^{\top})^{-1} \Big(b(\theta,x_s^{i,N},\bar{x}_s^{\theta,k,M})\mathrm{d}s -\mathrm{d}x_s^{i,N}\Big) \bigg] 
\end{align}
 where, similar to before, $\vphantom{\eta_s^{\theta,j,M}:= y_s^{\theta,j,M}\delta'_{x_s^{\theta,j,M}}}({x}_s^{\theta,j,M})_{s\geq 0}^{j\in[M]}$ and $(\bar{x}_s^{\theta,k,M})_{s\geq 0}^{k\in[M]}$ denote two independent solutions  of the IPS, and where $\vphantom{({x}_s^{\theta,j,M})_{s\geq 0}^{j\in[M]}}\eta_s^{\theta,j,M}:= y_s^{\theta,j,M}\delta'_{x_s^{\theta,j,M}}$ denotes the formal derivative of the single particle empirical measure $\mu_s^{\theta,j,M}:=\delta_{x_s^{\theta,j,M}}$ with respect to the parameter, with $\vphantom{\eta_s^{\theta,j,M}:= y_s^{\theta,j,M}\delta'_{x_s^{\theta,j,M}}}(y_s^{\theta,j,M})^{j\in[M]}_{s\geq 0}:=(\partial_{\theta}x_s^{\theta,j,M})^{j\in[M]}_{s\geq 0}$ denoting the solution of the tangent IPS associated with $({x}_s^{\theta,j,M})_{s\geq 0}^{j\in[M]}$. This expression suggests that, for $N,M\gg 1$, a natural stochastic estimate for $\nabla_{\theta}\mathcal{J}(\theta_t)\mathrm{d}t\vphantom{(y_s^{\theta,j,M})^{j\in[M]}_{s\geq 0}}$ is given by $\vphantom{({x}_s^{\theta,j,M})_{s\geq 0}^{j\in[M]}}$
\begin{align}
\partial_{\theta}\mathcal{J}(\theta_t)\,\mathrm{d}t
&\approx \Big(g(\theta_t,x_t^{i,N}, \hat{x}_t^{j,M},\hat{\eta}_t^{j,M})\Big)(\sigma\sigma^{\top})^{-1} \Big(b(\theta_t,x_t^{i,N},\tilde{x}_t^{k,M})\,\mathrm{d}t -\mathrm{d}x_t^{i,N}\Big).
\label{eq:particlewise-stochastic-estimate}
\end{align}
where $(\hat{{x}}_t^{j,M})_{t\geq 0}^{j\in[M]}$ and $(\tilde{x}_t^{k,M})_{t\geq 0}^{k\in[M]}$ are independent solutions of the IPS, both integrated with the online parameter estimate, and $\hat{\eta}_t^{j,M}:=\hat{y}_t^{j,M}\delta'_{\hat{x}_t^{j,M}}$ is the formal derivative of $\hat{\mu}_t^{j,M} :=\delta_{\hat{x}_t^{j,M}}$ with respect to the parameter, also integrated with the online parameter estimate, $(\hat{y}_{t}^{j,M})_{t\geq 0}^{j\in[M]}$ once more denoting the solution of the tangent IPS associated with $(\hat{{x}}_{t}^{j,M})_{t\geq 0}^{j\in[M]}\vphantom{\hat{\eta}_t^M(\mathrm{d}x)=\frac{1}{M} \sum_{i=1}^M \hat{y}_t^{j,M} \delta'_{\hat{\boldsymbol{x}}_t^{j,M}}(\mathrm{d}x)}$.

\subsection{Stochastic Gradient Descent in Continuous Time}

In order to optimise $\mathcal{J}$, a natural approach is to consider a gradient descent algorithm, in our case in continuous time. In particular, we would like to simulate
\begin{align}
\mathrm{d}{\theta}_t&= -\gamma_t \partial_{\theta}\mathcal{J}(\theta_t)\mathrm{d}t, \label{eq:continuous-time-grad-descent}
\end{align}
where $\gamma_t:\mathbb{R}_{+}\rightarrow \mathbb{R}_{+}$ is a deterministic, positive, non-increasing function known as the \emph{learning rate}. In practice, we will replace exact gradients with the stochastic estimates derived in Sections~\ref{sec:stochastic-estimate-averaged} and~\ref{sec:stochastic-estimate-particlewise}. Following the taxonomy introduced in \citet{sirignano2017stochastic}, we will refer to these algorithms as \emph{stochastic gradient descent in continuous time} (SGDCT).

\subsubsection{The Averaged Virtual Particle Estimator}
\label{sec:algorithm-1}
Substituting the stochastic estimate in \eqref{eq:average-stochastic-estimate} into \eqref{eq:continuous-time-grad-descent}, our first SGDCT algorithm is given by
\begin{align}
\mathrm{d}\theta_t &= -\gamma_t \Big(G(\theta_t,x_t^{i,N},\hat{\mu}_t^M,\hat{\eta}_t^{M})\Big)(\sigma\sigma^{\top})^{-1}\Big( B(\theta_t,x_t^{i,N},\tilde{\mu}_t^M)\mathrm{d}t - \mathrm{d}x_t^{i,N}\Big). 
\label{eq:algorithm-1} 
\end{align}
\normalsize

\subsubsection{The Particlewise Virtual Particle Estimator}
\label{sec:algorithm-2}
Substituting the stochastic estimate in \eqref{eq:particlewise-stochastic-estimate} into \eqref{eq:continuous-time-grad-descent}, our second SGDCT algorithm is given by
\begin{align}
\mathrm{d}\vartheta_t &= -\gamma_t \Big(g(\vartheta_t,x_t^{i,N},\hat{x}_t^{j,M},\hat{\eta}_t^{j,M})\Big)(\sigma\sigma^{\top})^{-1} \Big(b(\vartheta_t,x_t^{i,N},\tilde{x}_t^{k,M})\mathrm{d}t -\mathrm{d}x_t^{i,N}\Big).
\label{eq:algorithm-2} 
\end{align}
\normalsize

\subsubsection{Discussion}
\label{sec:method-discussion}
It is instructive to rewrite these update equations in a slightly different form. In particular, after substituting the particle dynamics $\mathrm{d}x_t^{i,N} = B(\theta_0,x_t^{i,N},\mu_t^N)\mathrm{d}t + \sigma\mathrm{d}w_t^{i,N}$, and reorganising, we have that 
 \begin{align}
\mathrm{d}\theta_t
&= -\underbrace{\gamma_t\nabla_{\theta}\mathcal{J}(\theta_t)\mathrm{d}t}_{\text{true descent term}} 
- \underbrace{\gamma_t \big(H(\theta_t, x_{t}^{i,N}, \hat{\mu}_t^M, \hat{\eta}_t^M, \tilde{\mu}_t^M, \mu_t^N)-\nabla_{\theta}\mathcal{J}(\theta_t)\big)\mathrm{d}t}_{\text{fluctuation term}} \\
&\quad+\underbrace{\gamma_tG(\theta_t,x_t^{i,N},\hat{\mu}_t^M,\hat{\eta}_t^M)\sigma^{-\top} \mathrm{d}w_t^{i,N}}_{\text{noise term}}  \label{eq:algorithm-1-rewrite} \\[1mm]
\mathrm{d}\vartheta_t
&= -\underbrace{\gamma_t\nabla_{\theta}\mathcal{J}(\vartheta_t)\mathrm{d}t}_{\text{true descent term}} 
- \underbrace{\gamma_t \big(h(\vartheta_t, x_{t}^{i,N}, \hat{x}_t^{j,M}, \hat{\eta}_t^{j,M}, \tilde{x}_t^{k,M}, \mu_t^N)-\nabla_{\theta}\mathcal{J}(\vartheta_t)\big)\mathrm{d}t}_{\text{fluctuation term}} \\
&\quad+\underbrace{\gamma_tg(\vartheta_t,x_t^{i,N},\hat{x}_t^{j,M},\hat{\eta}_t^{j,M})\sigma^{-\top} \mathrm{d}w_t^{i,N}}_{\text{noise term}}. \label{eq:algorithm-2-rewrite}
\end{align}
From these expressions, it is clear that the algorithms in \eqref{eq:algorithm-1} - \eqref{eq:algorithm-2} are, indeed, continuous-time stochastic gradient descent algorithms with respect to $\mathcal{J}$. These two estimators, while seemingly only slightly different from those introduced in \citet{sharrock2026efficient}, only require observation of a single particle trajectory. In particular, unlike the analogous estimator in \citet{sharrock2026efficient}, the \emph{averaged} virtual particle estimator $(\theta_t)_{t\geq 0}$ defined in Section~\ref{sec:algorithm-1} does not depend on the empirical law of the observed particle system. Instead, it relies on the empirical law of two independent virtual particle systems, both integrated with the online parameter estimate. In a similar way, the particlewise virtual particle estimator $(\vartheta_t)_{t\geq 0}$ defined in Section~\ref{sec:algorithm-2} does not depend on auxiliary particles from the observed particle system, but instead on auxiliary particles from two virtual particle systems. 

Comparing the two estimators, it is clear that one can view the estimator $(\theta_t)_{t\geq 0}$ in Section~\ref{sec:algorithm-1} as an \emph{averaged} version of the estimator $(\vartheta_t)_{t\geq 0}$ in Section~\ref{sec:algorithm-2}. Indeed, wherever a single virtual particle appears in \eqref{eq:algorithm-2}, an average over all of the particles appears in \eqref{eq:algorithm-1}. More precisely, the averaged update is the conditional expectation of a uniformly resampled particlewise update, i.e., a Rao--Blackwellisation of the particlewise estimator. Thus, if we simulate $M\gg 1$ virtual particles, the estimator defined in Section~\ref{sec:algorithm-2} is (slightly) less costly to implement than the one defined in Section~\ref{sec:algorithm-1}. On the other hand, it also uses less of the available information. In cases where the interaction is defined via low-dimensional empirical moments, and thus the propagation cost is $\mathcal{O}(M)$, the averaged update may therefore be preferable. Conversely, for generic pairwise interactions, where the cost is $\mathcal{O}(M^2)$, the particlewise update may instead be preferable. 

We will later establish convergence results for these estimators, as both the time horizon $t\rightarrow\infty$, the number of particles in the true data-generating process $N\rightarrow\infty$, and the number of virtual particles $M\rightarrow\infty$. This suggests that, in principle, it will be necessary to propagate a large number of virtual particles, potentially at a very large computational cost. In practice, however, we find that the performance of these estimators is very robust to the number of virtual particles. Indeed, in many cases, it is sufficient to use as few as two. Regardless, even if a large number of virtual particles are required, these estimators still offer an advantage in cases where the cost of measuring additional particle trajectories dominates the simulation cost.

In this paper, we are particularly interested in the regime where only a single particle trajectory is observed. In principle, however, the same virtual-particle construction could be immediately extended to other observation regimes. In this case, one would simply average the full update increment over the available observed trajectories. The theoretical analysis in that case is unchanged, except for a reduction in the variance of the resulting estimator; see also Corollary~33 in \citet{sharrock2026efficient}.


\section{Theoretical Results}
\label{sec:theory}

In this section, we present our main results regarding the convergence of the estimators introduced in Sections~\ref{sec:algorithm-1} and \ref{sec:algorithm-2}.

\subsection{Notation}
We will first require some additional notation. Fix $N,M\in\mathbb{N}$. Let $K=dN + dM + dpM + dM $. Let $\boldsymbol{x}^N = (x^{1,N}, \dots, x^{N,N})^{\top}$, $\hat{\boldsymbol{x}}^M = (\hat{x}^{1,M}, \dots, \hat{x}^{M,M})^{\top}$, $\hat{\boldsymbol{y}}^M = (\hat{y}^{1,M}, \dots, \hat{y}^{M,M})^{\top}$, and $\tilde{\boldsymbol{x}}^M = (\tilde{x}^{1,M},\dots,\tilde{x}^{M,M})^{\top}$. We can then define the concatenated process 
    \begin{align}
    \boldsymbol{z}^{N,M} &= \mathcal{C}(\boldsymbol{x}^N, \hat{\boldsymbol{x}}^M, \hat{\boldsymbol{y}}^M, \tilde{\boldsymbol{x}}^M) \in \mathbb{R}^{K},
    \end{align} 
where $\mathcal{C}:(\mathbb{R}^{d})^N \times (\mathbb{R}^{d})^M \times (\mathbb{R}^{p\times d})^M  \times (\mathbb{R}^d)^M \rightarrow\mathbb{R}^{K}$ is the concatenation operator. We also write $\mu^N:=\frac1N\sum_{a=1}^N\delta_{x^{a,N}}$, $\hat\mu^M:=\frac1M\sum_{a=1}^M\delta_{\hat x^{a,M}}$, and $\tilde\mu^M:=\frac1M\sum_{a=1}^M\delta_{\tilde x^{a,M}}$. For $i\in[N]$ and $j,k\in[M]$, we can then define
\begin{align*}
B^{i,N,M}(\theta,\boldsymbol z^{N,M})
&:=
\textstyle \frac1M\sum_{a=1}^M b(\theta,x^{i,N},\tilde{x}^{a,M})
=
B(\theta,x^{i,N},\tilde{\mu}^{M}),\\
G^{i,N,M}(\theta,\boldsymbol  z^{N,M})
&:=
\textstyle \frac1M\sum_{a=1}^M
\big[
\partial_\theta b(\theta,x^{i,N},\hat{x}^{a,M})
+
\hat{y}^{a,M}\nabla_y b(\theta,x^{i,N},\hat{x}^{a,M})
\big],\\
H^{i,N,M}(\theta,\boldsymbol z^{N,M})
&:=
G^{i,N,M}(\theta,\boldsymbol  z^{N,M})(\sigma\sigma^{\top})^{-1}
\big[
B^{i,N,M}(\theta,\boldsymbol  z^{N,M})-B(\theta_0,x^{i,N},\mu^N)
\big],
\intertext{and, similarly,}
b^{i,k,N,M}(\theta, \boldsymbol z^{N,M})
&:= b(\theta,x^{i,N},\tilde{x}^{k,M}),\\
g^{i,j,N,M}(\theta,\boldsymbol z^{N,M})
&:=
\partial_\theta b(\theta,x^{i,N},\hat{x}^{j,M})
+
\hat{y}^{j,M}\nabla_y b(\theta,x^{i,N},\hat{x}^{j,M}),\\
h^{i,j,k,N,M}(\theta, \boldsymbol z^{N,M})
&:=
g^{i,j,N,M}(\theta,\boldsymbol  z^{N,M})(\sigma\sigma^{\top})^{-1}
\big[
b^{i,k,N,M}(\theta,\boldsymbol  z^{N,M})-B(\theta_0,x^{i,N},\mu^N)
\big].
\end{align*}
For fixed $\theta\in\Theta$, let $\boldsymbol{z}_t^{\theta,N,M}=\mathcal{C}(\boldsymbol{x}_t^{\theta_0,N},\hat{\boldsymbol{x}}_t^{\theta,M},\hat{\boldsymbol{y}}_t^{\theta,M},\tilde{\boldsymbol{x}}_t^{\theta,M})$ denote the concatenated process obtained by integrating the virtual systems at parameter value $\theta$. 
Then this process satisfies an SDE of the form
    \begin{equation}
    \mathrm{d}\boldsymbol{z}_t^{\theta,N,M}
    =
    \Phi^{N,M}(\theta,\boldsymbol{z}_t^{\theta,N,M})\,\mathrm{d}t
    +
    \Sigma_{N,M}\,\mathrm{d}\boldsymbol{b}_t^{N,M},
    \end{equation}
for a suitable drift $\Phi^{N,M}$, diffusion matrix $\Sigma_{N,M}$, and Brownian motion $\boldsymbol{b}_t^{N,M}$. Suppose we write $\Pi_\theta^{N,M}$ for the invariant law of this process. By independence of the observed and virtual blocks, this invariant law factorises as  
\begin{equation}
\Pi_\theta^{N,M}=\pi_{\theta_0}^N\otimes\Lambda_\theta^M\otimes\pi_\theta^M,
\end{equation}
where $\Lambda_\theta^M$ denotes the invariant law of the pair $(\hat{\boldsymbol{x}}_t^{\theta,M},\hat{\boldsymbol{y}}_t^{\theta,M})_{t\geq 0}$, and $\pi_\theta^M$ denotes the invariant law of $(\tilde{\boldsymbol{x}}^{\theta,M})_{t\geq 0}$

We will also require some additional terminology. First, we say that a function $F:\Theta\times\mathbb{R}^K\rightarrow\mathbb{R}$ has the averaged polynomial growth property if there exist $q\ge1$ and $C<\infty$ such that, for all $\theta\in\Theta$, and for all $i\in[N]$,
\begin{align}
|F(\theta,\boldsymbol{z}^{N,M})|
\le
C\Big(
1+\|x^{i,N}\|^{q}
+\frac1M\sum_{a=1}^M(\|\hat{x}^{a,M}\|^q+\|\tilde{x}^{a,M}\|^q+\|\hat{y}^{a,M}\|^q)
\Big)
\tag{PGP-a}
\label{eq:IPS-PGP-a}
\end{align}
Meanwhile, we say that $F$ has the particlewise polynomial-growth property if there exist $q\ge1$ and $C<\infty$ such that, for all $\theta\in\Theta$, and for all $i\in[N]$ and $j,k\in[M]$, 
\begin{align}
|F(\theta,\boldsymbol z^{N,M})|
\le
C\Big(
1+\|x^{i,N}\|^{q}
+\|\hat{x}^{j,M}\|^q+\|\tilde{x}^{k,M}\|^q+\|\hat{y}^{j,M}\|^q
\Big)
\tag{PGP-b}
\label{eq:IPS-PGP-b}
\end{align}
Fix some $\alpha\in(0,1]$. We write $\mathbb{G}^{K,N,M}$ for the space of all functions ${F}:\Theta\times\mathbb{R}^K\rightarrow\mathbb{R}$ such that, for each $\theta\in\Theta$, the map $\boldsymbol{z}\mapsto F(\theta,\boldsymbol{z})$ is in ${C}(\mathbb{R}^K)$; for each fixed $\boldsymbol{z}\in\mathbb{R}^K$, the map $\theta\mapsto F(\theta,\boldsymbol{z})$ is in $C^2(\Theta)$; and, again for each fixed $\boldsymbol{z}\in\mathbb{R}^K$, the maps $\theta \mapsto \partial_{\theta} F(\theta,\boldsymbol{z})$ and $\theta\mapsto \partial_{\theta}^2 F(\theta,\boldsymbol{z})$ are H\"older continuous with exponent $\alpha$.  We then define
\begin{equation}
\mathbb G_c^{K,N,M}
:=
\Bigl\{F\in\mathbb G^{K,N,M}: \int_{\mathbb R^K}F(\theta,z)\,\Pi_\theta^{N,M}(\mathrm dz)=0\ \text{for all }\theta\in\Theta\Bigr\}.
\end{equation}
Finally, we use the notation $\bar{\mathbb{G}}^{a,K,N,M}$ and $\bar{\mathbb{G}}^{b,K,N,M}$ to denote the subsets of $\mathbb{G}^{K,N,M}$ consisting of all $F\in\mathbb{G}^{K,N,M}$ such that $F$ and all of its first and second derivatives with respect to $\theta$ satisfy \eqref{eq:IPS-PGP-a} and \eqref{eq:IPS-PGP-b}, respectively. 

\subsection{Assumptions}
We are now ready to define our standing assumptions. We begin with a standard Robbins-Monro type assumption on the learning rate \citep[e.g.,][]{sirignano2017stochastic}.

\begin{assumption}
\label{assumption:learning-rate}
    The learning rate $\gamma_t:\mathbb{R}_{+}\rightarrow\mathbb{R}_{+}$ is a positive, non-increasing function which satisfies $\int_0^{\infty}\gamma_t\mathrm{d}t = \infty$, $\int_0^{\infty}\gamma_t^2\mathrm{d}t<\infty$, $\int_0^{\infty}|\dot{\gamma}_t|\mathrm{d}t<\infty$, and $\lim_{t\rightarrow\infty}\gamma_t t^{\rho}=0$ for some $\rho>0$.
\end{assumption}

We next introduce our assumptions on the concatenated process consisting of the observed IPS, the two virtual IPSs, and the virtual tangent IPS. These conditions are required in order to control the ergodic behaviour of the IPS. In particular, they ensure that fluctuation terms tend to zero sufficiently quickly as $t\rightarrow\infty$. Following the now well established approach in \citet{sirignano2017stochastic}, we control such terms by rewriting them in terms of the solutions of some related Poisson equations. This condition requires that these solutions are unique, and that they grow at most polynomially in a suitable sense. 

\begin{assumption}
\label{assumption:poisson}
The following conditions hold for each $N,M\in\mathbb{N}$.
    \begin{itemize}
        \item[(i)] For every $\theta\in\Theta$, the process $(\boldsymbol{z}_t^{\theta,N,M})_{t\geq 0}$ is ergodic with unique invariant law $\Pi_{\theta}^{N,M}\in\mathcal{P}(\mathbb{R}^{K})$.
        \item[(ii)] For every $q>0$, there exists $K_q<\infty$ such that, for all $i\in[N]$, $j,k\in[M]$, uniformly in $\theta\in\Theta$, 
        \begin{align}
            &\int_{\mathbb{R}^K} \Big(1+||x^{i,N}||^{q} +||\hat{x}^{j,M}||^q+||\tilde{x}^{k,M}||^q+||\hat{y}^{j,M}||^q\Big) \Pi_{\theta}^{N,M}(\mathrm{d}\boldsymbol{z}^{N,M})\leq K_q.
        \end{align}
        \item[(iii)] Let $\Sigma_{\theta,\ell}^{N,M}:=\partial_{\theta_\ell}\Pi_{\theta}^{N,M}$, $\ell=1,\dots p$. For every $q>0$, there exists $K_q<\infty$ such that, for each $\ell=1,\dots,p$, all $i\in[N]$, $j,k\in[M]$, and uniformly in $\theta\in\Theta$,
        \begin{align}
        \int_{\mathbb R^K}
        \Bigl(
        1+\|x^{i,N}\|^{q}+\|\hat{x}^{j,M}\|^q+\|\tilde{x}^{k,M}\|^q+\|\hat{y}^{j,M}\|^q
        \Bigr)
        |\Sigma_{\theta,\ell}^{N,M}|(\mathrm{d}\boldsymbol{z}^{N,M})
        \le K_q,
        \end{align}
        \item[(iv)] Let $\mathcal A_{\theta}^{N,M}$ denote the infinitesimal generator of $(\boldsymbol z_t^{\theta,N,M})_{t\ge0}$. For every $F\in\mathbb G_c^{K,N,M}$, the Poisson equation
        \begin{equation}
        \mathcal A_{\theta}^{N,M}v(\theta,\boldsymbol z^{N,M})=F(\theta,\boldsymbol z^{N,M})
        \end{equation}
        admits a unique solution $v\in\mathbb G^{K,N,M}$ such that $\boldsymbol{z}\mapsto v(\theta,\boldsymbol{z})$ belongs to $C^2(\mathbb R^K)$. Moreover, if $F\in\bar{\mathbb G}^{a,K,N,M}$ or $F\in\bar{\mathbb G}^{b,K,N,M}$, then $v$ belongs to the same class, and $\partial_{\boldsymbol{z}}\partial_\theta v$ has the corresponding polynomial growth.
        \item[(v)] For all $i\in[N]$, $j,k\in[M]$, and for all  $q>0$, $\mathbb{E}[||{{x}}^{i,N}_t||^q]<\infty$, $\mathbb{E}[||\hat{{x}}^{j,M}_t||^q]<\infty$, $\mathbb{E}[||\tilde{{x}}^{k,M}_t||^q]<\infty$, and $\mathbb{E}[||\hat{{y}}^{j,M}_t||^q]<\infty$. In addition, there exists $K_q>0$ such that, for all $i\in[N]$, $j,k\in[M]$, and for sufficiently large $t$, 
        \begin{align}
        \mathbb{E}\Big[\sup_{s\le t}\|x_s^{i,N}\|^q\Big]
        +
        \mathbb{E}\Big[\sup_{s\le t}\|\hat{x}_s^{j,M}\|^q\Big]
        +
        \mathbb{E}\Big[\sup_{s\le t}\|\tilde{x}_s^{k,M}\|^q\Big]
        +
        \mathbb{E}\Big[\sup_{s\le t}\|\hat{y}_s^{j,M}\|^q\Big]
        \le K_q\sqrt{t}
        \end{align}
        In addition, for all $j\in[M]$, and uniformly in $\theta\in\Theta$,
        \begin{align}
        \mathbb{E}\Big[\sup_{s\leq t}||x_s^{\theta,j,M}||^q\Big] +  \mathbb{E}\Big[\sup_{s\leq t}||y_s^{\theta,j,M}||^q\Big] \leq K_q\sqrt{t}.
        \end{align}
        \end{itemize}
\end{assumption}

\begin{assumption}
\label{assumption:pgp}
The following conditions hold for each $N,M\in\mathbb{N}$.
\begin{itemize}
    \item[(i.a)] For all $i\in[N]$, $H^{i,N,M}$ belongs to $\bar{\mathbb{G}}^{a,K,N,M}$, component-wise
    \item[(ii.a)]  For all $i\in[N]$, $B^{i,N,M}$ belongs to $\mathbb{G}^{K,N,M}$, and satisfies \eqref{eq:IPS-PGP-a}, component-wise.
    \item[(iii.a)]  For all $i\in[N]$, $G^{i,N,M}$ satisfies \eqref{eq:IPS-PGP-a}, component-wise.
\end{itemize}
and
\begin{itemize}
    \item[(i.b)] For all $i\in[N]$, $j,k\in[M]$, $h^{i,j,k,N,M}$ belongs to $\bar{\mathbb{G}}^{b,K,N,M}$, component-wise
    \item[(ii.b)] For all $i\in[N]$, $k\in[M]$, $b^{i,k,N,M}$ belongs to $\mathbb{G}^{K,N,M}$, and satisfies \eqref{eq:IPS-PGP-b}, component-wise.
    \item[(iii.b)] For all $i\in[N]$, $j\in[M]$, $g^{i,j,N,M}$ satisfies \eqref{eq:IPS-PGP-b}, component-wise.
\end{itemize}
\end{assumption}

Finally, in order to study limits as the number of real particles $N\rightarrow\infty$ and the number of virtual particles $M\rightarrow\infty$, we will require an additional uniform-in-time propagation-of-chaos assumption. In particular, the following assumption  requires that the stationary first marginal and its parameter derivative converge uniformly to their mean-field limits, with convergence holding in the polynomially weighted total variation norm.

\begin{assumption}
\label{ass:appendix-derivative-poc}
For all $r\geq 1$, there exists a deterministic sequence $\delta_M\downarrow 0$ such that
\begin{equation}
\sup_{\theta\in\Theta}
\|\pi_\theta^{1,M}-\pi_\theta\|_{\mathrm{TV},r}
\leq \delta_M,
\qquad
\sup_{\theta\in\Theta}
\|\nu_\theta^{1,M}-\nu_\theta\|_{\mathrm{TV},r}
\leq \delta_M.
\end{equation}
\end{assumption}

\subsection{Preliminary Results}
\label{sec:prelim-results}

We begin by introducing two finite particle surrogates for our original objective function (see Section~\ref{sec:methodology}). In particular, we will consider
\small
\begin{align}
\mathcal{J}^{i,N,M}(\theta)
&:= \int_{\mathbb{R}^K} J(\theta,x^{i,N},\hat{\mu}^M,\tilde{\mu}^M,\mu^N)\,\Pi_{\theta}^{N,M}(\mathrm{d}\boldsymbol{z}^{N,M}) =: \int_{\mathbb{R}^K} J^{i,N,M}(\theta,\boldsymbol{z}^{N,M})\,\Pi_{\theta}^{N,M}(\mathrm{d}\boldsymbol{z}^{N,M}), \\
\mathcal{J}^{i,j,k,N,M}(\theta)
&:= \int_{\mathbb{R}^K} j(\theta,x^{i,N},\hat{x}^{j,M},\tilde{x}^{k,M},\mu^N)\,\Pi_{\theta}^{N,M}(\mathrm{d}\boldsymbol{z}^{N,M}) =: \int_{\mathbb{R}^K} j^{i,j,k,N,M}(\theta,\boldsymbol{z}^{N,M})\,\Pi_{\theta}^{N,M}(\mathrm{d}\boldsymbol{z}^{N,M}).
\end{align}
\normalsize
These functions can be viewed as the objectives targeted by our two algorithms when the numbers of particles, both real and virtual, are fixed and finite. We can also identify the gradients of these surrogate finite-particle objectives.

\begin{proposition}
\label{prop:asymptotic-partial-log-lik-grad}
Suppose that Assumption \ref{assumption:moments} - \ref{assumption:drift} and Assumption \ref{assumption:poisson}(i)-(iii) hold. Then, for every $i\in[N]$ and $j,k\in[M]$,
\begin{align}
\partial_{\theta}\mathcal{J}^{i,N,M}(\theta)
&=
\int_{\mathbb{R}^{K}}
H^{i,N,M}(\theta,\boldsymbol z^{N,M})\,
\Pi_{\theta}^{N,M}(\mathrm{d}\boldsymbol z^{N,M}),
\\
\partial_{\theta}\mathcal{J}^{i,j,k,N,M}(\theta)
&=
\int_{\mathbb{R}^{K}}
h^{i,j,k,N,M}(\theta,\boldsymbol z^{N,M})\,
\Pi_{\theta}^{N,M}(\mathrm{d}\boldsymbol z^{N,M}).
\end{align}
\end{proposition}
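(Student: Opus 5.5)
The plan is to differentiate under the integral sign, exploit the product structure $\Pi_\theta^{N,M}=\pi_{\theta_0}^N\otimes\Lambda_\theta^M\otimes\pi_\theta^M$, and rewrite the derivative of the virtual stationary laws using the tangent block $\hat{\boldsymbol y}^M$. I give the argument for $\mathcal J^{i,N,M}$; the particlewise case $\mathcal J^{i,j,k,N,M}$ is identical with single particles $\hat x^{j,M},\tilde x^{k,M}$ in place of empirical averages.

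\emph{Step 1: product rule.} Write
\[
J^{i,N,M}(\theta,\boldsymbol z)=\tfrac12\big\langle B(\theta,x^{i,N},\hat\mu^M)-B(\theta_0,x^{i,N},\mu^N),\,B(\theta,x^{i,N},\tilde\mu^M)-B(\theta_0,x^{i,N},\mu^N)\big\rangle_{\sigma\sigma^\top}.
\]
By Assumption~\ref{assumption:drift}, $J^{i,N,M}$ and $\partial_\theta J^{i,N,M}$ have polynomial growth. Together with Assumption~\ref{assumption:poisson}(ii)--(iii), dominated convergence then justifies
\[
\partial_\theta\mathcal J^{i,N,M}(\theta)=\int\partial_\theta J^{i,N,M}\,\mathrm d\Pi_\theta^{N,M}+\int J^{i,N,M}\,\mathrm d\Sigma_\theta^{N,M}.
\]
Because $\pi_{\theta_0}^N$ does not depend on $\theta$, the measure $\Sigma_\theta^{N,M}$ splits into two parts. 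One is $\pi_{\theta_0}^N\otimes\partial_\theta\Lambda_\theta^M\otimes\pi_\theta^M$, which acts only on the hat factor. The other is $\pi_{\theta_0}^N\otimes\Lambda_\theta^M\otimes\partial_\theta\pi_\theta^M$, which acts only on the tilde factor.

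\emph{Step 2: tangent representation of the stationary derivative.} I claim that for polynomially growing $C^1$ test functions $\varphi$ of $\hat{\boldsymbol x}^M$,
\[
\partial_\theta\int\varphi\,\mathrm d\pi^M_\theta=\int\sum_{a=1}^M\hat y^{a,M}\nabla_{\hat x^a}\varphi\;\mathrm d\Lambda^M_\theta.
\]
The first-marginal law of $\Lambda^M_\theta$ is $\pi^M_\theta$, the same law as the tilde block. To prove the claim, first note that at finite time $\partial_\theta\mathbb E[\varphi(\hat{\boldsymbol x}_t^{\theta,M})]=\mathbb E[\hat{\boldsymbol y}_t^{\theta,M}\nabla\varphi(\hat{\boldsymbol x}_t^{\theta,M})]$; this is the definition of the tangent process, with differentiation under $\mathbb E$ justified by the moment bounds in Assumption~\ref{assumption:poisson}(v). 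Then let $t\to\infty$ on both sides. Ergodicity of the pair (Assumption~\ref{assumption:poisson}(i)) sends the right-hand side to the $\Lambda^M_\theta$ integral, and the left-hand side converges to $\int\varphi\,\mathrm d\pi^M_\theta$.

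Interchanging the limits in $\theta$ and $t$ is the main obstacle. I would handle it by showing that the convergence of the right-hand side is locally uniform in $\theta$. For this I would write $\mathbb E[\hat{\boldsymbol y}_t\nabla\varphi(\hat{\boldsymbol x}_t)]-\int\hat{\boldsymbol y}\nabla\varphi\,\mathrm d\Lambda_\theta^M$ via the Poisson equation of Assumption~\ref{assumption:poisson}(iv). Its solution has polynomial growth uniformly in $\theta$, which gives a uniform $O(t^{-1/2})$-type bound. The standard theorem on uniformly convergent derivatives then yields the claim, and the same argument applies to $\partial_\theta\Lambda_\theta^M$ acting on functions of $\hat{\boldsymbol x}^M$ alone.

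\emph{Step 3: symmetrisation and assembly.} The function $J^{i,N,M}$ depends on the hat block only through $\hat{\boldsymbol x}^M$. Step 2 therefore turns the hat-derivative term into
\[
\tfrac12\int\tfrac1M\textstyle\sum_a\hat y^{a,M}\nabla_y b(\theta,x^{i,N},\hat x^{a,M})(\sigma\sigma^\top)^{-1}\big[B(\theta,x^{i,N},\tilde\mu^M)-B(\theta_0,x^{i,N},\mu^N)\big]\mathrm d\Pi^{N,M}_\theta.
\]
The tilde-derivative term produces the same expression with the roles of hat and tilde exchanged, now using an independent copy of the tangent block. The hat and tilde position blocks are independent and have identical marginal law $\pi_\theta^M$ under $\Pi_\theta^{N,M}$, and $J$ is symmetric under swapping $\hat\mu^M$ and $\tilde\mu^M$. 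The two terms are therefore equal. The same symmetry shows that the two halves of $\int\partial_\theta J^{i,N,M}\,\mathrm d\Pi_\theta^{N,M}$ coincide, each contributing $\tfrac12\int\frac1M\sum_a\partial_\theta b(\theta,x^{i,N},\hat x^{a,M})(\sigma\sigma^\top)^{-1}[B^{i,N,M}-B(\theta_0,x^{i,N},\mu^N)]$. Summing, the factors $\tfrac12$ cancel and the total is $\int G^{i,N,M}(\sigma\sigma^\top)^{-1}[B^{i,N,M}-B(\theta_0,x^{i,N},\mu^N)]\,\mathrm d\Pi_\theta^{N,M}=\int H^{i,N,M}\,\mathrm d\Pi_\theta^{N,M}$.

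The particlewise statement follows by repeating Steps 1--3 with $j$ in place of $\hat\mu^M$ and $k$ in place of $\tilde\mu^M$. The symmetrisation now uses the fact that $(\hat x^{j,M},\hat y^{j,M})$ and $\tilde x^{k,M}$ have the corresponding common single-particle marginals; this gives $h^{i,j,k,N,M}$.
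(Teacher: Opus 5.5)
Your proof is correct in substance, but it takes a different route from the paper's. The paper never differentiates the joint measure $\Pi_\theta^{N,M}$. It first integrates out the two independent virtual blocks, writing $\mathcal J^{i,N,M}(\theta)=\int J(\theta,x^{i,N},\pi_\theta^{1,M},\pi_\theta^{1,M},\mu^N)\,\pi_{\theta_0}^N(\mathrm d x^N)$, with the analogous triple integral against $\pi_\theta^{1,M}\otimes\pi_\theta^{1,M}$ for $\mathcal J^{i,j,k,N,M}$. It then differentiates exactly as in the mean-field case (Lemma~\ref{lem:appendix-diff-under-integral}), with $(\pi_\theta,\nu_\theta)$ replaced by $(\pi_\theta^{1,M},\nu_\theta^{1,M})$. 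After this reduction both laws in $J$ coincide, so the paper needs no product rule over the hat and tilde factors and no symmetrisation. It obtains $\int H(\theta,x^{i,N},\pi_\theta^{1,M},\nu_\theta^{1,M},\pi_\theta^{1,M},\mu^N)\,\pi_{\theta_0}^N(\mathrm d x^N)$ using only (i)--(iii).

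What that route leaves implicit is the passage from this expression to $\int H^{i,N,M}\,\mathrm d\Pi_\theta^{N,M}$. That passage requires $\int b(\theta,x,z)\,\nu_\theta^{1,M}(\mathrm d z)=\mathbb E_{\Lambda_\theta^M}\bigl[\hat y^{1,M}\nabla_y b(\theta,x,\hat x^{1,M})\bigr]$, i.e.\ that the stationary tangent block genuinely represents $\partial_\theta\pi_\theta^{1,M}$. The paper encodes this only by calling $\hat\eta^M$ the formal derivative of $\hat\mu^M$. Your Step~2 is exactly this identification. Your argument is therefore longer, but it closes that link explicitly. The cost is that it invokes Assumption~\ref{assumption:poisson}(iv)--(v), which are not among the statement's hypotheses.

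The most economical complete proof combines the two approaches. Reduce as the paper does, then apply your Step~2 once, to $\hat{\boldsymbol x}^M\mapsto\frac1M\sum_a b(\theta,x,\hat x^{a,M})$ with $\theta$ frozen in the integrand.

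One point in Step~2 needs tightening. The Poisson equation, like the ergodicity in (i), controls time averages rather than time-$t$ marginals. Dynkin's formula gives
$\int_0^t\bigl(\mathbb E[F(\theta,\boldsymbol z_s^{\theta,N,M})]-\int F\,\mathrm d\Pi_\theta^{N,M}\bigr)\,\mathrm d s=\mathbb E[v(\theta,\boldsymbol z_t^{\theta,N,M})]-\mathbb E[v(\theta,\boldsymbol z_0^{\theta,N,M})]$,
not a bound on $\mathbb E[F(\theta,\boldsymbol z_t^{\theta,N,M})]-\int F\,\mathrm d\Pi_\theta^{N,M}$ itself. So run the interchange of limits on the Cesàro averages $A_t(\theta):=\frac1t\int_0^t\mathbb E[\varphi(\hat{\boldsymbol x}_s^{\theta,M})]\,\mathrm d s$. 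Their $\theta$-derivative is $\frac1t\int_0^t\mathbb E[\hat{\boldsymbol y}_s^{\theta,M}\nabla\varphi(\hat{\boldsymbol x}_s^{\theta,M})]\,\mathrm d s$. The polynomial growth of $v$, together with the $\sqrt t$ moment bound in (v) (uniform in $\theta$ for the frozen virtual system), then gives exactly the uniform $O(t^{-1/2})$ rate you quote. With that rate, the theorem on uniformly convergent derivatives applies as you intend.
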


\begin{proof}
See Appendix~\ref{appendix:proofs-section43}.
\end{proof}

Interestingly, the two finite-particle surrogates (and therefore also their gradients) actually coincide. This provides some intuition as to why the two algorithms behave so similarly in practice (see Section~\ref{sec:numerics}). In particular, while the averaged and the particle-wise estimators use different path-wise stochastic approximations of the gradient, these stochastic approximations are ultimately both targeting the same gradient, even for fixed $N$ and $M$. This is the subject of the following propopsition.

\begin{proposition}
\label{prop:j-vp}
Suppose that Assumption~\ref{assumption:model} and Assumption~\ref{assumption:poisson}(i) hold. Then the two finite-particle surrogate objectives coincide, and are equal to
\begin{align}
    \mathcal{J}_{\mathrm{vp}}^{i,N,M}(\theta)&:=\int_{(\mathbb{R}^d)^N} J(\theta,x^{i,N},\pi_{\theta}^{1,M},\pi_{\theta}^{1,M},\mu^N)\pi_{\theta_0}^N(\mathrm{d}x^N),
\end{align}
where $\pi_{\theta}^{1,M}$ denotes the first marginal of $\pi_{\theta}^M$, the stationary distribution of the (virtual) interacting particle system evaluated at $\theta\in\Theta$. Suppose, in addition, the assumptions of Proposition~\ref{prop:asymptotic-partial-log-lik-grad} hold. Then the gradients of the two finite-particle surrogate objective coincide, and are equal to
\begin{align}
    \partial_{\theta}\mathcal{J}_{\mathrm{vp}}^{i,N,M}(\theta)&=\int_{(\mathbb{R}^d)^N} H(\theta,x^{i,N},\pi_{\theta}^{1,M},\nu_{\theta}^{1,M},\pi_{\theta}^{1,M},\mu^N)\pi_{\theta_0}^N(\mathrm{d}x^N).
\end{align}
where, similar to above, $\nu_\theta^{1,M}:=\partial_\theta \pi_\theta^{1,M}$ denotes the weak derivative of the first marginal of $\pi_{\theta}^M$.
\end{proposition}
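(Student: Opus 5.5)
Both claims follow from the product structure $\Pi_\theta^{N,M}=\pi_{\theta_0}^N\otimes\Lambda_\theta^M\otimes\pi_\theta^M$ (Assumption~\ref{assumption:poisson}(i)), the exchangeability of the virtual systems, and the explicit forms of $J$, $j$, $H$ and $h$. The plan is to integrate out the virtual blocks first, conditionally on $\boldsymbol{x}^N$, and then integrate against $\pi_{\theta_0}^N$.

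\emph{Objectives.} Expand $J(\theta,x^{i,N},\hat\mu^M,\tilde\mu^M,\mu^N)$ as
\[
\frac{1}{2}\Big\langle \frac1M\sum_{a}b(\theta,x^{i,N},\hat x^{a,M})-B(\theta_0,x^{i,N},\mu^N),\ \frac1M\sum_{c}b(\theta,x^{i,N},\tilde x^{c,M})-B(\theta_0,x^{i,N},\mu^N)\Big\rangle_{\sigma\sigma^\top}.
\]
The first factor depends only on $\hat{\boldsymbol x}^M$ and the second only on $\tilde{\boldsymbol x}^M$, and these blocks are independent of each other and of $\boldsymbol x^N$ under $\Pi_\theta^{N,M}$. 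Hence, for fixed $\boldsymbol{x}^N$, the integral factorises into the inner product of two separate expectations. The $\hat{\boldsymbol x}$-marginal of $\Lambda_\theta^M$ is $\pi_\theta^M$, and by exchangeability of the virtual IPS each $\hat x^{a,M}$ and each $\tilde x^{c,M}$ has law $\pi_\theta^{1,M}$. Each factor therefore equals $B(\theta,x^{i,N},\pi_\theta^{1,M})-B(\theta_0,x^{i,N},\mu^N)$, which gives $\mathcal J^{i,N,M}=\mathcal J^{i,N,M}_{\mathrm{vp}}$.

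The particlewise objective is handled the same way. In $j(\theta,x^{i,N},\hat x^{j,M},\tilde x^{k,M},\mu^N)$ the two arguments $\hat x^{j,M}$ and $\tilde x^{k,M}$ are independent, each has law $\pi_\theta^{1,M}$, and both are independent of $\boldsymbol x^N$. The same factorisation gives the same expression, independently of $j,k$. The moment bounds in Assumption~\ref{assumption:poisson}(ii), together with the polynomial growth of $b$ from Assumption~\ref{assumption:drift}, justify Fubini throughout.

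\emph{Gradients.} The surrogates coincide as functions of $\theta$, so their gradients coincide, and by Proposition~\ref{prop:asymptotic-partial-log-lik-grad} they equal $\int H^{i,N,M}\,\mathrm d\Pi_\theta^{N,M}$. It remains to identify this integral with the stated formula.

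The factor $B(\theta,x^{i,N},\tilde\mu^M)-B(\theta_0,x^{i,N},\mu^N)$ depends only on $(\tilde{\boldsymbol x}^M,\boldsymbol x^N)$, and the factor $G^{i,N,M}$ depends only on $(\hat{\boldsymbol x}^M,\hat{\boldsymbol y}^M,x^{i,N})$. By independence we again integrate each separately given $\boldsymbol x^N$. The $\tilde{\boldsymbol x}$-expectation gives $B(\theta,x^{i,N},\pi_\theta^{1,M})-B(\theta_0,x^{i,N},\mu^N)$. The $\Lambda_\theta^M$-expectation of $G^{i,N,M}$ must be shown to equal $G(\theta,x^{i,N},\pi_\theta^{1,M},\nu_\theta^{1,M})$. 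The $\partial_\theta b$ term gives $\int\partial_\theta b(\theta,x,y)\,\pi_\theta^{1,M}(\mathrm dy)$ directly. For the tangent term we need the identity
\[
\int \frac1M\sum_a \hat y^{a,M}\nabla_y b(\theta,x,\hat x^{a,M})\,\Lambda_\theta^M=\int b(\theta,x,z)\,\nu_\theta^{1,M}(\mathrm dz).
\]

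\emph{Main obstacle.} The hard step is this identity, i.e.\ that the stationary expectation of the pathwise (tangent) derivative equals the derivative of the stationary expectation. I would prove it by a time-limit argument. For a fixed initial condition, differentiate under the expectation, $\partial_\theta\mathbb E[\varphi(x_t^{\theta,1,M})]=\mathbb E[y_t^{\theta,1,M}\nabla\varphi(x_t^{\theta,1,M})]$ with $\varphi=b(\theta,x,\cdot)$, using the moment bounds of Assumption~\ref{assumption:poisson}(v). Then let $t\to\infty$. The right-hand side converges by ergodicity of the joint process (Assumption~\ref{assumption:poisson}(i)) to the $\Lambda_\theta^M$-integral. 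For the left-hand side, use uniform-in-$\theta$ convergence of $\partial_\theta\mathbb E[\varphi(x_t^\theta)]$, obtained from the Poisson equation of Assumption~\ref{assumption:poisson}(iv) or from the bound on $\Sigma_\theta^{N,M}$ in (iii), to interchange limit and derivative. The limit is then $\partial_\theta\int\varphi\,\mathrm d\pi_\theta^{1,M}=\int\varphi\,\mathrm d\nu_\theta^{1,M}$.

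Substituting both factors back and integrating against $\pi_{\theta_0}^N$ yields the displayed formula for $\partial_\theta\mathcal J^{i,N,M}_{\mathrm{vp}}$.
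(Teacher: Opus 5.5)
Your treatment of the objectives is essentially the paper's argument. The paper expands $J$ bilinearly into the double sum $M^{-2}\sum_{a,b}$, uses exchangeability to identify each summand with $\mathcal J^{i,j,k,N,M}$, and then integrates out the virtual blocks. You instead factorise the conditional expectation of the inner product directly, using independence of $\hat{\boldsymbol x}^M$, $\tilde{\boldsymbol x}^M$ and $\boldsymbol x^N$; this is the same idea.

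For the gradient, however, you take a detour the paper avoids. The paper never passes through $\int H^{i,N,M}\,\mathrm{d}\Pi_\theta^{N,M}$. It differentiates the closed form of $\mathcal J^{i,N,M}_{\mathrm{vp}}$ that you have already obtained, using Lemma~\ref{lem:appendix-diff-under-integral} with $(\pi_\theta,\nu_\theta)$ replaced by $(\pi_\theta^{1,M},\nu_\theta^{1,M})$, to get $\partial_\theta B(\theta,x,\pi_\theta^{1,M})=G(\theta,x,\pi_\theta^{1,M},\nu_\theta^{1,M})$. Since both measure arguments of $J$ coincide, the product rule then gives $H(\theta,x^{i,N},\pi_\theta^{1,M},\nu_\theta^{1,M},\pi_\theta^{1,M},\mu^N)$ at once. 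Because $\nu_\theta^{1,M}$ is \emph{defined} as the weak derivative of $\pi_\theta^{1,M}$, no tangent process enters, and your ``main obstacle'' is not needed for the statement as written.

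Your route does buy a stronger conclusion. It identifies the stationary mean of the pathwise tangent term actually used by the algorithm, $\int H^{i,N,M}\,\mathrm{d}\Pi_\theta^{N,M}$, with the weak-derivative formula. The paper uses this link implicitly but never proves it: it is needed to reconcile the stated form of Proposition~\ref{prop:asymptotic-partial-log-lik-grad} with its proof, and for the centring of $F_H^{i,N,M}$ in Appendix~\ref{appendix:proofs-section44}.

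Your sketch of that identity does not go through from the stated assumptions, though.
\begin{itemize}
\item Pointwise convergence of $\mathbb E[\hat y^{1,M}_t\nabla\varphi(\hat x^{1,M}_t)]$ to its $\Lambda_\theta^M$-mean needs convergence of the time-$t$ laws plus uniform integrability. Assumption~\ref{assumption:poisson}(v) only allows moments growing like $\sqrt t$.
\item Assumption~\ref{assumption:poisson}(iii) bounds the derivative of the \emph{stationary} law, not of the time-$t$ laws, so it gives no uniform-in-$\theta$ convergence that would let you swap $\lim_t$ and $\partial_\theta$.
\end{itemize}

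A version that works uses Ces\`aro averages, with $\varphi=b(\theta_\ast,x,\cdot)$ frozen and a $\theta$-independent initial condition. Dynkin's formula with the Poisson solutions from (iv), together with the $\sqrt t$ bounds in (v), gives two facts. First, $\frac1t\int_0^t\mathbb E[\hat y^{1,M}_s\nabla\varphi(\hat x^{1,M}_s)]\,\mathrm{d}s\to\Lambda_\theta^M(\hat y^{1,M}\nabla\varphi(\hat x^{1,M}))$ at rate $O(t^{-1/2})$, uniformly in $\theta$. Second, $\frac1t\int_0^t\mathbb E[\varphi(\hat x^{1,M}_s)]\,\mathrm{d}s\to\pi_\theta^{1,M}(\varphi)$ pointwise. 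This holds modulo checking that the centred integrands lie in $\mathbb G_c^{K,N,M}$. The first average is the $\theta$-derivative of the second, so uniform convergence of derivatives yields $\nu_\theta^{1,M}(\varphi)=\Lambda_\theta^M(\hat y^{1,M}\nabla\varphi(\hat x^{1,M}))$.
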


\begin{proof}
See Appendix~\ref{appendix:proofs-section43}.
\end{proof} 

In general, the function $\mathcal{J}_{\mathrm{vp}}^{i,N,M}$ and its gradient $\partial_{\theta}\mathcal{J}_{\mathrm{vp}}^{i,N,M}$ can be viewed as finite-particle approximations to the original mean-field objective function (cf. Proposition~\ref{prop:ips-likelihood-n-t-limit-grad}) and its gradient (cf. Proposition~\ref{prop:ips-likelihood-n-t-limit-grad-explicit}). We formalise this idea in the following proposition, which shows that the finite-particle approximation of the gradient converges (uniformly) to the gradient of the original objective as $N,M\rightarrow\infty$.

\begin{proposition}
\label{prop:inf-n-convergence-1}
    Suppose that the assumptions of Proposition~\ref{prop:asymptotic-partial-log-lik-grad} hold. In addition, suppose that Assumptions~\ref{ass:appendix-meanfield-c1} and~\ref{ass:appendix-derivative-poc} hold. Then there exist a constant $C<\infty$, independent of $N$ and $M$, and deterministic sequences $\varepsilon_N\downarrow 0$ and $\delta_M\downarrow 0$ such that, for all $N,M\in\mathbb N$ and all $i\in[N]$,
\begin{align}
\sup_{\theta\in\Theta}
\big\|
\partial_{\theta}\mathcal{J}_{\mathrm{vp}}^{i,N,M}(\theta)
-
\partial_{\theta}\mathcal{J}(\theta)
\big\|
\leq
C\left(\varepsilon_N + \delta_M\right).
\label{eq:j-convergence}
\end{align}
\end{proposition}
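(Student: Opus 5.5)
We start from the representation in Proposition~\ref{prop:j-vp}, which expresses $\partial_\theta\mathcal{J}_{\mathrm{vp}}^{i,N,M}(\theta)$ as an integral of $H(\theta,x^{i,N},\pi_\theta^{1,M},\nu_\theta^{1,M},\pi_\theta^{1,M},\mu^N)$ against $\pi_{\theta_0}^N$. We compare it with the representation of $\partial_\theta\mathcal{J}(\theta)$ in Proposition~\ref{prop:ips-likelihood-n-t-limit-grad}, namely $\int H(\theta,x,\pi_\theta,\nu_\theta,\pi_\theta,\pi_{\theta_0})\,\pi_{\theta_0}(\mathrm dx)$. We then telescope the difference into three pieces. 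The first is a virtual-particle error, obtained by replacing $(\pi_\theta^{1,M},\nu_\theta^{1,M})$ by $(\pi_\theta,\nu_\theta)$ in the first four slots. The second is an interaction error, obtained by replacing the empirical measure $\mu^N$ by $\pi_{\theta_0}$ in the last slot. The third is an observation-marginal error, obtained by replacing the law of $x^{i,N}$ under $\pi_{\theta_0}^N$, i.e.\ $\pi_{\theta_0}^{1,N}$ by exchangeability, with $\pi_{\theta_0}$. The first piece will give $C\delta_M$. The second and third will give $C\varepsilon_N$, where we define $\varepsilon_N$ from the uniform-in-time propagation of chaos in Assumption~\ref{assumption:model} (e.g.\ $\varepsilon_N:=\mathsf W_{1,r}(\pi_{\theta_0}^{1,N},\pi_{\theta_0})+\mathbb E_{\pi^N_{\theta_0}}[(1+\|x^{1,N}\|^r)\mathsf W_{1,r}(\mu^N,\pi_{\theta_0})]$, which tends to $0$). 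Since $\theta_0$ is fixed, this sequence is independent of $\theta$, so uniformity in $\theta$ for these pieces costs nothing.

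\textbf{Virtual-particle term.} Write $H=G(\sigma\sigma^\top)^{-1}(B(\theta,x,\mu)-B(\theta_0,x,\mu^N))$. We use the bilinear identity $G_1D_1-G_2D_2=(G_1-G_2)D_1+G_2(D_1-D_2)$. Now $G(\theta,x,\pi^{1,M}_\theta,\nu^{1,M}_\theta)-G(\theta,x,\pi_\theta,\nu_\theta)$ equals $\int\partial_\theta b(\theta,x,y)(\pi^{1,M}_\theta-\pi_\theta)(\mathrm dy)+\int b(\theta,x,z)(\nu^{1,M}_\theta-\nu_\theta)(\mathrm dz)$. By the polynomial growth in Assumption~\ref{assumption:drift}, $\|b(\theta,x,z)\|\le C(1+\|x\|^q+\|z\|^q)$, so the difference is bounded by $C(1+\|x\|^q)\big(\|\pi^{1,M}_\theta-\pi_\theta\|_{\mathrm{TV},q}+\|\nu^{1,M}_\theta-\nu_\theta\|_{\mathrm{TV},q}\big)\le C(1+\|x\|^q)\delta_M$ by Assumption~\ref{ass:appendix-derivative-poc}. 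The same argument handles $B(\theta,x,\pi^{1,M}_\theta)-B(\theta,x,\pi_\theta)$. The remaining factors, $G(\theta,x,\pi_\theta,\nu_\theta)$ and $B(\theta,x,\pi^{1,M}_\theta)-B(\theta_0,x,\mu^N)$, are polynomially bounded in $x$ and in moments of $\mu^N$. Their constants are uniform in $\theta$ and $M$ by Assumption~\ref{ass:appendix-meanfield-c1}, together with Assumption~\ref{ass:appendix-derivative-poc}, which gives the moments of $\pi^{1,M}_\theta$ and $\nu^{1,M}_\theta$ within $\delta_M\le\delta_1$ of those of the limit. Integrating against $\pi^N_{\theta_0}$ and using the uniform moment bounds of Assumption~\ref{assumption:model} gives $C\delta_M$, with $C$ free of $N$, $M$ and $\theta$.

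\textbf{Interaction and marginal terms.} For the interaction term, the remaining quantity is $G(\theta,x^{i,N},\pi_\theta,\nu_\theta)(\sigma\sigma^\top)^{-1}\big(B(\theta_0,x^{i,N},\pi_{\theta_0})-B(\theta_0,x^{i,N},\mu^N)\big)$. The local Lipschitz property and polynomial growth of $b$ in $y$ give $\|B(\theta_0,x,\mu^N)-B(\theta_0,x,\pi_{\theta_0})\|\le C(1+\|x\|^q)\mathsf W_{1,q}(\mu^N,\pi_{\theta_0})$. Cauchy--Schwarz against the moments then yields $C\varepsilon_N$. For the marginal term, the integrand $x\mapsto H(\theta,x,\pi_\theta,\nu_\theta,\pi_\theta,\pi_{\theta_0})$ is locally Lipschitz with polynomially growing constant, uniformly in $\theta$; this uses Assumptions~\ref{assumption:drift} and~\ref{ass:appendix-meanfield-c1} together with the locally Lipschitz property of $\partial_\theta b$ and $b$ in $x$. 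Hence its integrals against $\pi^{1,N}_{\theta_0}$ and $\pi_{\theta_0}$ differ by at most $C\,\mathsf W_{1,r}(\pi^{1,N}_{\theta_0},\pi_{\theta_0})\le C\varepsilon_N$, by evaluating along an optimal coupling.

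\textbf{Main obstacle.} The delicate step is the interaction term. Uniform-in-time propagation of chaos in Assumption~\ref{assumption:model} is stated qualitatively, so we must make sure it yields $\mathbb E_{\pi^N_{\theta_0}}[\mathsf W_{1,r}(\mu^N,\pi_{\theta_0})^2]\to0$, i.e.\ convergence of the stationary empirical measure, and not merely convergence of the marginals. Our first attempt is to couple the stationary IPS with $N$ i.i.d.\ stationary mean-field copies (synchronous coupling, run to stationarity). We would then bound $\mathsf W(\mu^N,\pi_{\theta_0})$ by the coupling distance plus $\mathsf W(\bar\mu^N,\pi_{\theta_0})$ for the i.i.d.\ empirical measure $\bar\mu^N$. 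The latter vanishes by the law of large numbers and uniform integrability, which follows from the all-order moment bounds. If this coupling route is unavailable, we simply absorb this convergence into the definition of $\varepsilon_N$ as part of what Assumption~\ref{assumption:model} provides.
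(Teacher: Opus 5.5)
Your proposal is correct and follows essentially the paper's argument. The paper bounds the difference in one step via the joint local-Lipschitz estimate for $H$ (Lemma~\ref{lem:appendix-H-lipschitz}) along a coupling of $(x^{i,N},\mu^N)$ with $(\bar x,\pi_{\theta_0})$, controlling the metric factor by $\varepsilon_N$ from uniform-in-time propagation of chaos and by $\delta_M$ from Assumption~\ref{ass:appendix-derivative-poc}; your three-term telescope does exactly this piece by piece, and your treatment of the product of the polynomial weight with the metric factor (weighted $\varepsilon_N$ or Cauchy--Schwarz) is, if anything, more careful than the paper's, which bounds the two factors separately.
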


\begin{proof}
See Appendix~\ref{appendix:proofs-section43}.
\end{proof}

The previous proposition establishes a combined error in both $N$ and $M$ for the finite-particle  approximate gradient. In fact, it is also useful to separate these two effects. To do so, we introduce the intermediate finite-particle objective
\begin{equation}
\mathcal{J}_N^i(\theta)
:=
\int_{(\mathbb{R}^d)^N}
J\!\left(\theta, x^{i,N}, \pi_\theta, \pi_\theta, \mu^N\right)
\,\pi_{\theta_0}^N(dx^N).
\end{equation}
This is the objective obtained by keeping the number of particles in the observed system fixed, while replacing only the virtual particle system by the mean-field stationary law. A useful interpretation of $\smash{\mathcal{J}_N^i}$ is in terms of a one-particle Markovian projection of the finite-$N$ drift. Let $\smash{\pi_{\theta_0}^{i,N}}$ denote the $i$-th marginal of $\pi_{\theta_0}^N$, and let $\smash{\pi_{\theta_0}^{-i,N}(\,\cdot\,\mid x)}$ denote a regular conditional law of the remaining particles, given $x^{i,N} = x$, under the stationary law $\pi_{\theta_0}^N$. We can then define
\begin{equation}
B_{\theta_0}^{i,N}(x)
:=
\int_{(\mathbb{R}^d)^{N-1}}
B\big(
\theta_0,
x,
\tfrac{1}{N}\delta_x + \tfrac{1}{N}\textstyle\sum_{j\neq i}\delta_{y^j}
\big)
\,\pi_{\theta_0}^{-i,N}(dy^{-i}\mid x).
\end{equation}
Thus, $B_{\theta_0}^{i,N}$ is the effective one-particle drift obtained by conditioning the finite-$N$ drift on the single observed particle. We can then decompose the finite-particle pseudo-objective as follows.

\begin{proposition}
\label{prop:add:markovian-projection}
For every $\theta \in \Theta$, there exists a constant $C_N^i$ independent of $\theta$ such that the finite-particle objective $\mathcal{J}_N^{i}$ can be written as
\begin{equation}
\mathcal{J}_N^i(\theta)
=
C_N^i
+
\frac{1}{2}
\int_{\mathbb{R}^d}
\left\|
B(\theta,x,\pi_\theta) - B_{\theta_0}^{i,N}(x)
\right\|_{\sigma\sigma^\top}^2
\,\pi_{\theta_0}^{i,N}(dx).
\end{equation}
\end{proposition}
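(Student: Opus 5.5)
This is a conditional-variance (Pythagorean) decomposition, carried out by conditioning on the observed particle $x^{i,N}$ under the stationary law $\pi_{\theta_0}^N$.

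First, recall that
\begin{equation}
J(\theta,x,\mu,\mu',\mu'') = \tfrac12\langle B(\theta,x,\mu)-B(\theta_0,x,\mu''),\,B(\theta,x,\mu')-B(\theta_0,x,\mu'')\rangle_{\sigma\sigma^\top}.
\end{equation}
Take $\mu=\mu'=\pi_\theta$ and $\mu''=\mu^N$. Since $B(\theta_0,x^{i,N},\mu^N)=B(\theta_0,x,\tfrac1N\delta_x+\tfrac1N\sum_{j\neq i}\delta_{y^j})$ when $x^{i,N}=x$, the integrand of $\mathcal J_N^i(\theta)$ is $\tfrac12\|B(\theta,x,\pi_\theta)-B(\theta_0,x,\mu^N)\|^2_{\sigma\sigma^\top}$. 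Disintegrate $\pi_{\theta_0}^N(\mathrm dx^N)=\pi_{\theta_0}^{i,N}(\mathrm dx)\,\pi_{\theta_0}^{-i,N}(\mathrm dy^{-i}\mid x)$, which is legitimate because $(\mathbb R^d)^N$ is Polish. Inside the inner integral, add and subtract $B_{\theta_0}^{i,N}(x)$:
\begin{equation}
B(\theta,x,\pi_\theta)-B(\theta_0,x,\mu^N) = \big(B(\theta,x,\pi_\theta)-B^{i,N}_{\theta_0}(x)\big) + \big(B^{i,N}_{\theta_0}(x)-B(\theta_0,x,\mu^N)\big).
\end{equation}
Expanding the square gives three terms.

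\begin{itemize}
\item The squared first term depends only on $x$. It therefore integrates to $\tfrac12\int\|B(\theta,x,\pi_\theta)-B^{i,N}_{\theta_0}(x)\|^2_{\sigma\sigma^\top}\pi_{\theta_0}^{i,N}(\mathrm dx)$.
\item The cross term is $\langle B(\theta,x,\pi_\theta)-B^{i,N}_{\theta_0}(x),\,(\sigma\sigma^\top)^{-1}(B^{i,N}_{\theta_0}(x)-B(\theta_0,x,\mu^N))\rangle$. Its first factor is $x$-measurable, so the inner conditional integral can be applied to the second factor alone. By the definition of $B^{i,N}_{\theta_0}$ as the conditional mean, that integral vanishes for $\pi^{i,N}_{\theta_0}$-a.e.\ $x$.
\item The squared second term gives
\begin{equation}
C_N^i:=\tfrac12\int\|B^{i,N}_{\theta_0}(x^{i,N})-B(\theta_0,x^{i,N},\mu^N)\|^2_{\sigma\sigma^\top}\,\pi_{\theta_0}^N(\mathrm dx^N).
\end{equation}
This is the conditional variance of the finite-$N$ drift. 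It involves only $\theta_0$, so it is independent of $\theta$.
\end{itemize}

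The only real obstacle is integrability, which is needed to justify splitting the integral and applying Fubini to the cross term. By Assumption~\ref{assumption:drift}, $b$ has polynomial growth in both spatial arguments, hence so does $B(\theta_0,x,\mu^N)$ in terms of $\|x^{i,N}\|$ and the empirical moments. Assumption~\ref{assumption:model} provides moment bounds for the stationary law $\pi^N_{\theta_0}$ of every order, obtained as limits of the uniform-in-time bounds. Together these give $B(\theta_0,x^{i,N},\mu^N)\in L^2(\pi^N_{\theta_0})$. By conditional Jensen, $B^{i,N}_{\theta_0}\in L^2(\pi^{i,N}_{\theta_0})$, so $C_N^i<\infty$.

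The first term is finite because $B(\theta,x,\pi_\theta)$ has polynomial growth in $x$, using the moments of $\pi_\theta$ from Assumption~\ref{ass:appendix-meanfield-c1}. The cross term is then absolutely integrable by Cauchy--Schwarz, and everything is well defined. If some term were infinite, both sides would equal $+\infty$ consistently, but under these assumptions this does not arise.
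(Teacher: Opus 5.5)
Your proposal is correct and follows the paper's own proof: add and subtract $B_{\theta_0}^{i,N}(x^{i,N})$, expand the quadratic, kill the cross term by conditioning on $x^{i,N}$, and identify $C_N^i$ as the $\theta$-independent conditional-variance term. Your explicit disintegration and $L^2$/Cauchy--Schwarz integrability check are extra care that the paper's argument leaves implicit.
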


\begin{proof}
See Appendix~\ref{appendix:proofs-section43}.
\end{proof}

This result shows that the minimiser of the finite-particle objective does not fit the full finite-particle drift directly. Instead, it fits the corresponding one-particle Markovian projection in the stationary weighted $\mathsf L^2$-geometry induced by $\sigma\sigma^\top$. This is the natural target in the regime where only a single particle trajectory is observed. Thus, $\mathcal{J}_N^i$ provides the natural intermediate objective for separating the error due to the finite observed system from the error due to the virtual-particle approximation. In particular, we have the following result.

\begin{proposition}
\label{prop:add:separate-N-M}
Suppose that the assumptions of Proposition~\ref{prop:inf-n-convergence-1} hold. Then $\mathcal{J}_N^i$ is differentiable and
\begin{equation}
\partial_\theta \mathcal{J}_N^i(\theta)
=
\int_{(\mathbb{R}^d)^N}
H(\theta,x^{i,N},\pi_\theta,\nu_\theta,\pi_\theta,\mu^N)\,\pi_{\theta_0}^N(\mathrm d x^N).
\end{equation}
Moreover, there exists a constant $C<\infty$, independent of $N$ and $M$, such that for all $N,M\in\mathbb{N}$ and all $i\in[N]$,
\begin{equation}
\sup_{\theta\in\Theta}\big\|
\partial_\theta \mathcal{J}_{\mathrm{vp}}^{i,N,M}(\theta)-\partial_\theta \mathcal{J}_N^i(\theta)
\big\|
\le C\,\delta_M,
\qquad
\sup_{\theta\in\Theta}\big\|
\partial_\theta \mathcal{J}_N^i(\theta)-\partial_\theta \mathcal{J}(\theta)
\big\|
\le C\,\varepsilon_N.
\end{equation}
\end{proposition}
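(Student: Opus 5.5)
We prove the three claims in turn: differentiability of $\mathcal{J}_N^i$ with the stated gradient, the $M$-error bound, and the $N$-error bound. All three rest on one algebraic structure. The integrand $H(\theta,x,\mu,\eta,\tilde\mu,\mu^N)=G(\theta,x,\mu,\eta)(\sigma\sigma^\top)^{-1}[B(\theta,x,\tilde\mu)-B(\theta_0,x,\mu^N)]$ is linear in $\eta$. It is bilinear in the pair of measure arguments through $G$ and $B(\theta,x,\cdot)$. Its coefficients are of polynomial growth in $x$, by Assumption~\ref{assumption:drift}.

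\textbf{Differentiability and the gradient formula.} We write $\mathcal{J}_N^i(\theta)=\int J(\theta,x^{i,N},\pi_\theta,\pi_\theta,\mu^N)\,\pi^N_{\theta_0}(\mathrm{d}x^N)$. Here only $\theta$ and $\pi_\theta$ vary, and the outer measure $\pi^N_{\theta_0}$ does not depend on $\theta$. We expand $J$ as the triple integral $\int\int j(\theta,x,y,z,\mu^N)\,\pi_\theta(\mathrm{d}y)\pi_\theta(\mathrm{d}z)$. We then differentiate in $\theta$ exactly as in the proof of Proposition~\ref{prop:ips-likelihood-n-t-limit-grad}, using the product rule and the definition of $\nu_\theta$ as a weak derivative. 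Dominated convergence is justified by polynomial growth of $b$ and $\partial_\theta b$, together with the uniform moment bounds on $\pi_\theta$ and $|\nu_\theta|$ from Assumption~\ref{ass:appendix-meanfield-c1}, and the moment bounds on $\pi^N_{\theta_0}$ from Assumption~\ref{assumption:model}. The two terms coming from differentiating $\pi_\theta(\mathrm{d}y)$ and $\pi_\theta(\mathrm{d}z)$ coincide by the symmetry of $j$ in $(y,z)$. This yields $G(\theta,x,\pi_\theta,\nu_\theta)(\sigma\sigma^\top)^{-1}[B(\theta,x,\pi_\theta)-B(\theta_0,x,\mu^N)]$, which is $H(\theta,x^{i,N},\pi_\theta,\nu_\theta,\pi_\theta,\mu^N)$.

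\textbf{The $\delta_M$ bound.} We compare this gradient with the formula for $\partial_\theta\mathcal{J}^{i,N,M}_{\mathrm{vp}}$ in Proposition~\ref{prop:j-vp}. The two integrands differ only through the substitution $(\pi_\theta^{1,M},\nu_\theta^{1,M})\leftrightarrow(\pi_\theta,\nu_\theta)$. We telescope, changing one argument at a time:
\begin{equation}
H(\ldots,\pi^{1,M}_\theta,\nu^{1,M}_\theta,\pi^{1,M}_\theta,\mu^N)-H(\ldots,\pi_\theta,\nu_\theta,\pi_\theta,\mu^N)=\Delta_1+\Delta_2+\Delta_3 .
\end{equation}
Each $\Delta_k$ has the form (bounded factor)$\times\langle \pi^{1,M}_\theta-\pi_\theta,\varphi_x\rangle$ or (bounded factor)$\times\langle\nu^{1,M}_\theta-\nu_\theta,\varphi_x\rangle$, where $|\varphi_x(y)|\le C(1+\|x\|^q)(1+\|y\|^q)$. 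Thus $|\langle\eta,\varphi_x\rangle|\le C(1+\|x\|^q)\|\eta\|_{\mathrm{TV},q}$. The remaining factors are bounded by $C(1+\|x\|^q)(1+\mu^N(\|\cdot\|^q))$ times moments of $\pi_\theta,\pi_\theta^{1,M},|\nu_\theta|,|\nu^{1,M}_\theta|$. These moments are uniformly bounded: for the limit objects by Assumption~\ref{ass:appendix-meanfield-c1}, and for the finite-$M$ objects by the same assumption plus Assumption~\ref{ass:appendix-derivative-poc}, since $\|\pi^{1,M}_\theta\|_{\mathrm{TV},r}\le\|\pi_\theta\|_{\mathrm{TV},r}+\delta_M$. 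Integrating against $\pi^N_{\theta_0}$, whose moments are uniformly bounded in $N$ by Assumption~\ref{assumption:model} and exchangeability, then gives $C\delta_M$ uniformly in $\theta$ and $i$.

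\textbf{The $\varepsilon_N$ bound.} This is the step I expect to be the main obstacle. Here the outer measure is $\pi^N_{\theta_0}$, and the last argument is the random $\mu^N$ rather than $\pi_{\theta_0}$. We write $\partial_\theta\mathcal{J}(\theta)=\int H(\theta,x,\pi_\theta,\nu_\theta,\pi_\theta,\pi_{\theta_0})\,\pi_{\theta_0}(\mathrm{d}x)$ and split the difference into two parts:
\begin{equation}
\text{(a) }\int G(\cdots)(\sigma\sigma^\top)^{-1}[B(\theta_0,x^{i,N},\pi_{\theta_0})-B(\theta_0,x^{i,N},\mu^N)]\,\mathrm{d}\pi^N_{\theta_0},
\qquad
\text{(b) }\int \Psi_\theta\,\mathrm{d}\pi^{i,N}_{\theta_0}-\int\Psi_\theta\,\mathrm{d}\pi_{\theta_0},
\end{equation}
where $\Psi_\theta(x):=H(\theta,x,\pi_\theta,\nu_\theta,\pi_\theta,\pi_{\theta_0})$.

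Term (a) is handled by Cauchy--Schwarz together with the stationary uniform propagation of chaos in Assumption~\ref{assumption:model}. We couple $\mu^N$ with $\pi_{\theta_0}$, using local Lipschitz continuity and polynomial growth of $b$ in its second argument and uniform moments. This gives $\mathbb{E}\|B(\theta_0,x^{i,N},\mu^N)-B(\theta_0,x^{i,N},\pi_{\theta_0})\|^2\to0$ at some rate, which we define to be $\varepsilon_N^2$.

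For term (b), $\Psi_\theta$ is locally Lipschitz with polynomial growth, uniformly in $\theta$. So (b) is bounded by $C\,\mathsf{W}_{1,r}(\pi^{i,N}_{\theta_0},\pi_{\theta_0})$, which tends to $0$ again by uniform propagation of chaos and moment bounds. We absorb this rate into $\varepsilon_N$ as well.

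The delicate point is making these rates explicit and uniform in $\theta$. The key observation is that $\theta$ enters only through $\pi_\theta$ and $\nu_\theta$, whose relevant moments are uniformly controlled, while the $N$-dependence enters only at the fixed parameter $\theta_0$. Finally, the combined bound in Proposition~\ref{prop:inf-n-convergence-1} is recovered by the triangle inequality.
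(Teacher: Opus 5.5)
Your proof is correct and follows essentially the same route as the paper. You differentiate under the integral as in Proposition~\ref{prop:ips-likelihood-n-t-limit-grad}, and then bound both differences using the local-Lipschitz and polynomial-growth structure of $H$ (your telescoping is exactly the content of Lemma~\ref{lem:appendix-H-lipschitz}), together with Assumption~\ref{ass:appendix-derivative-poc} and uniform propagation of chaos. The only variation is in the $\varepsilon_N$ step: you split the paper's single coupling of $(x^{i,N},\mu^N,\bar x)$ into an empirical-measure term handled by Cauchy--Schwarz and a one-particle-marginal term handled directly by $\mathsf{W}_{1,r}$. That split, like your bound $\|\pi_\theta^{1,M}\|_{\mathrm{TV},r}\le\|\pi_\theta\|_{\mathrm{TV},r}+\delta_M$ for uniform-in-$M$ moments, makes explicit a step the paper leaves implicit.
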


\begin{proof}
See Appendix~\ref{appendix:proofs-section43}.
\end{proof}

The bias decomposition is particularly informative at the true parameter $\theta_0$.
In particular, using the fact that $\partial_\theta \mathcal J(\theta_0)=0$, we have
\begin{equation}
\partial_\theta \mathcal J_{\mathrm{vp}}^{i,N,M}(\theta_0)
=
\bigl(\partial_\theta \mathcal J_{\mathrm{vp}}^{i,N,M}(\theta_0)-\partial_\theta \mathcal J_N^i(\theta_0)\bigr)
+\partial_\theta \mathcal J_N^i(\theta_0).
\end{equation}
Thus, the bias at the true parameter decomposes into (i) a virtual particle approximation error of order $\delta_M$; and (ii) a real particle bias of order $\varepsilon_N$. The former vanishes in the limit as $M\rightarrow\infty$, suggesting that the bias which remains is a finite-$N$ effect, rather than a limitation of the virtual approximation.

\subsection{Main Results}
\label{sec:main-results}
We are now ready to state our main results. We begin by characterising the asymptotic behaviour of the estimators in the limit as the time horizon $t\rightarrow\infty$, given fixed and finite numbers of real and virtual particles. For clarity, we will now write $(\theta_t^{N,M})_{t\ge0}$ and $(\vartheta_t^{N,M})_{t\ge0}$ for the estimators corresponding to the given values of $N$ and $M$.

\begin{proposition}
\label{prop:finite-n-m-convergence}
Suppose that Assumptions~\ref{assumption:learning-rate} - \ref{assumption:pgp} hold. Let $N,M\in\mathbb{N}$, $i\in[N]$, and $j,k\in[M]$. Suppose that $\mathbb{P}(\theta_t\in\Theta~\forall t\geq 0) = \mathbb{P}(\vartheta_t\in\Theta~\forall t\geq 0)=1$. Then, almost surely, it holds that 
\begin{align}
    \lim_{t\rightarrow\infty} \left\| \partial_{\theta}\mathcal{J}_{\mathrm{vp}}^{i,N,M}(\theta_t^{N,M}) \right\| = 0, \qquad \lim_{t\rightarrow\infty} \left\| \partial_{\theta}\mathcal{J}_{\mathrm{vp}}^{i,N,M}(\vartheta_t^{N,M}) \right\| = 0.
\end{align}
\end{proposition}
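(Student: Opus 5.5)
The plan is to follow the standard SGDCT argument of \citet{sirignano2017stochastic}, adapted as in \citet{sharrock2026efficient}. We apply it to the concatenated process $(\boldsymbol z_t^{N,M})_{t\ge0}$, whose virtual blocks are driven by the online parameter $\theta_t$, with the objective $\mathcal J_{\mathrm{vp}}^{i,N,M}$. We treat $(\theta_t)$ in detail; $(\vartheta_t)$ is identical, with $(H,B,G)$ replaced by $(h,b,g)$ and \eqref{eq:IPS-PGP-a} replaced by \eqref{eq:IPS-PGP-b}. By Propositions~\ref{prop:asymptotic-partial-log-lik-grad} and~\ref{prop:j-vp}, both updates have the same mean field $\partial_\theta\mathcal J_{\mathrm{vp}}^{i,N,M}$.

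First, substitute $\mathrm dx_t^{i,N}=B(\theta_0,x_t^{i,N},\mu_t^N)\mathrm dt+\sigma\mathrm dw_t^{i,N}$ to write
\begin{equation}
\mathrm d\theta_t=-\gamma_t\partial_\theta\mathcal J_{\mathrm{vp}}^{i,N,M}(\theta_t)\mathrm dt-\gamma_t F(\theta_t,\boldsymbol z_t)\mathrm dt+\gamma_t G^{i,N,M}(\theta_t,\boldsymbol z_t)\sigma^{-\top}\mathrm dw_t^{i,N},
\end{equation}
where $F:=H^{i,N,M}-\int H^{i,N,M}\,\mathrm d\Pi_\theta^{N,M}\in\mathbb G_c^{K,N,M}$. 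This uses Proposition~\ref{prop:asymptotic-partial-log-lik-grad} and Assumption~\ref{assumption:pgp}(i.a); the differentiability of $\theta\mapsto\int H\,\mathrm d\Pi_\theta$ comes from Assumption~\ref{assumption:poisson}(iii). The noise term is handled first. By Assumption~\ref{assumption:pgp}(iii.a), the moment bound in Assumption~\ref{assumption:poisson}(v), and $\int\gamma_t^2<\infty$, the martingale $\int_0^t\gamma_sG\sigma^{-\top}\mathrm dw_s$ is bounded in $L^2$. It therefore converges almost surely, so its increments over any interval tend to zero.

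The main step, and the expected obstacle, is showing that $\int_{\sigma}^{\tau}\gamma_sF(\theta_s,\boldsymbol z_s)\mathrm ds\to0$ along any intervals. Let $v$ solve $\mathcal A^{N,M}_\theta v=F$, using Assumption~\ref{assumption:poisson}(iv). Apply It\^o's formula to $\gamma_t v(\theta_t,\boldsymbol z_t)$. The key point is that the generator of the virtual blocks at frozen $\theta$ is exactly $\mathcal A^{N,M}_{\theta}$, since the virtual IPS and tangent IPS are driven by $\theta_t$ itself. Hence
\begin{equation}
\int_\sigma^\tau\gamma_sF\,\mathrm ds=\gamma_\tau v_\tau-\gamma_\sigma v_\sigma-\int_\sigma^\tau\dot\gamma_s v_s\mathrm ds-\int_\sigma^\tau\gamma_s\partial_\theta v\,\mathrm d\theta_s-\tfrac12\int_\sigma^\tau\gamma_s\,\mathrm{tr}(\partial_\theta^2v\,\mathrm d\langle\theta\rangle_s)-\int_\sigma^\tau\gamma_s\,\mathrm d\langle\partial_{\boldsymbol z}v,\theta\rangle_s-\int_\sigma^\tau\gamma_s\partial_{\boldsymbol z}v\,\Sigma_{N,M}\mathrm d\boldsymbol b_s.
\end{equation}
Each term is then controlled using the polynomial growth of $v$, $\partial_\theta v$, $\partial^2_\theta v$ and $\partial_{\boldsymbol z}\partial_\theta v$. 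The boundary terms are handled with the sup-moment bound $\mathbb E\sup_{s\le t}\|\cdot\|^q\le K_q\sqrt t$ together with $\gamma_tt^{\rho}\to0$, via a Borel--Cantelli argument along dyadic times, which gives $\gamma_t\sup_{s\le t}|v_s|\to0$ almost surely. The $\dot\gamma$ term is finite because $\int|\dot\gamma|<\infty$. All terms involving $\mathrm d\theta$ or the cross-variation carry a factor $\gamma_s^2$, so they are absolutely summable in expectation. The stochastic integral is an $L^2$-bounded martingale. The delicate part is that growth must be controlled uniformly in $\theta$, because the moment bounds in Assumption~\ref{assumption:poisson}(v) concern the coupled process; here we rely on the uniform-in-$\theta$ bounds stated there.

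Finally, we conclude with the deterministic cycle argument of \citet{sirignano2017stochastic} applied to $\partial_\theta\mathcal J_{\mathrm{vp}}^{i,N,M}$. This function is Lipschitz, since its derivative $\int\partial_\theta H\,\mathrm d\Pi_\theta+\int H\,\mathrm d\Sigma_\theta$ is bounded by Assumption~\ref{assumption:poisson}(ii)--(iii) and \eqref{eq:IPS-PGP-a}. Suppose $\|\partial_\theta\mathcal J_{\mathrm{vp}}^{i,N,M}(\theta_t)\|$ does not converge to zero. Then there are infinitely many up-crossing intervals $[\sigma_k,\tau_k]$ on which the gradient norm exceeds $\kappa/2$ and $\int_{\sigma_k}^{\tau_k}\gamma_s\mathrm ds\ge c>0$. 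Over each such interval $\mathcal J_{\mathrm{vp}}^{i,N,M}(\theta_t)$ decreases by at least a fixed amount, while the error terms vanish by the two previous steps. Because $\mathcal J_{\mathrm{vp}}^{i,N,M}\ge0$ and, on the remaining intervals, it can increase by at most a vanishing amount, this gives a contradiction. Hence the gradient norm tends to zero almost surely, which is the claim for $(\theta_t^{N,M})$, and identically for $(\vartheta_t^{N,M})$.
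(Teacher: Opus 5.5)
Your proposal is correct and follows essentially the same route as the paper. You rewrite the update as descent on $\mathcal{J}_{\mathrm{vp}}^{i,N,M}$ plus a $\Pi_\theta^{N,M}$-centred fluctuation and a martingale, show the fluctuation integral converges by solving $\mathcal{A}_\theta^{N,M}v=F$ and applying It\^o's formula to $\gamma_t v(\theta_t,\boldsymbol{z}_t^{N,M})$, and conclude with the Sirignano--Spiliopoulos cycle argument using $\mathcal{J}_{\mathrm{vp}}^{i,N,M}\ge 0$ and a bounded Hessian; your extra details (the Borel--Cantelli bound on the boundary term via $\gamma_t t^{\rho}\to 0$, and the observation that the virtual blocks are driven by $\theta_t$ so the $\boldsymbol{z}$-generator is exactly $\mathcal{A}_{\theta_t}^{N,M}$) are consistent with, and slightly more explicit than, the paper's sketch.
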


\begin{proof}
See Appendix~\ref{appendix:proofs-section44}.
\end{proof}

We next consider joint asymptotics as both the time horizon and the number of particles go to infinity. In particular, our second main result establishes that $(\theta_t)_{t\geq 0}$ and $(\vartheta_t)_{t\geq 0}$ converge to the stationary points of the mean-field objective $\mathcal{J}$ in the iterated limit as first $t\rightarrow\infty$, and then $N,M\rightarrow\infty$.

\begin{proposition}
\label{prop:inf-n-m-convergence}
Suppose that the assumptions of Proposition~\ref{prop:inf-n-convergence-1} and Proposition~\ref{prop:finite-n-m-convergence} hold. Then, almost surely,
\begin{equation}
\lim_{N,M\to\infty}\limsup_{t\to\infty}\bigl\|\partial_{\theta}\mathcal{J}(\theta_t^{N,M})\bigr\| = 0,
\qquad
\lim_{N,M\to\infty}\limsup_{t\to\infty}\bigl\|\partial_{\theta}\mathcal{J}(\vartheta_t^{N,M})\bigr\| = 0.
\end{equation}
\end{proposition}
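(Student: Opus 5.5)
The plan is to combine Proposition~\ref{prop:finite-n-m-convergence} with the uniform bias bound of Proposition~\ref{prop:inf-n-convergence-1} through a triangle inequality. We treat $(\theta_t^{N,M})_{t\ge0}$ in detail; the argument for $(\vartheta_t^{N,M})_{t\ge0}$ is identical. The point is that by Proposition~\ref{prop:j-vp} both algorithms target the same surrogate gradient $\partial_\theta\mathcal J^{i,N,M}_{\mathrm{vp}}$, and Proposition~\ref{prop:finite-n-m-convergence} applies to each of them.

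Fix $N,M\in\mathbb N$ and $i\in[N]$. For every $t\ge0$ with $\theta_t^{N,M}\in\Theta$, which holds for all $t$ almost surely by hypothesis, we write
\begin{equation}
\bigl\|\partial_\theta\mathcal J(\theta_t^{N,M})\bigr\|
\le
\bigl\|\partial_\theta\mathcal J^{i,N,M}_{\mathrm{vp}}(\theta_t^{N,M})\bigr\|
+
\sup_{\theta\in\Theta}\bigl\|\partial_\theta\mathcal J^{i,N,M}_{\mathrm{vp}}(\theta)-\partial_\theta\mathcal J(\theta)\bigr\|.
\end{equation}
The second term is deterministic. By Proposition~\ref{prop:inf-n-convergence-1} it is bounded by $C(\varepsilon_N+\delta_M)$, with $C$ independent of $N$, $M$ and $i$. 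The uniformity in $\theta$ is essential here, because $\theta_t^{N,M}$ is random and moves with $t$.

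Taking $\limsup_{t\to\infty}$, Proposition~\ref{prop:finite-n-m-convergence} makes the first term vanish on an event $\Omega_{N,M}$ of full probability. Hence $\limsup_{t\to\infty}\|\partial_\theta\mathcal J(\theta_t^{N,M})\|\le C(\varepsilon_N+\delta_M)$ on $\Omega_{N,M}$. Set $\Omega^\star:=\bigcap_{N,M\in\mathbb N}\Omega_{N,M}$. This is a countable intersection, so it still has probability one. On $\Omega^\star$ the bound holds simultaneously for all $N,M$. Letting $N,M\to\infty$ jointly, since $\varepsilon_N,\delta_M\downarrow0$, gives the claimed limit for $\theta_t^{N,M}$. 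The same bound, applied to $\vartheta_t^{N,M}$ and intersected with the corresponding full-probability events, gives the second statement.

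There is no substantive analytic obstacle, since all the work has been done in the earlier propositions. The one delicate point is measure-theoretic: the almost-sure statement must hold simultaneously for all $(N,M)$ before the limit in $N,M$ is taken. This is why we use countability of $\mathbb N^2$ to intersect the exceptional sets, rather than fixing $N,M$ after choosing $\omega$. We also check that the hypothesis $\theta_t^{N,M}\in\Theta$ for all $t$ holds for every $(N,M)$, so that the uniform bound over $\Theta$ applies along the whole trajectory.
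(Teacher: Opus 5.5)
Your proposal is correct and matches the paper's proof: the same triangle inequality splits $\|\partial_\theta\mathcal J(\theta_t^{N,M})\|$ into the surrogate-gradient term, which Proposition~\ref{prop:finite-n-m-convergence} sends to zero, and the uniform bias, which Proposition~\ref{prop:inf-n-convergence-1} bounds by $C(\varepsilon_N+\delta_M)$, and then $N,M\to\infty$. Your explicit intersection of the countably many full-probability events $\Omega_{N,M}$, so that a single null set works for all $(N,M)$, is a small but welcome addition that the paper leaves implicit.
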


\begin{proof}
See Appendix~\ref{appendix:proofs-section44}.
\end{proof}

\begin{corollary}
\label{cor:optional-iterated-parameter-consistency}
Suppose that the assumptions of Proposition~\ref{prop:inf-n-m-convergence} hold. Suppose in addition that, for every $\varepsilon>0$, $\smash{c_\varepsilon
:=
\inf\{
\|\partial_\theta \mathcal J(\theta)\|
:
\theta\in\Theta,\ \|\theta-\theta_0\|\ge \varepsilon
\}
>0
}
$. 
Then, almost surely,
\begin{equation}
\lim_{N,M\to\infty}\limsup_{t\to\infty}\|\theta_t^{N,M}-\theta_0\|=0,
\qquad
\lim_{N,M\to\infty}\limsup_{t\to\infty}\|\vartheta_t^{N,M}-\theta_0\|=0.
\end{equation}
\end{corollary}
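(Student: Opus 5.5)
The plan is to turn the gradient-level statement of Proposition~\ref{prop:inf-n-m-convergence} into a statement about parameters, using the separation condition $c_\varepsilon>0$ as a quantitative identifiability device. We argue pathwise on the almost-sure event on which the conclusion of Proposition~\ref{prop:inf-n-m-convergence} holds. We treat $(\theta_t^{N,M})$ in detail; the argument for $(\vartheta_t^{N,M})$ is identical.

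Fix $\varepsilon>0$ and set $c_\varepsilon>0$ as in the statement. By Proposition~\ref{prop:inf-n-m-convergence}, there exists $L_\varepsilon$ such that, for all $N,M\ge L_\varepsilon$, we have $\limsup_{t\to\infty}\|\partial_\theta\mathcal J(\theta_t^{N,M})\|<c_\varepsilon$. Hence there exists $T=T(N,M,\varepsilon,\omega)$ such that $\|\partial_\theta\mathcal J(\theta_t^{N,M})\|<c_\varepsilon$ for all $t\ge T$. Since $\theta_t^{N,M}\in\Theta$ almost surely for all $t$, by the assumption carried over from Proposition~\ref{prop:finite-n-m-convergence}, the definition of $c_\varepsilon$ as an infimum over $\{\theta\in\Theta:\|\theta-\theta_0\|\ge\varepsilon\}$ forces $\|\theta_t^{N,M}-\theta_0\|<\varepsilon$ for all $t\ge T$. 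This is the contrapositive: any $\theta$ at distance at least $\varepsilon$ from $\theta_0$ has gradient norm at least $c_\varepsilon$. It follows that $\limsup_{t\to\infty}\|\theta_t^{N,M}-\theta_0\|\le\varepsilon$ for all $N,M\ge L_\varepsilon$. Since $\varepsilon>0$ was arbitrary, the iterated limit is zero.

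The only delicate point is the null set. The almost-sure event from Proposition~\ref{prop:inf-n-m-convergence} must be a single event, valid simultaneously for all $N,M$. This is harmless: there are countably many pairs $(N,M)$, and we may take $\varepsilon$ rational, so a countable union of null sets is null. We also interpret the iterated limit $\lim_{N,M\to\infty}$ in the sense used in Proposition~\ref{prop:inf-n-m-convergence}, namely for all $N,M$ beyond a threshold. Beyond this, no continuity or compactness of $\Theta$ is needed, because the uniform separation $c_\varepsilon>0$ replaces any compactness argument. I do not expect a genuine obstacle here.
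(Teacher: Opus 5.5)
Your proposal is correct and follows essentially the same argument as the paper: fix $\varepsilon$, use Proposition~\ref{prop:inf-n-m-convergence} to obtain thresholds beyond which $\limsup_{t\to\infty}\|\partial_\theta\mathcal J(\theta_t^{N,M})\|<c_\varepsilon$, and then use the separation condition to conclude $\limsup_{t\to\infty}\|\theta_t^{N,M}-\theta_0\|\le\varepsilon$. The paper phrases the last step as a contradiction along a sequence $t_n\to\infty$; your direct version (gradient norm below $c_\varepsilon$ for all $t\ge T$, hence $\|\theta_t^{N,M}-\theta_0\|<\varepsilon$ for all $t\ge T$) is slightly cleaner, and your remark about taking a countable union of null sets over $(N,M)$ makes explicit what the paper leaves implicit.
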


\begin{proof}
    See Appendix~\ref{appendix:proofs-section44}.
\end{proof}

\subsection{Discussion} 
\label{sec:main-discussion}

We conclude this section with a brief discussion of the theoretical results just established, as well as the assumptions we have imposed in order to obtain these results.

\subsubsection{The Main Results} Propositions~\ref{prop:j-vp} and~\ref{prop:finite-n-m-convergence} relate to the behaviour of the two algorithms for fixed and finite values of $N$ and $M$. In particular, these results demonstrate that both estimators target the same surrogate objective $\mathcal{J}_{\mathrm{vp}}^{i,N,M}$, even though they utilise different pathwise stochastic gradient estimates. In practice, the difference between them is thus primarily a variance--cost trade-off. As discussed in Section~\ref{sec:method-discussion}, the averaged update uses more of the available virtual-particle information and thus in general can be expected to have a smaller conditional variance. On the other hand, the particlewise update is cheaper to evaluate and may be preferable when the virtual particle system is expensive to propagate.

Propositions~\ref{prop:inf-n-convergence-1}, \ref{prop:add:markovian-projection}, and~\ref{prop:add:separate-N-M} further clarify the respective roles of the real and virtual particle numbers. The virtual particle approximation error is of order $\delta_M$, while the discrepancy caused by observing only a single particle from a finite system is of order $\varepsilon_N$. In particular, the decomposition through the intermediate objective $\mathcal{J}_N^i$ shows that, even if the number of virtual particles is taken arbitrarily large, one should still expect a residual finite-$N$ bias. Proposition~\ref{prop:add:markovian-projection} identifies this bias precisely: in the single-particle observation regime, the relevant finite-$N$ target is not the full finite-particle drift itself, but rather its one-particle Markovian projection $B_{\theta_0}^{i,N}$. This is natural, since a single observed trajectory cannot recover the full configuration of the surrounding particle system. The remaining bias at fixed $N$ is therefore intrinsic to the observation regime, rather than an artefact of the virtual particle approximation.

Proposition~\ref{prop:inf-n-m-convergence} and Corollary~\ref{cor:optional-iterated-parameter-consistency} can be understood as iterated-limit results which separate optimisation error from approximation error. First, for fixed $N$ and $M$, the algorithms drive the gradient of the surrogate objective to zero. Second, as $N,M\to\infty$, the surrogate gradient converges uniformly to the gradient of the mean-field objective $\mathcal{J}$. Thus, under the additional separation condition in Corollary~\ref{cor:optional-iterated-parameter-consistency}, which is a natural identifiability assumption excluding spurious stationary points away from $\theta_0$, the parameter estimates converge to the true parameter in the iterated limit.

\subsubsection{The Main Assumptions} The most restrictive assumptions in our analysis are Assumptions~\ref{assumption:poisson}, \ref{assumption:pgp}, and~\ref{ass:appendix-derivative-poc}. These assumptions are very standard in the continuous-time stochastic approximation literature \citep[e.g.,][]{sirignano2017stochastic,surace2019online}, where Poisson equation methods are used to control fluctuation terms and replace time averages by ergodic averages. In the present setting, however, verifying such assumptions is somewhat delicate, as the relevant state variable is the concatenated process consisting of the observed IPS, two virtual IPSs, and the virtual tangent IPS. Thus, one must control not only the ergodicity and moments of the virtual IPS, but also the corresponding derivative process, whose dynamics involve the spatial derivatives of the interaction drift and may grow more quickly. We do not attempt here to give minimal or easily verifiable primitive conditions under which Assumptions~\ref{assumption:poisson} and~\ref{assumption:pgp} hold. Nonetheless, establishing such conditions (e.g., under dissipativity, convexity, or contractivity assumptions on the confinement and interaction potentials), remains an important open problem.

Another issue relates to propagation of chaos for the extended system. For the approximation result in Proposition~\ref{prop:inf-n-convergence-1}, we assumed directly that the stationary first marginal $\smash{\pi_\theta^{1,M}}$ and its parameter derivative $\smash{\nu_\theta^{1,M}}$ converge to their mean-field limits, uniformly over $\theta\in\Theta$. While this is precisely the required assumption, we do not derive it from primitive assumptions on the model. That is, we do not establish a full propagation-of-chaos result for the augmented process including the tangent variables. Doing so would require a quantitative analysis of the linearised McKean--Vlasov dynamics associated with the parameter derivative, and would likely lead to explicit rates for $\delta_M$. Such results would no doubt be of independent interest, but are beyond the scope of the current paper.

It is also worth noting that Propositions~\ref{prop:finite-n-m-convergence},~\ref{prop:inf-n-m-convergence}, and~\ref{cor:optional-iterated-parameter-consistency} all assume that the parameter processes remain in $\Theta$ for all time. This is a fairly standard assumption in continuous-time stochastic approximation, and in concrete implementations can be enforced by projected or truncated variants of the algorithms, or proved directly using model-specific Lyapunov arguments. We do not pursue a theoretical analysis of such modifications here.

\subsubsection{Related Methodology} 
Finally, it is instructive to compare our results with those obtained in the companion paper \citep{sharrock2026efficient}. The estimator studied there targets the asymptotic likelihood of the IPS itself, and requires observation of three real particles from the data-generating system. By contrast, the estimators introduced here target the mean-field objective, and require observation of only a single real particle, with the missing interaction terms replaced by virtual particles integrated at the current parameter estimate. The benefit of the present construction is therefore a substantially weaker observation requirement, which may be advantageous when additional measurements are expensive but simulation is comparatively cheap. The price paid is that the theory now depends on the behaviour of the virtual particle and tangent systems, and the natural finite-$N$ target becomes the projected objective $\mathcal{J}_N^i$ rather than the full finite-particle likelihood. In this sense, the two approaches are complementary: the estimator in \citet{sharrock2026efficient} is closer to the finite-$N$ interacting system when several trajectories can be observed, while the virtual-particle estimators developed here are tailored to the genuinely sparse regime in which only a single trajectory is available.


\section{Numerical Results}
\label{sec:numerics}
In this section, we present numerical experiments to illustrate the performance of the proposed estimators. In all cases, unless otherwise specified, we discretise the SDEs using a standard Euler--Maruyama scheme, with constant time-step $\Delta t = 0.1$.  We perform all experiments using a MacBook Pro 16'' (2021) laptop with Apple M1 Pro chip and 16GB of RAM.

\subsection{Quadratic Confinement, Quadratic Interaction}
\label{sec:numerics-quadratic}
We begin by considering a one-dimensional IPS with quadratic confinement potential and quadratic interaction potential, parametrised by $\theta=(\theta_1,\theta_2)^{\top}\in\mathbb{R}^2$, namely
\begin{equation}    
\label{eq:IPS_linear_model}
    \mathrm{d}x_t^{\theta,i,N} = \Big[-\theta_1 x_t^{\theta,i,N} - \frac{\theta_2}{N} \sum_{j=1}^N \left(x_t^{\theta,i,N} - x_t^{\theta,j,N}\right)\Big] \mathrm{d}t + \sigma\mathrm{d}w_t^{i,N},
\end{equation}
where $\sigma>0$ is a (known) diffusion coefficient, and $w^{i,N}=(w_t^{i,N})_{t\geq 0}$ are a set of independent standard Brownian motions.  In this model, $\theta_1$ is a \emph{confinement parameter}, which controls the rate at which each particle is driven towards zero, while $\theta_2$ is an \emph{interaction parameter}, which determines the strength of interaction between the particles. 

The two online parameter estimators are obtained by substituting the relevant model-specific quantities into \eqref{eq:algorithm-1} and \eqref{eq:algorithm-2}. This yields
\begin{align}
\mathrm{d}\begin{bmatrix} \theta_{t,1} \\ \theta_{t,2} \end{bmatrix} &= 
\gamma_t \begin{bmatrix} -x_t^{i,N} + \theta_{t,2}\bar{\hat{y}}_{t,1}^M \\[1mm] -(x_t^{i,N} - \bar{\hat{x}}_t^M ) + \theta_{t,2} \bar{\hat{y}}_{t,2}^M \end{bmatrix}(\sigma\sigma^\top)^{-1}\Bigg[\mathrm{d}x_t^{i,N} - \left(-\theta_{t,1} x_t^{i,N} - \theta_{t,2}(x_t^{i,N} - \bar{\tilde{x}}_t^M )\right)\mathrm{d}t \Bigg] \label{eq:algorithm-1-linear}
\intertext{or}
\mathrm{d}\begin{bmatrix} \vartheta_{t,1} \\ \vartheta_{t,2} \end{bmatrix} &= 
\gamma_t \begin{bmatrix} -x_t^{i,N} + \vartheta_{t,2} \hat{y}_{t,1}^{j,M} \\[1mm] -(x_t^{i,N} - \hat{x}_t^{j,M} ) + \vartheta_{t,2} {\hat{y}}_{t,2}^{j,M} \end{bmatrix} (\sigma\sigma^\top)^{-1}\Bigg[\mathrm{d}x_t^{i,N} - \left(-\vartheta_{t,1} x_t^{i,N} - \vartheta_{t,2}(x_t^{i,N} - \tilde{x}_t^{k,M} )\right)\mathrm{d}t \Bigg], \label{eq:algorithm-2-linear}
\end{align}
where $\bar{(\cdot)}$ denotes the empirical mean of a vector of particles, and all other terms are as  defined previously.
In this example, it is in fact possible to compute several relevant quantities in closed form (cf. Appendix~\ref{appendix:quadratic-model-proofs}). Let $\alpha:=\theta_1+\theta_2$ and $\alpha_0:=\theta_{0,1}+\theta_{0,2}$. The mean-field objective is then given by 
\begin{equation}
\mathcal J(\theta)=\frac{(\alpha-\alpha_0)^2}{4\alpha_0}.
\end{equation} 
Thus, this objective only identifies the sum $\alpha:=\theta_1+\theta_2$. In addition, for every finite $N$ and every $M\in\mathbb{N}$, we have that 
\begin{equation}
\mathcal J_{\mathrm{vp}}^{i,N,M}(\theta)=J_N^i(\theta) =
\frac{N\theta_{0,1}+\theta_{0,2}}{4N\alpha_0\theta_{0,1}}
\big(\alpha-\alpha_N^\star\big)^2
+
\frac{(N-1)\theta_{0,2}^2}{4N\big(N\theta_{0,1}+\theta_{0,2}\big)},
\label{eq:explicit-JNi-quadratic}
\end{equation}
where the minimising value of $\alpha$ is given by
\begin{equation}
\alpha_N^\star
=
\frac{N\theta_{0,1}\alpha_0}{N\theta_{0,1}+\theta_{0,2}}
=
\alpha_0-\frac{\alpha_0\theta_{0,2}}{N\theta_{0,1}+\theta_{0,2}}.
\end{equation}
This implies that the finite-particle pseudo-target is exactly independent of the number of virtual particles, and the finite-$N$ bias is of order $N^{-1}$. Similarly, if only one of the parameters is to be estimated, with the other parameter assumed known, then the pseudo-true values are given respectively by
\begin{equation}
\theta_{1,N}^\star
=
\frac{N\theta_{0,1}^2-\theta_{0,2}^2}{N\theta_{0,1}+\theta_{0,2}} = 
\theta_{0,1}-
\frac{\alpha_0\theta_{0,2}}{N\theta_{0,1}+\theta_{0,2}}, \qquad \theta_{2,N}^\star
=
\frac{(N-1)\theta_{0,1}\theta_{0,2}}{N\theta_{0,1}+\theta_{0,2}}=\theta_{0,2}-\frac{\theta_{0,2}(\theta_{0,1}+\theta_{0,2})}{N\theta_{0,1}+\theta_{0,2}}.
\end{equation}

To begin, we assume that the true parameters are given by $\smash{\theta_0 = (1.2,0.5)^{\top}}$. Meanwhile, the initial parameter estimates are given by $\smash{\theta_{\mathrm{init},1}\sim \mathcal{U}(1.5,2.5)}$ and $\smash{\theta_{\mathrm{init},2}\sim\mathcal{U}(1.0,1.5)}$, respectively. We simulate trajectories from the IPS with $N=100$ particles and for $T=2000$ iterations, with initial condition $\smash{x_0^{i,N}\sim\mathcal{N}(0,1)}$. For both estimators, we use the learning rate $\gamma_t = \frac{1}{(1+t)^{0.55}}$. Finally, we use $M=20$ virtual particles.

\begin{figure}[b!]
  \centering
  \begin{subfigure}{0.325\linewidth}
    \includegraphics[width=\linewidth]{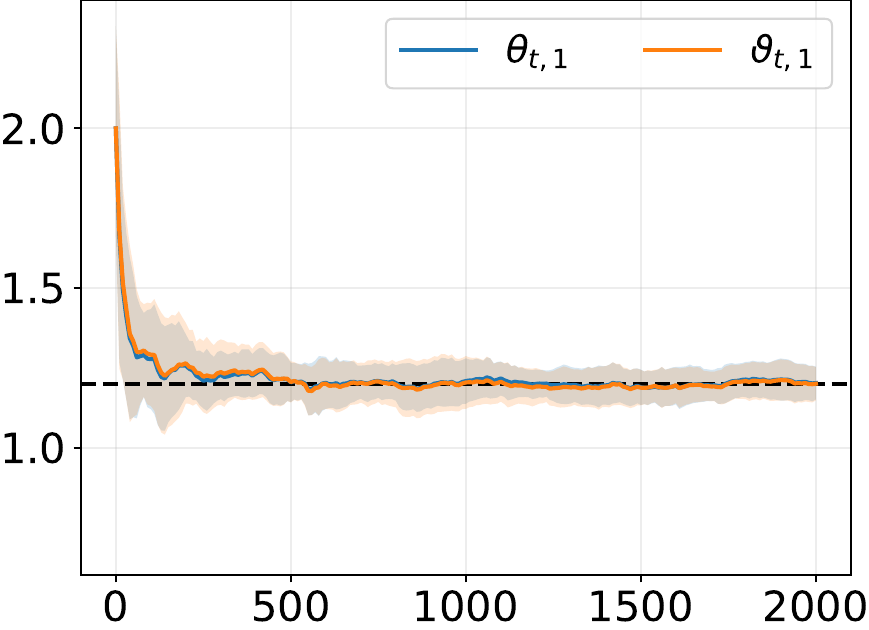}
    \caption{$\theta_1$.}
    \label{fig:1a}
  \end{subfigure}\hfill
  \begin{subfigure}{0.325\linewidth}
    \includegraphics[width=\linewidth]{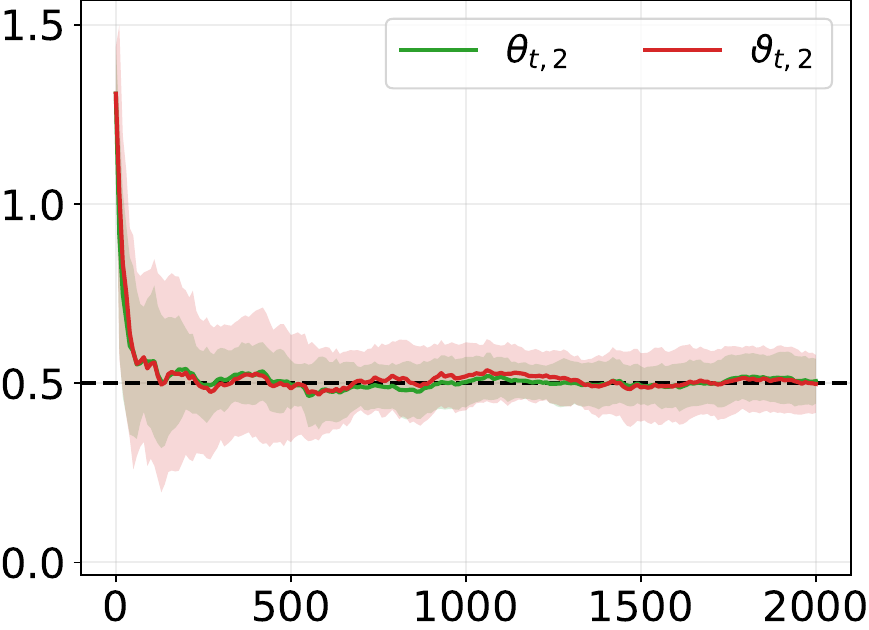}
    \caption{$\theta_2$.}
    \label{fig:1b}
  \end{subfigure}
  \begin{subfigure}{0.325\linewidth}
    \includegraphics[width=\linewidth]{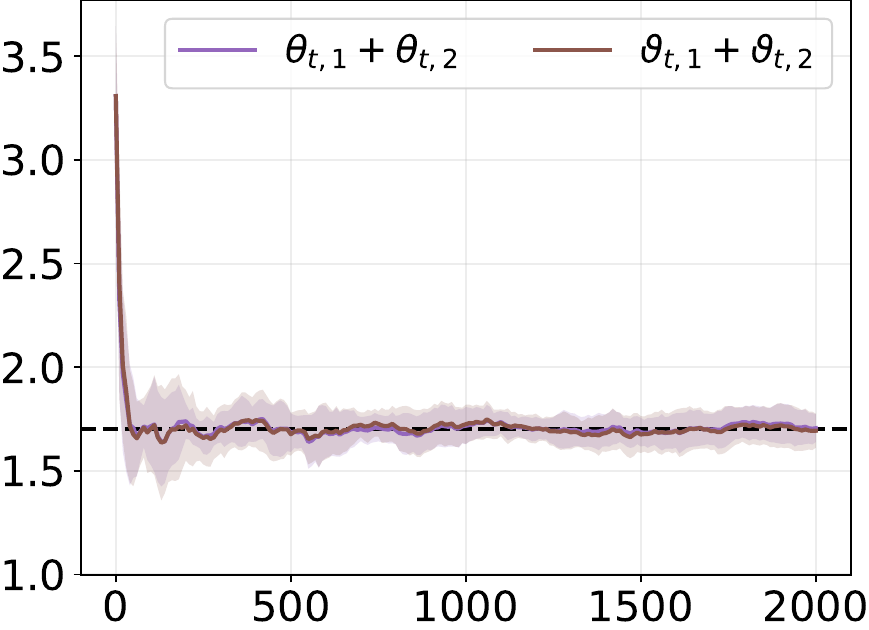}
    \caption{$\theta_1+\theta_2$.}
    \label{fig:1c}
  \end{subfigure}
  \captionsetup{width=\textwidth}
  \caption{\textbf{Online parameter estimation for a model with quadratic confinement potential and quadratic interaction potential}. We plot the sequence of online parameter estimates $\smash{(\theta_t)_{t\geq 0}}$ and $\smash{(\vartheta_t)_{t\geq 0}}$, as defined by the update equations in \eqref{eq:algorithm-1-linear} and \eqref{eq:algorithm-2-linear}. The true parameters are given by $\smash{\theta_0 = (1.2,0.5)^{\top}}$. The initial parameter estimates are given by $\smash{\theta_{\mathrm{init},1}\sim \mathcal{U}(1.5,2.5)}$ and $\smash{\theta_{\mathrm{init},2}\sim\mathcal{U}(1.0,1.5)}$.}
  \label{fig:1}
\end{figure}

The performance of the two estimators is illustrated in Figure~\ref{fig:1}. In this experiment, we report results for three cases: only $\theta_1$ is estimated (Fig.~\ref{fig:1a}); only $\theta_2$ is estimated (Fig.~\ref{fig:1b}); and both $\theta_1$ and $\theta_2$ are estimated (Fig.~\ref{fig:1c}). In the final case, since $(\theta_1,\theta_2)^{\top}$ are not jointly identifiable, we instead plot $\theta_1+\theta_2$ \citep[e.g.,][]{sharrock2023online}. In all three cases, both online parameter estimates converge to the true parameter(s). This being said, the relative performance of the two estimators does vary  somewhat based on the parameter to be estimated. In particular, in the case where only the confinement parameter is to be estimated (Fig.~\ref{fig:1a}), the evolution of both parameter estimates (blue, orange) is essentially identical. On the other hand, in the case where only the interaction parameter is to be estimated (Fig.~\ref{fig:1b}), the variance of the averaged estimator (green) is somewhat smaller than the variance of the second estimator (red). This difference is also present when both parameters are to be estimated (Fig.~\ref{fig:1c}), although here it is less evident.

\begin{figure}[t!]
  \centering
  \begin{subfigure}{0.245\linewidth}
    \includegraphics[width=\linewidth]{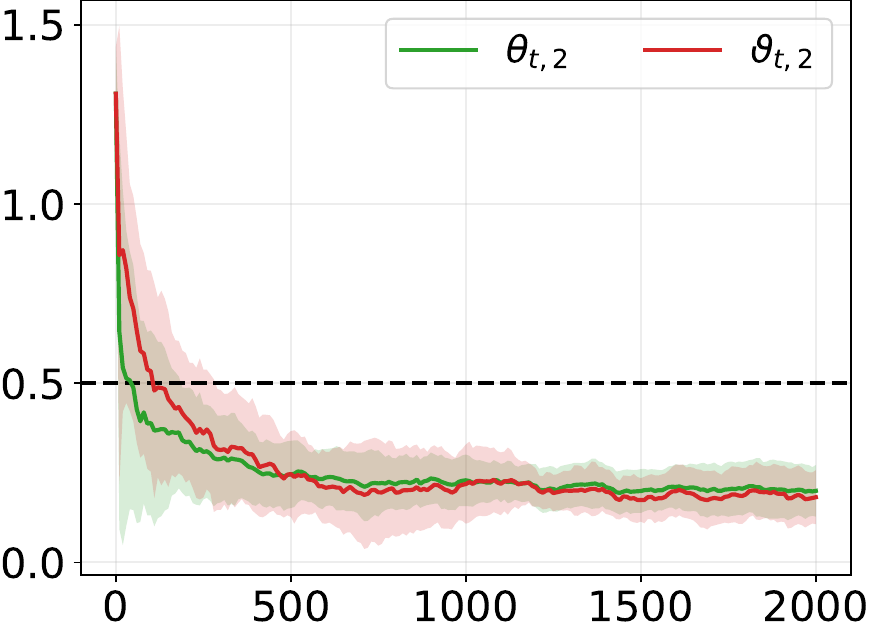}
    \caption{$N=2$.}
    \label{fig:2a}
  \end{subfigure}\hfill
  \begin{subfigure}{0.245\linewidth}
    \includegraphics[width=\linewidth]{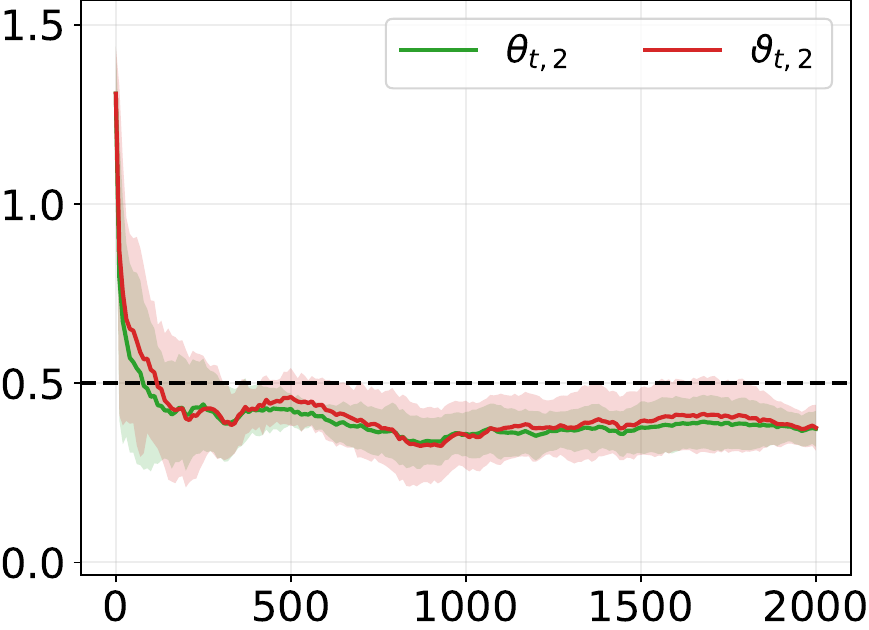}
    \caption{$N=5$.}
    \label{fig:2b}
  \end{subfigure}
  \begin{subfigure}{0.245\linewidth}
    \includegraphics[width=\linewidth]{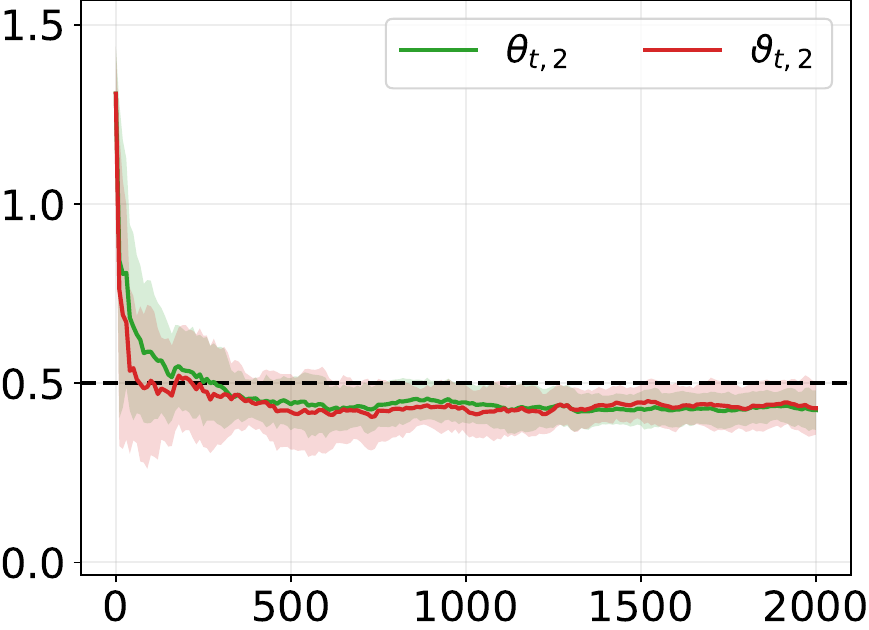}
    \caption{$N=10$.}
    \label{fig:2c}
  \end{subfigure}\hfill
  \begin{subfigure}{0.245\linewidth}
    \includegraphics[width=\linewidth]{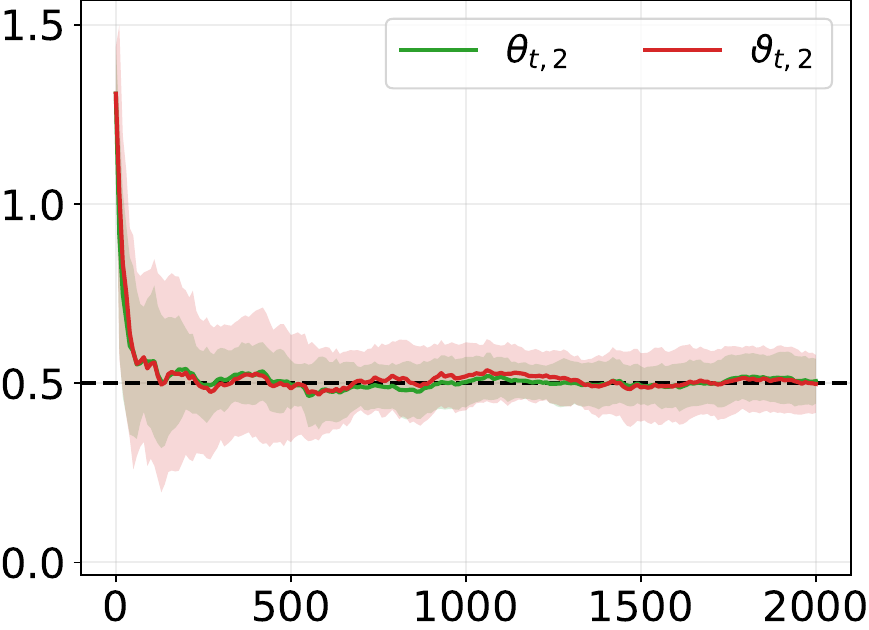}
    \caption{$N=100$.}
    \label{fig:2d}
  \end{subfigure}
  \captionsetup{width=\textwidth}
  \caption{\textbf{Online parameter estimation for the interaction parameter in a model with quadratic confinement potential and quadratic interaction potential}. We plot the sequence of online parameter estimates $\smash{(\theta_{t,2})_{t\geq 0}}$ and $\smash{(\vartheta_{t,2})_{t\geq 0}}$, as defined by the update equations in \eqref{eq:algorithm-1-linear} and \eqref{eq:algorithm-2-linear}, for $N\in\{2,5,10,100\}$.  The true parameters are given by $\smash{\theta_0 = (1.2,0.5)^{\top}}$, with the first parameter assumed known. The initial parameter estimates are given by $\smash{\theta_{\mathrm{init},2}\sim\mathcal{U}(1.0,1.5)}$.}
  \label{fig:2}
\end{figure}

In Figures~\ref{fig:2} and~\ref{fig:3}, we continue to investigate the performance of the two online parameter estimators, now as a function of the number of particles in the data-generating IPS. It is worth emphasising that this is \emph{not} the same as the number of \emph{observed} particles, which remains 1 for both estimators. Our results indicate a clear dependence on this value: as the number of particles increases, the final parameter estimates become progressively closer to the true parameter (Fig.~\ref{fig:2}), and the corresponding \(\mathsf L^2\) error decreases (Fig.~\ref{fig:3}). This behaviour is consistent with our theoretical results. In particular, our estimators are only expected to converge in the limit as $N\rightarrow\infty$ (cf. Proposition~\ref{prop:inf-n-m-convergence}). Meanwhile, for fixed and finite numbers of particles, the two estimators will instead converge to the stationary points (e.g., the minimiser) of the finite-particle surrogate objective (cf. Proposition~\ref{prop:finite-n-m-convergence}). For small numbers of particles, this can differ appreciably from the minimiser of the true target (i.e., the true parameter), leading to a significant asymptotic bias (Fig.~\ref{fig:2a}, Fig.~\ref{fig:2b}).

\begin{figure}[t!]
  \centering
  \begin{subfigure}{0.45\linewidth}
    \includegraphics[trim=0 10mm 0 0, clip, width=\linewidth]{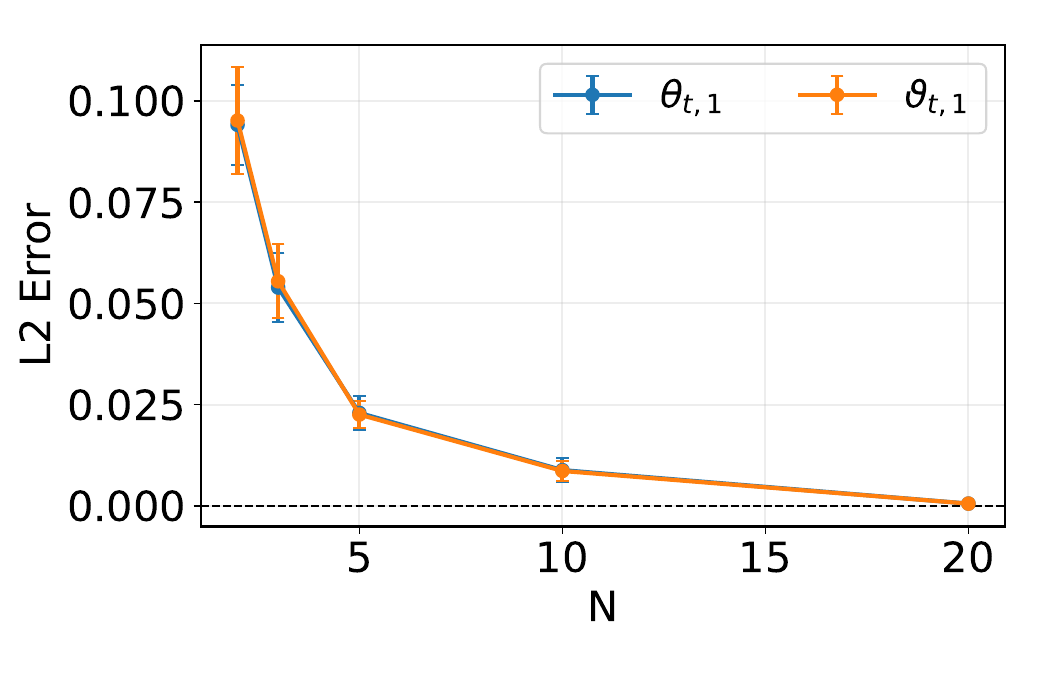}
    \caption{$\theta_1$.}
    \label{fig:3a}
  \end{subfigure}
  \begin{subfigure}{0.45\linewidth}
    \includegraphics[trim=0 10mm 0 0, clip, width=\linewidth]{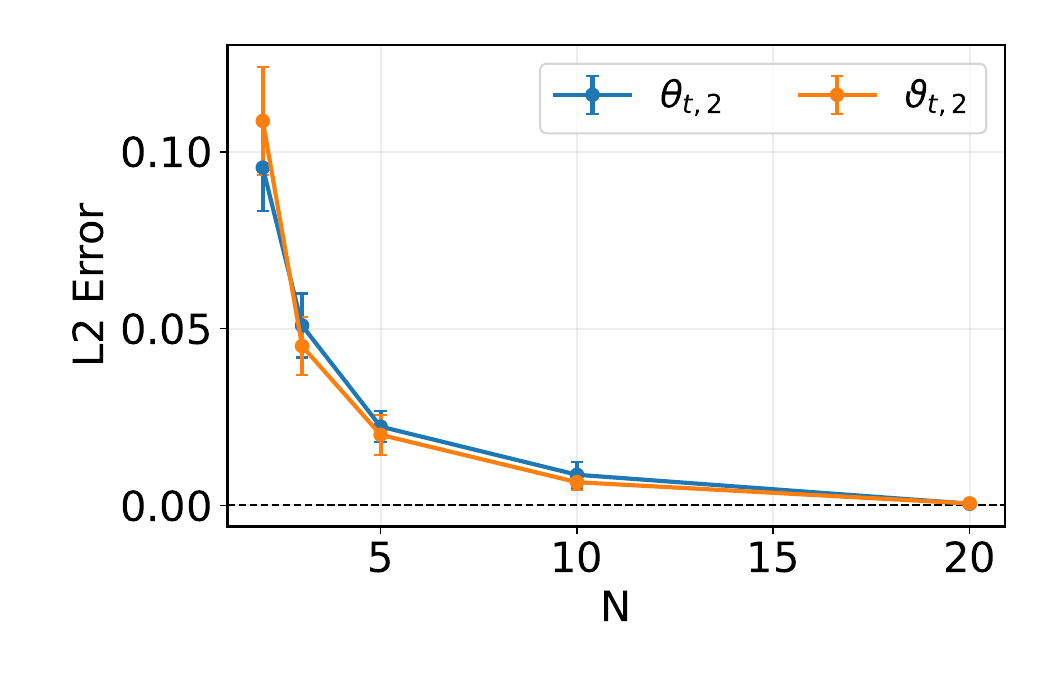}
    \caption{$\theta_2$.}
    \label{fig:3b}
  \end{subfigure}
  \captionsetup{width=\textwidth}
  \caption{\textbf{The $\mathrm{L}^2$ error of the averaged and the non-averaged estimators, for a model with quadratic confinement potential and quadratic interaction potential.} We plot the $\mathrm{L}^2$ error for both estimators after $T=50,000$ iterations, for $N\in\{3,5,10,25,50\}$ particles.}
  \label{fig:3}
\end{figure}

Finally, in Figure~\ref{fig:4}, we consider the performance of the estimators as a function of $M$, the number of virtual particles used in each algorithm. The results are once again consistent with our theoretical results (cf.\ Proposition~\ref{prop:add:separate-N-M} and Appendix~\ref{appendix:quadratic-model-proofs}). In particular, the asymptotic $\mathsf L^2$ error of both parameter estimators is essentially invariant to the value of $M$. Thus, even for small or modest values of $M$ both parameter estimates converge to the true parameter for sufficiently large values of $N$. Conversely, even when the value of $M$ is large, both parameter estimates exhibit a non-negligible asymptotic bias for small values of $N$.

\begin{figure}[b!]
  \centering
  \begin{subfigure}{0.45\linewidth}
    \includegraphics[trim=0 10mm 0 0, clip, width=\linewidth]{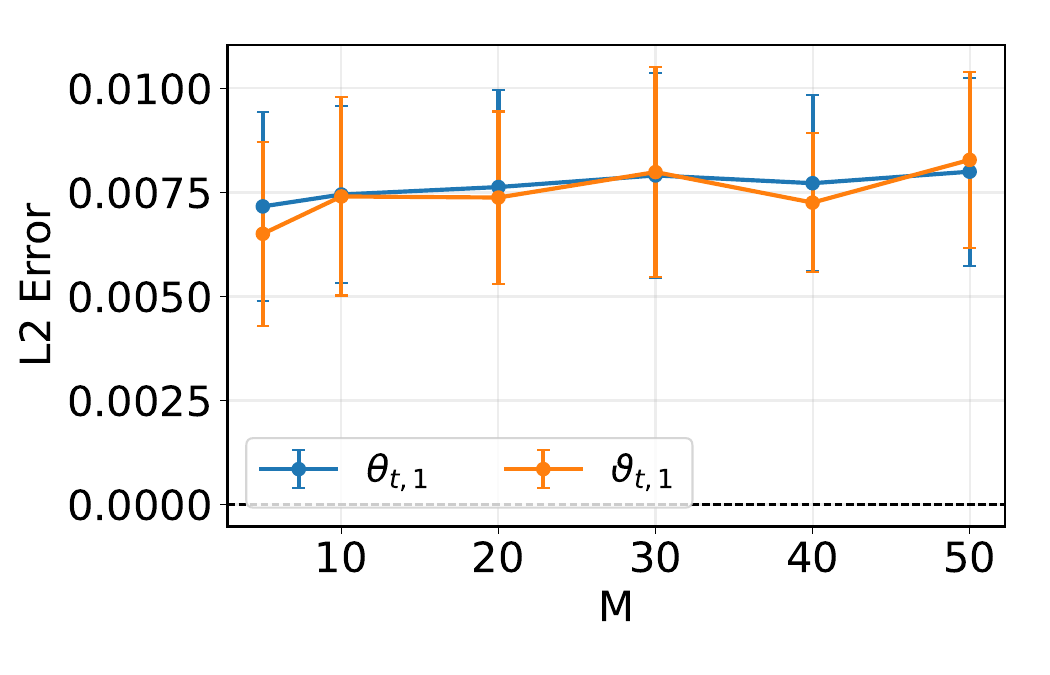}
    \caption{$\theta_1$.}
    \label{fig:4a}
  \end{subfigure}
  \begin{subfigure}{0.45\linewidth}
    \includegraphics[trim=0 10mm 0 0, clip, width=\linewidth]{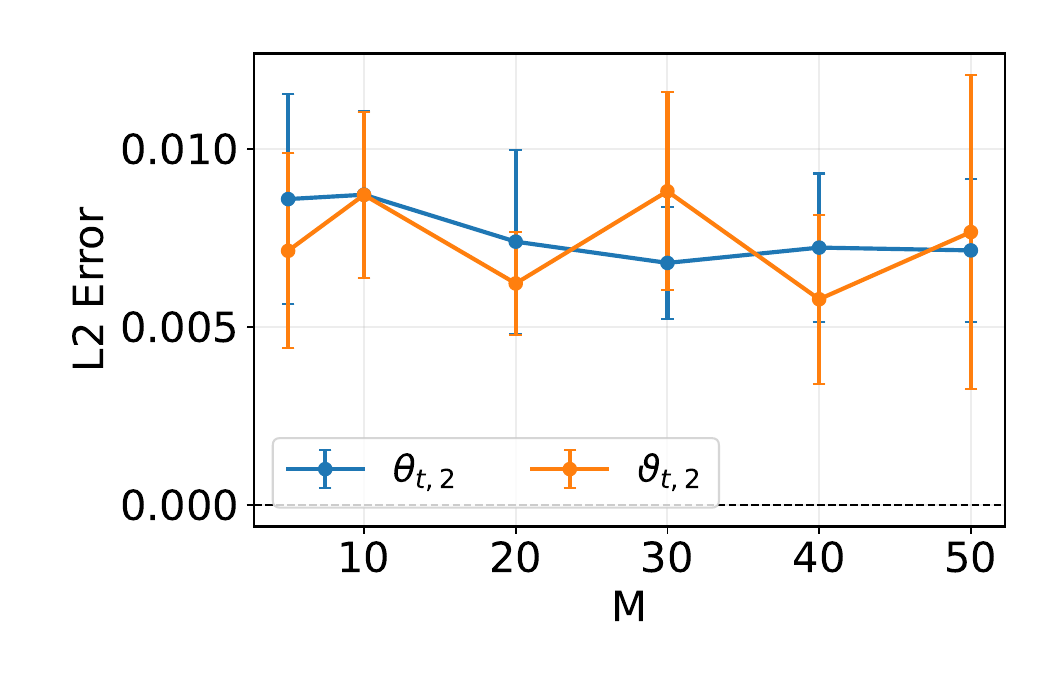}
    \caption{$\theta_2$.}
    \label{fig:4b}
  \end{subfigure}
  \captionsetup{width=\textwidth}
  \caption{\textbf{The $\mathrm{L}^2$ error of the averaged and the non-averaged estimators, for a model with quadratic confinement potential and quadratic interaction potential.} We plot the $\mathrm{L}^2$ error for both estimators after $T=50,000$ iterations, for $M\in\{5,10,20,30,40,50\}$ virtual particles.}
  \label{fig:4}
  \vspace{-5mm}
\end{figure}

\subsection{Stochastic FitzHugh--Nagumo Model}
\label{sec:fitzhugh}
We next consider a stochastic FitzHugh--Nagumo model \citep[e.g.,][]{baladron2012meanfield,luccon2021periodicity}, parametrised by $\theta = (\theta_1,\theta_2,\theta_3, \theta_4)^{\top}\in\mathbb{R}^4$, and defined by
\begin{align}
\mathrm{d}x_t^{\theta,i,N} &= \Big[ \theta_1 \Big(x_t^{\theta,i,N} - \frac{1}{3}(x_t^{\theta,i,N})^3 - y_t^{\theta,i,N}\Big) - \frac{\theta_2}{N}\sum_{j=1}^N (x_t^{\theta,i,N} - x_t^{\theta,j,N})\Big] \mathrm{d}t + \sigma\mathrm{d}w_t^{i,N} \label{eq:fitzhugh_nagumo1_IPS} \\
\mathrm{d}y_t^{\theta,i,N}& = \Big[ x_t^{\theta,i,N} + \theta_3 - \theta_4 y_{t}^{\theta,i,N} \Big] \mathrm{d}t. \label{eq:fitzhugh_nagumo2_IPS} 
\end{align}
This model originates in neuroscience, modelling the evolution of a collection of neurons of FitzHugh--Nagumo type, each being represented by its voltage $x_t^i$ and recovery variable $y_t^i$, and coupled through a linear mean-field interaction which corresponds to a coupling via electrical synapses. In this case, the model is degenerate, and thus we cannot use Girsanov's theorem to obtain a likelihood function. We thus use a minor modification of the original objective (and the resulting algorithms), in which the inner product is no longer weighted by the inverse of the diffusion coefficient; see also \citet{sharrock2026efficient}. 

We report illustrative results for our two estimators in the case that the first three parameters are to be (jointly) estimated, and the final parameter is known and fixed equal to the ground truth. We assume that the true parameter $\theta_0 = (\theta_{0,1},\theta_{0,2},\theta_{0,3},\theta_{0,4})^{\top}=(0.9,0.4,0.1,1.0)^{\top}$. Meanwhile, the initial parameter estimates are given by $\smash{\theta_{\mathrm{init},1}\sim \mathcal{U}(0.0,0.5)}$,  $\smash{\theta_{\mathrm{init},2}\sim \mathcal{U}(0.5,1.0)}$, and $\smash{\theta_{\mathrm{init},3}\sim \mathcal{U}(0.0,0.5)}$. We simulate trajectories from the IPS with $N\in\{3,5,10,20,50,100\}$ particles and for $T=5000$ iterations, with initial condition $x_0^{i,N}\sim \mathcal{N}(0,1)$ and $y_0^{i,N}\sim \mathcal{N}(0,1)$. For both estimators, we use the learning rate $\gamma_t = \frac{0.5}{(1+t)^{0.55}}$. Finally, we use $M=20$ virtual particles.

\begin{figure}[b!]
  \centering
  \begin{subfigure}{0.32\linewidth}
    \includegraphics[width=\linewidth]{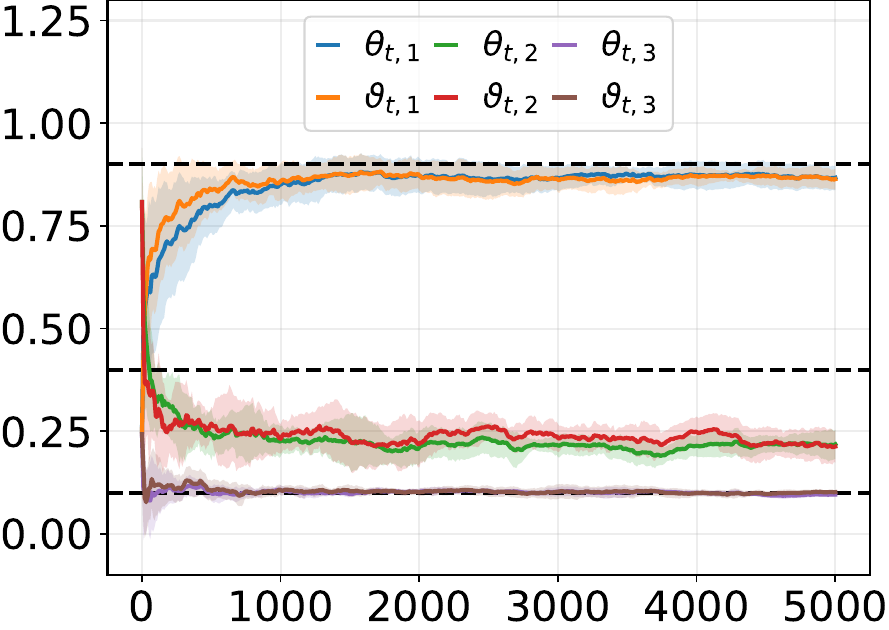}
    \caption{$N=3$.}
    \label{fig:5a}
  \end{subfigure}\hfill
  \begin{subfigure}{0.32\linewidth}
    \includegraphics[width=\linewidth]{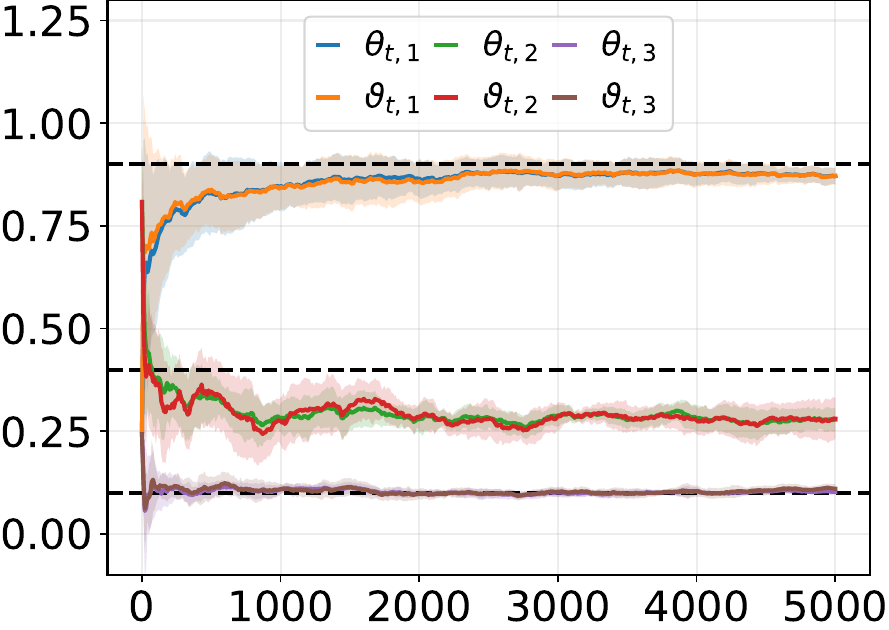}
    \caption{$N=5$.}
    \label{fig:5b}
  \end{subfigure}
  \begin{subfigure}{0.32\linewidth}
    \includegraphics[width=\linewidth]{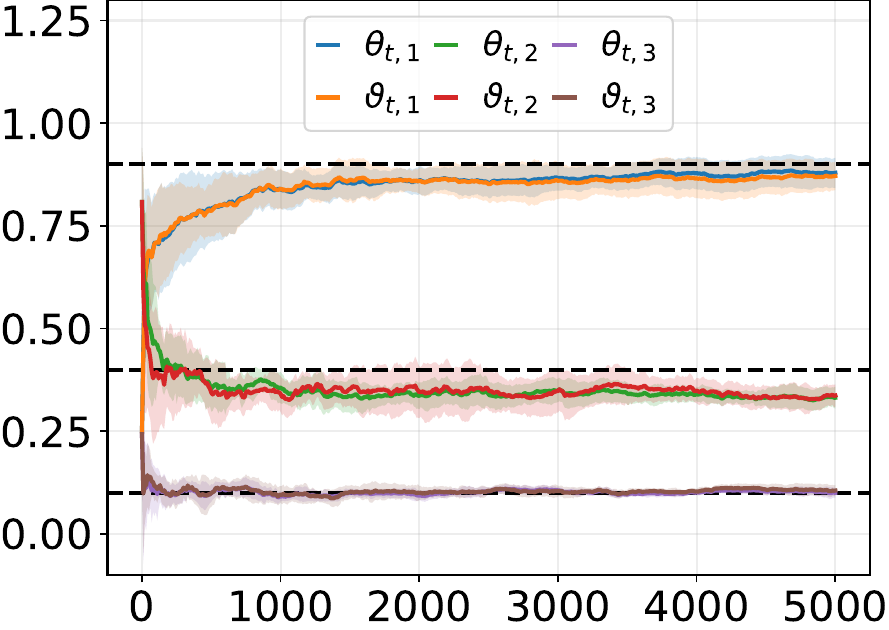}
    \caption{$N=10$.}
    \label{fig:5c}
  \end{subfigure} \\[4mm]
  \begin{subfigure}{0.325\linewidth}
    \includegraphics[width=\linewidth]{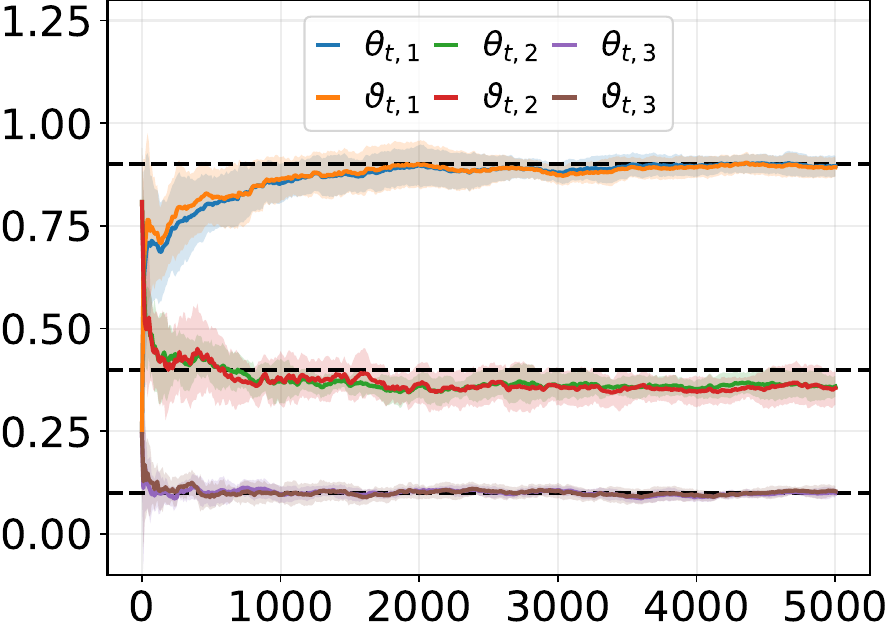}
    \caption{$N=20$.}
    \label{fig:5d}
  \end{subfigure}\hfill
  \begin{subfigure}{0.32\linewidth}
    \includegraphics[width=\linewidth]{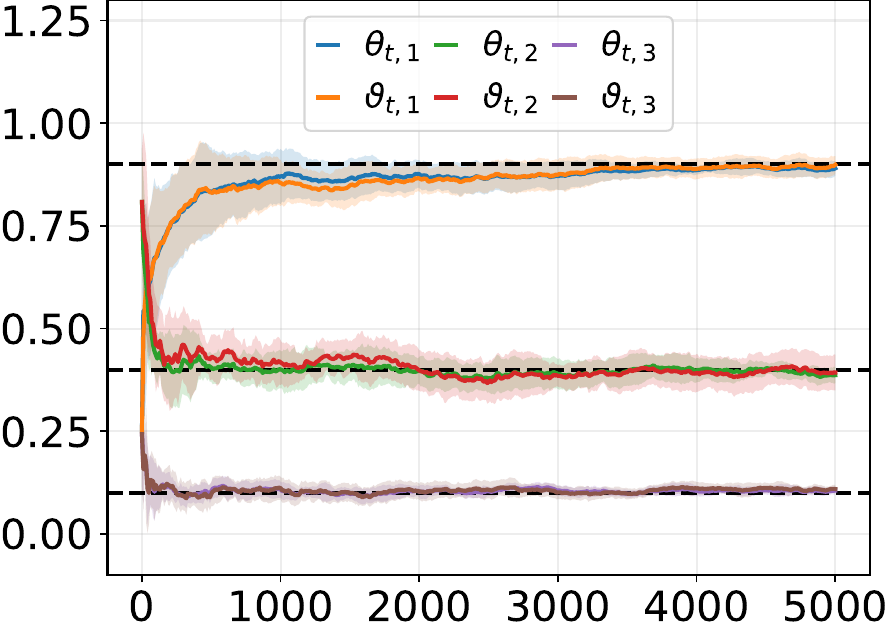}
    \caption{$N=50$.}
    \label{fig:5e}
  \end{subfigure}
  \begin{subfigure}{0.325\linewidth}
    \includegraphics[width=\linewidth]{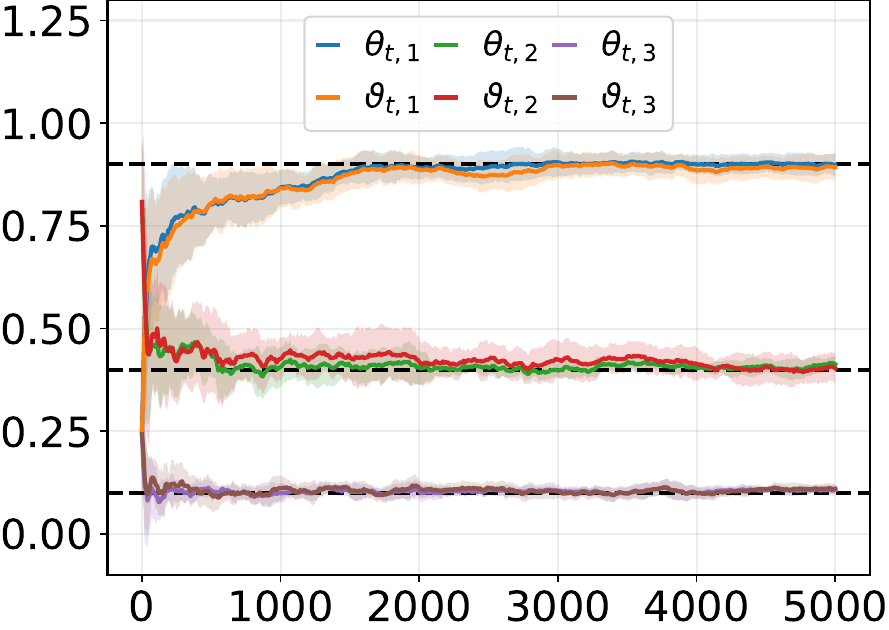}
    \caption{$N=100$.}
    \label{fig:5f}
  \end{subfigure}
  \captionsetup{width=\textwidth}
  \caption{\textbf{Online parameter estimation for the stochastic FitzHugh--Nagumo model}. We plot the sequence of online parameter estimates $\smash{(\theta_t)_{t\geq 0}}$ and $\smash{(\vartheta_t)_{t\geq 0}}$, for $N\in \{3,5,10,20,50,100\}$. The true parameters are given by $\theta_{0,1}=0.9$, $\theta_{0,2}=0.4$ and $\theta_{0,3}=0.1$, and $\theta_{0,4}=1.0$, with the final parameter assumed known. The initial parameter estimates are given by $\smash{\theta_{\mathrm{init},1}\sim \mathcal{U}(0.0,0.5)}$,  $\smash{\theta_{\mathrm{init},2}\sim \mathcal{U}(0.5,1.0)}$, and $\smash{\theta_{\mathrm{init},3}\sim \mathcal{U}(0.0,0.5)}$.}
  \label{fig:5}
\end{figure}

The results, shown in Figure~\ref{fig:5}, are largely consistent with the previous experiment. First, provided that the value of $N$ is sufficiently large, both of our estimators converge to the true parameter values (Fig.~\ref{fig:5e}, Fig.~\ref{fig:5f}). Second, for at least some of the parameters, the (asymptotic) bias of both estimators decreases (monotonically) as a function of $N$. Thus, in particular, the estimators exhibit a non-negligible asymptotic bias for both the confinement parameter $\theta_1$ (orange, blue) and the interaction parameter $\theta_2$ (green, red) when the value of $N$ is small (Fig.~\ref{fig:5a}, Fig.~\ref{fig:5b}, Fig.~\ref{fig:5c}). Interestingly, in this case, both estimators appear to converge to the correct value of $\theta_3$ regardless of the number of particles. Finally, for the interaction parameter $\theta_2$, the variance of the averaged estimator (green) is smaller than the variance of the non-averaged estimator (red), with little difference for the other two parameters.

\subsection{Stochastic Kuramoto Model}
Finally, we consider the stochastic Kuramoto model on the one-dimensional torus \(\mathbb T^{1}\) \citep[e.g.,][]{kuramoto1981rhythms,sakaguchi1988phase,acebron2005kuramoto,bertini2010dynamical}, viz
\begin{equation}
    \mathrm{d}x_t^{\theta,i,N} = -\frac{\theta}{N}\sum_{j=1}^N \sin(x_t^{\theta,i,N}-x_t^{\theta,j,N})\,\mathrm{d}t + \sigma\mathrm{d}w_t^{i,N}.
\end{equation}
where $\theta\in\mathbb{R}$ is the coupling strength. This system of interacting particles models the synchronisation of noisy oscillators interacting through their phases, and finds application in various fields including physics, chemistry, and biology; see, e.g., \cite{acebron2005kuramoto} and references therein. One interesting feature of this model is that its mean-field limit  exhibits a phase transition \citep[e.g.,][]{bertini2010dynamical}. In particular, when $\sigma>\sigma_c$, for some critical noise strength $\sigma_c$, the noise dominates and there is a unique invariant distribution (i.e., the uniform distribution). On the other hand, when $\sigma<\sigma_c$, there exists a family of non-trivial coherent equilibria, and the population tends to synchronise. Equivalently, given a fixed value of $\sigma>0$, there is a unique invariant distribution when $\theta<\theta_c$, and multiple invariant distributions when $\theta>\theta_c$, for some critical coupling strength $\theta_c:=\sigma^2$. 

We illustrate the performance of our estimators in Figure~\ref{fig:6}. In this case, we simulate trajectories from the IPS with $N\in\{3,10,50\}$ particles, and for $T=10000$ iterations. We use $M=20$ virtual particles, and a constant step size of $\gamma=0.02$. We also now consider a \emph{time-varying} specification of the true parameter:
\begin{align}
    \theta_{0,t} &= \begin{cases}
    \theta_{0,1}, & t\in[0,5000),\\
    \theta_{0,2}, & t\in[5000,10000],
    \end{cases}
\label{eq:true-param-kuramoto}
\end{align}
where $\theta_{0,1} = 1.5$ and $\theta_{0,2} = 0.2$. We also assume that $\sigma=1.0 \implies \theta_c = \sigma^2 = 1.0$. Thus, the true coupling strength is initially above its critical value (since $\theta_{0,1}>\theta_{c}$), and then subsequently below its critical value (since $\theta_{0,2}<\theta_{c}$). While, strictly speaking, this scenario is outside the scope of our theoretical results, it demonstrates another advantage of our online estimation procedure in comparison to a batch or offline approach. In particular, our estimators are able to accurately track changes in the true parameter in real time.

\begin{figure}[b!]
  \centering
  \begin{subfigure}{0.325\linewidth}
    \includegraphics[width=\linewidth]{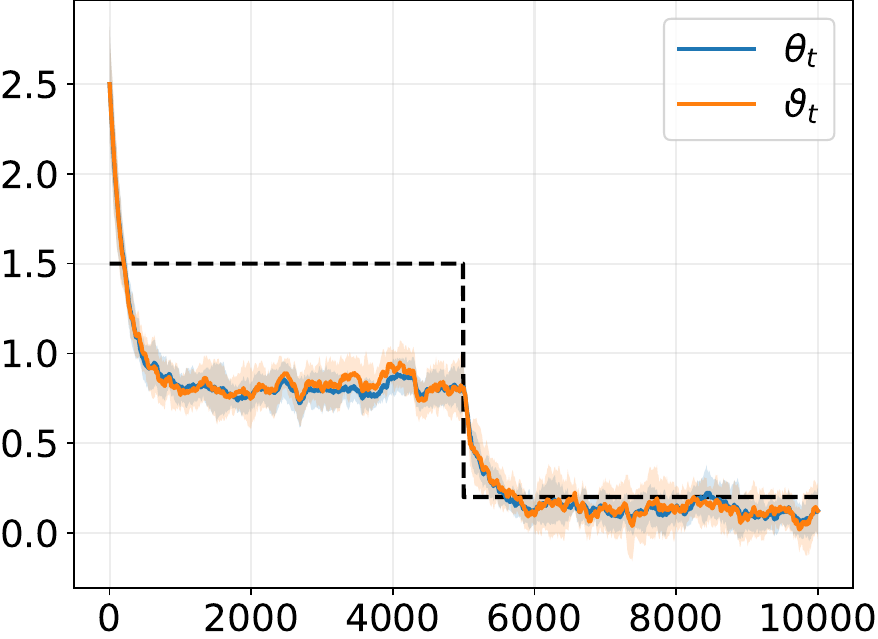}
    \caption{$N=3$.}
    \label{fig:6a}
  \end{subfigure}
  \hfill
  \begin{subfigure}{0.325\linewidth}
    \includegraphics[width=\linewidth]{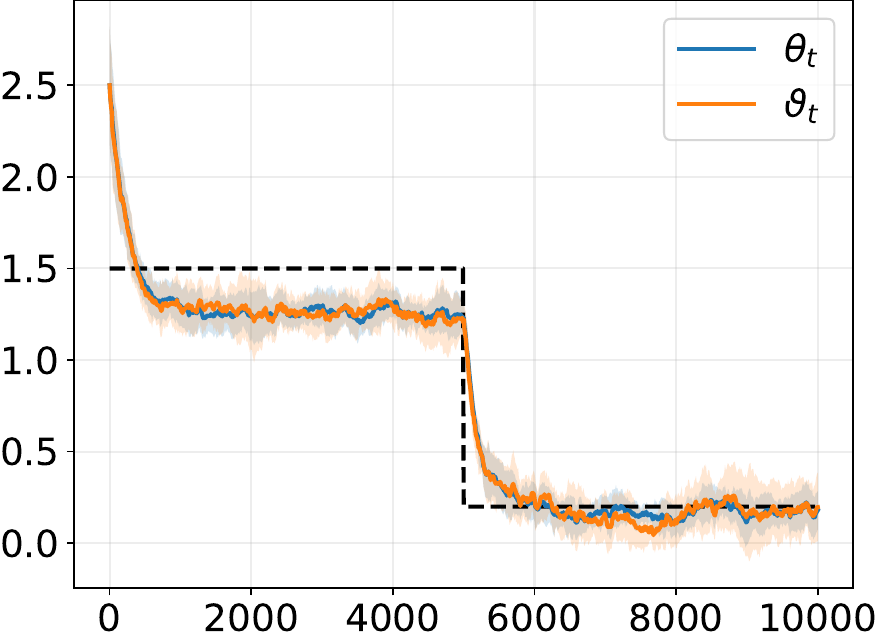}
    \caption{$N=10$.}
    \label{fig:6b}
  \end{subfigure}
  \hfill
  \begin{subfigure}{0.325\linewidth}
    \includegraphics[width=\linewidth]{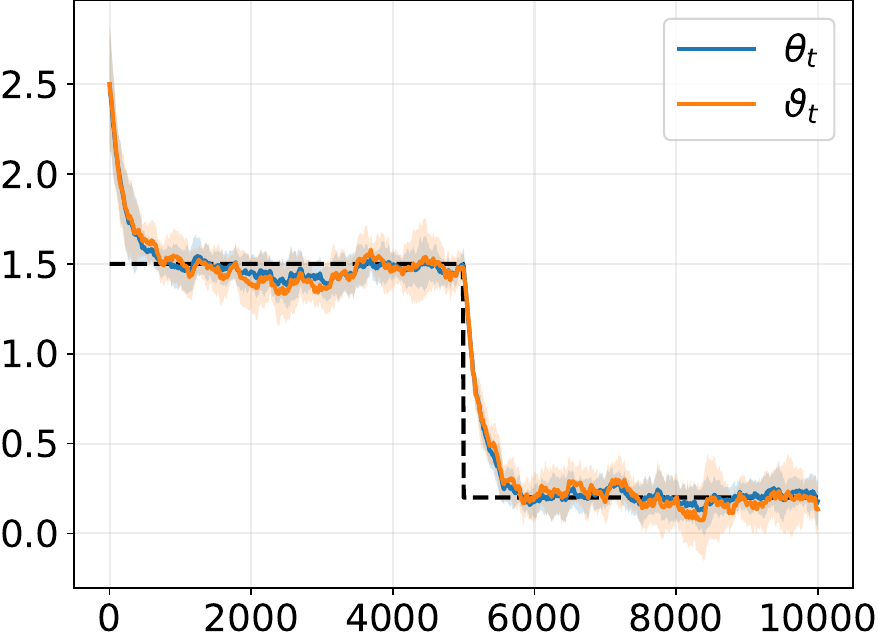}
    \caption{$N=50$.}
    \label{fig:6c}
  \end{subfigure}
  \captionsetup{width=\textwidth}
  \caption{\textbf{Online parameter estimation for the stochastic Kuramoto model}. We plot the sequence of online parameter estimates $\smash{(\theta_t)_{t\geq 0}}$ and $\smash{(\vartheta_t)_{t\geq 0}}$, for $N\in\{3,10,50\}$. The true time-varying parameter is given in \eqref{eq:true-param-kuramoto}. Meanwhile, the initial parameter estimate is given by $\smash{\theta_{\mathrm{init}}\sim \mathcal{U}[2,3]}$.}
  \label{fig:6}
\end{figure}


\section{Conclusion}
\label{sec:conclusion}

In this paper, we introduced two new algorithms for online parameter estimation in interacting particle systems based on continuous observation of a single particle. In both cases, our estimators made use of a collection of auxiliary virtual particles, which enabled a tractable approximation of law-dependent interaction terms that would otherwise be unavailable under such a limited observation regime. Under suitable assumptions, we established convergence of the proposed estimators to the stationary points of finite-particle surrogate objectives as \(t \to \infty\). In the iterated limit \(t\to\infty\) followed by \(N,M\to\infty\), we then showed convergence to the stationary points of the asymptotic log-likelihood of the limiting mean-field process. Finally, we illustrated the effectiveness of our approach in several examples of practical interest, including a model with quadratic confinement and quadratic interaction, a stochastic FitzHugh--Nagumo model of interacting neurons, and a stochastic Kuramoto model.

There are several natural directions for future work. On the theoretical side, the main unresolved question is whether it is possible to obtain easier-to-verify sufficient conditions under which the joint process consisting of the observed particle system, the virtual particle system, and the virtual tangent particle system is ergodic and satisfies uniform-in-time propagation of chaos, and for which the relevant Poisson equations admit unique solutions which satisfy suitable polynomial growth conditions. In this direction, the techniques recently developed in \citet{wang2022forward} are likely to be useful. From a methodological perspective, it would also be interesting to develop adaptive or variance-reduced versions of the algorithm, for example by dynamically tuning the number of virtual particles or by incorporating coupling or control-variate ideas. Finally, it would be interesting to investigate whether the same ideas can be adapted to richer classes of interacting systems, for example models with common noise, heterogeneous particles, or non-exchangeable interaction structure.

 \section*{Acknowledgements} G.A.P. is partially supported by an ERC-EPSRC Frontier Research Guarantee through Grant No. EP/X038645, ERC Advanced Grant No. 247031, and a Leverhulme Trust Senior Research Fellowship, SRF\textbackslash{}R1\textbackslash{}241055.

\bibliography{references}

\begin{thebibliography}{}

\bibitem[Acebr{\'o}n et~al., 2005]{acebron2005kuramoto}
Acebr{\'o}n, J.~A., Bonilla, L.~L., P{\'e}rez~Vicente, C.~J., Ritort, F., and
  Spigler, R. (2005).
\newblock The {Kuramoto} model: A simple paradigm for synchronization
  phenomena.
\newblock {\em Reviews of Modern Physics}, 77(1):137--185.

\bibitem[Amorino et~al., 2025]{amorino2025polynomial}
Amorino, C., Belomestny, D., Pilipauskait{\.e}, V., Podolskij, M., and Zhou,
  S.-Y. (2025).
\newblock Polynomial rates via deconvolution for nonparametric estimation in
  {McKean--Vlasov} {SDEs}.
\newblock {\em Probability Theory and Related Fields}, 193:539--584.

\bibitem[Amorino et~al., 2023]{amorino2023parameter}
Amorino, C., Heidari, A., Pilipauskait{\.e}, V., and Podolskij, M. (2023).
\newblock Parameter estimation of discretely observed interacting particle
  systems.
\newblock {\em Stochastic Processes and their Applications}, 163:350--386.

\bibitem[Amorino and Pilipauskait{\.e}, 2024]{amorino2024kinetic}
Amorino, C. and Pilipauskait{\.e}, V. (2024).
\newblock Kinetic interacting particle system: parameter estimation from
  complete and partial discrete observations.
\newblock {\em arXiv preprint arXiv:2410.10226}.

\bibitem[Baladron et~al., 2012]{baladron2012meanfield}
Baladron, J., Fasoli, D., Faugeras, O., and Touboul, J. (2012).
\newblock Mean-field description and propagation of chaos in networks of
  {Hodgkin--Huxley} and {FitzHugh--Nagumo} neurons.
\newblock {\em Journal of Mathematical Neuroscience}, 2(1):10.

\bibitem[Bashiri, 2020]{bashiri2020longtime}
Bashiri, K. (2020).
\newblock On the long-time behaviour of {McKean--Vlasov} paths.
\newblock {\em Electronic Communications in Probability}, 25:1--14.

\bibitem[Bauer et~al., 2018]{bauer2018strong}
Bauer, M., Meyer-Brandis, T., and Proske, F. (2018).
\newblock Strong solutions of mean-field stochastic differential equations with
  irregular drift.
\newblock {\em Electronic Journal of Probability}, 23:1--35.

\bibitem[Belomestny et~al., 2024]{belomestny2024nonparametric}
Belomestny, D., Podolskij, M., and Zhou, S.-Y. (2024).
\newblock On nonparametric estimation of the interaction function in particle
  system models.
\newblock {\em arXiv preprint arXiv:2402.14419}.

\bibitem[Benachour et~al., 1998]{benachour1998nonlinear}
Benachour, S., Roynette, B., Talay, D., and Vallois, P. (1998).
\newblock Nonlinear self-stabilizing processes {I}: Existence, invariant
  probability, propagation of chaos.
\newblock {\em Stochastic Processes and their Applications}, 75(2):173--201.

\bibitem[Benedetto et~al., 1997]{benedetto1997kinetic}
Benedetto, D., Caglioti, E., and Pulvirenti, M. (1997).
\newblock A kinetic equation for granular media.
\newblock {\em Mathematical Modelling and Numerical Analysis}, 31(5):615--641.

\bibitem[Bertini et~al., 2010]{bertini2010dynamical}
Bertini, L., Giacomin, G., and Pakdaman, K. (2010).
\newblock Dynamical aspects of mean field plane rotators and the {Kuramoto}
  model.
\newblock {\em Journal of Statistical Physics}, 138(1--3):270--290.

\bibitem[Bhudisaksang and Cartea, 2021]{bhudisaksang2021online}
Bhudisaksang, T. and Cartea, {\'A}. (2021).
\newblock Online drift estimation for jump-diffusion processes.
\newblock {\em Bernoulli}, 27(4):2494--2518.

\bibitem[Bishwal, 2011]{bishwal2011estimation}
Bishwal, J. P.~N. (2011).
\newblock Estimation in interacting diffusions: Continuous and discrete
  sampling.
\newblock {\em Applied Mathematics}, 2(9):1154--1158.

\bibitem[Bolley et~al., 2013]{bolley2013uniform}
Bolley, F., Gentil, I., and Guillin, A. (2013).
\newblock Uniform convergence to equilibrium for granular media.
\newblock {\em Archive for Rational Mechanics and Analysis}, 208(2):429--445.

\bibitem[Bourguin et~al., 2026]{bourguin2026quantitative}
Bourguin, S., Dhama, S.~S., and Spiliopoulos, K. (2026).
\newblock Quantitative fluctuation analysis for continuous-time stochastic
  gradient descent via {Malliavin} calculus.
\newblock {\em arXiv preprint arXiv:2603.07149}.

\bibitem[Buckdahn et~al., 2017]{buckdahn2017meanfield}
Buckdahn, R., Li, J., and Ma, J. (2017).
\newblock A mean-field stochastic control problem with partial observations.
\newblock {\em The Annals of Applied Probability}, 27(5):3201--3245.

\bibitem[Burger et~al., 2007]{burger2007aggregation}
Burger, M., Capasso, V., and Morale, D. (2007).
\newblock On an aggregation model with long and short range interactions.
\newblock {\em Nonlinear Analysis: Real World Applications}, 8(3):939--958.

\bibitem[Canuto et~al., 2012]{canuto2012eulerian}
Canuto, C., Fagnani, F., and Tilli, P. (2012).
\newblock An {Eulerian} approach to the analysis of {Krause}'s consensus
  models.
\newblock {\em SIAM Journal on Control and Optimization}, 50(1):243--265.

\bibitem[Cardaliaguet et~al., 2019]{cardaliaguet2019master}
Cardaliaguet, P., Delarue, F., Lasry, J.-M., and Lions, P.-L. (2019).
\newblock {\em The Master Equation and the Convergence Problem in Mean Field
  Games}, volume 201 of {\em Annals of Mathematics Studies}.
\newblock Princeton University Press, Princeton, NJ.

\bibitem[Cardaliaguet and Lehalle, 2018]{cardaliaguet2018mean}
Cardaliaguet, P. and Lehalle, C.-A. (2018).
\newblock Mean field game of controls and an application to trade crowding.
\newblock {\em Mathematics and Financial Economics}, 12(3):335--363.

\bibitem[Carmona and Delarue, 2018]{carmona2018probabilistic}
Carmona, R. and Delarue, F. (2018).
\newblock {\em Probabilistic Theory of Mean Field Games with Applications I}.
\newblock Springer-Verlag, Cham, Switzerland.

\bibitem[Carrillo et~al., 2020]{carrillo2020longtime}
Carrillo, J.~A., Gvalani, R.~S., Pavliotis, G.~A., and Schlichting, A. (2020).
\newblock Long-time behaviour and phase transitions for the {McKean--Vlasov}
  equation on the torus.
\newblock {\em Archive for Rational Mechanics and Analysis}, 235(1):635--690.

\bibitem[Carrillo et~al., 2006]{carrillo2006contractions}
Carrillo, J.~A., McCann, R.~J., and Villani, C. (2006).
\newblock Contractions in the 2-{Wasserstein} length space and thermalization
  of granular media.
\newblock {\em Archive for Rational Mechanics and Analysis}, 179(2):217--263.

\bibitem[Cattiaux et~al., 2008]{cattiaux2008probabilistic}
Cattiaux, P., Guillin, A., and Malrieu, F. (2008).
\newblock Probabilistic approach for granular media equations in the
  non-uniformly convex case.
\newblock {\em Probability Theory and Related Fields}, 140(1--2):19--40.

\bibitem[Chaintron and Diez, 2022a]{chaintron2022propagation}
Chaintron, L.-P. and Diez, A. (2022a).
\newblock Propagation of chaos: A review of models, methods and applications.
  {I}. models and methods.
\newblock {\em Kinetic and Related Models}, 15(6):895--1015.

\bibitem[Chaintron and Diez, 2022b]{chaintron2022propagationa}
Chaintron, L.-P. and Diez, A. (2022b).
\newblock Propagation of chaos: A review of models, methods and applications.
  {II}. applications.
\newblock {\em Kinetic and Related Models}, 15(6):1017--1173.

\bibitem[Chaudru~de Raynal, 2020]{chaudruderaynal2020strong}
Chaudru~de Raynal, P.-E. (2020).
\newblock Strong well posedness of {McKean--Vlasov} stochastic differential
  equations with {H{\"o}lder} drift.
\newblock {\em Stochastic Processes and their Applications}, 130(1):79--107.

\bibitem[Chazelle et~al., 2017]{chazelle2017wellposedness}
Chazelle, B., Jiu, Q., Li, Q., and Wang, C. (2017).
\newblock Well-posedness of the limiting equation of a noisy consensus model in
  opinion dynamics.
\newblock {\em Journal of Differential Equations}, 263(1):365--397.

\bibitem[Chen, 2021]{chen2021maximum}
Chen, X. (2021).
\newblock Maximum likelihood estimation of potential energy in interacting
  particle systems from single-trajectory data.
\newblock {\em Electronic Communications in Probability}, 26:1--13.

\bibitem[Comte and Genon-Catalot, 2023]{comte2022nonparametric}
Comte, F. and Genon-Catalot, V. (2023).
\newblock Nonparametric adaptive estimation for interacting particle systems.
\newblock {\em Scandinavian Journal of Statistics}, 50(4):1716--1755.

\bibitem[Comte et~al., 2025]{comte2024nonparametric}
Comte, F., Genon-Catalot, V., and Lar{\'e}do, C. (2025).
\newblock Nonparametric moment method for scalar {McKean--Vlasov} stochastic
  differential equations.
\newblock {\em ESAIM: Probability and Statistics}, 29:400--449.

\bibitem[Crisan and Xiong, 2010]{crisan2010approximate}
Crisan, D. and Xiong, J. (2010).
\newblock Approximate {McKean--Vlasov} representations for a class of {SPDEs}.
\newblock {\em Stochastics}, 82(1):53--68.

\bibitem[Delgadino et~al., 2023]{delgadino2023phase}
Delgadino, M.~G., Gvalani, R.~S., Pavliotis, G.~A., and Smith, S.~A. (2023).
\newblock Phase transitions, logarithmic {Sobolev} inequalities, and
  uniform-in-time propagation of chaos for weakly interacting diffusions.
\newblock {\em Communications in Mathematical Physics}, 401:275--323.

\bibitem[Della~Maestra and Hoffmann, 2022]{dellamaestra2022nonparametric}
Della~Maestra, L. and Hoffmann, M. (2022).
\newblock Nonparametric estimation for interacting particle systems:
  {McKean--Vlasov} models.
\newblock {\em Probability Theory and Related Fields}, 182(1):551--613.

\bibitem[Della~Maestra and Hoffmann, 2023]{dellamaestra2023lan}
Della~Maestra, L. and Hoffmann, M. (2023).
\newblock The {LAN} property for {McKean--Vlasov} models in a mean-field
  regime.
\newblock {\em Stochastic Processes and their Applications}, 155:109--146.

\bibitem[Durmus et~al., 2020]{durmus2020elementary}
Durmus, A., Eberle, A., Guillin, A., and Zimmer, R. (2020).
\newblock An elementary approach to uniform in time propagation of chaos.
\newblock {\em Proceedings of the American Mathematical Society},
  148(12):5387--5398.

\bibitem[Eberle et~al., 2019]{eberle2019quantitative}
Eberle, A., Guillin, A., and Zimmer, R. (2019).
\newblock Quantitative {Harris}-type theorems for diffusions and
  {McKean--Vlasov} processes.
\newblock {\em Transactions of the American Mathematical Society},
  371(10):7135--7173.

\bibitem[Genon-Catalot and Lar{\'e}do, 2024a]{genoncatalot2022inference}
Genon-Catalot, V. and Lar{\'e}do, C. (2024a).
\newblock Inference for ergodic {McKean--Vlasov} stochastic differential
  equations with polynomial interactions.
\newblock {\em Annales de l'Institut Henri Poincar{\'e} (B) Probabilit{\'e}s et
  Statistiques}, 60(4):2668--2693.

\bibitem[Genon-Catalot and Lar{\'e}do, 2024b]{genoncatalot2023parametric}
Genon-Catalot, V. and Lar{\'e}do, C. (2024b).
\newblock Parametric inference for ergodic {McKean--Vlasov} stochastic
  differential equations.
\newblock {\em Bernoulli}, 30(3):1971--1997.

\bibitem[Gerencs{\'e}r et~al., 1984]{gerencser1984continuoustime}
Gerencs{\'e}r, L., Gy{\"o}ngy, I., and Michaletzky, G. (1984).
\newblock Continuous-time recursive maximum likelihood method: a new approach
  to {Ljung}'s scheme.
\newblock {\em IFAC Proceedings Volumes}, 17(2):683--686.

\bibitem[Gerencs{\'e}r and Prokaj, 2009]{gerencser2009recursive}
Gerencs{\'e}r, L. and Prokaj, V. (2009).
\newblock Recursive identification of continuous-time linear stochastic
  systems: convergence w.p.\ 1 and in {$L^q$}.
\newblock In {\em Proceedings of the 2009 European Control Conference (ECC)},
  pages 1209--1214.

\bibitem[Giesecke et~al., 2020]{giesecke2020inference}
Giesecke, K., Schwenkler, G., and Sirignano, J.~A. (2020).
\newblock Inference for large financial systems.
\newblock {\em Mathematical Finance}, 30(1):3--46.

\bibitem[Goddard et~al., 2022]{goddard2022noisy}
Goddard, B.~D., Gooding, B., Short, H., and Pavliotis, G.~A. (2022).
\newblock Noisy bounded confidence models for opinion dynamics: the effect of
  boundary conditions on phase transitions.
\newblock {\em IMA Journal of Applied Mathematics}, 87(1):80--110.

\bibitem[Heidari and Podolskij, 2025]{heidari2025local}
Heidari, A. and Podolskij, M. (2025).
\newblock Local asymptotic normality for discretely observed mckean-vlasov
  diffusions.
\newblock {\em arXiv preprint arXiv:2511.13366}.

\bibitem[Hu et~al., 2021]{hu2020meanfield}
Hu, K., Ren, Z., Šiška, D., and Szpruch, {\L{}}. (2021).
\newblock {Mean-field Langevin dynamics and energy landscape of neural
  networks}.
\newblock {\em Annales de l'Institut Henri Poincaré, Probabilités et
  Statistiques}, 57(4):2043 -- 2065.

\bibitem[Huang and Wang, 2019]{huang2019distribution}
Huang, X. and Wang, F.-Y. (2019).
\newblock Distribution dependent {SDEs} with singular coefficients.
\newblock {\em Stochastic Processes and their Applications},
  129(11):4747--4770.

\bibitem[Iguchi et~al., 2025]{iguchi2025parameter}
Iguchi, Y., Beskos, A., and Pavliotis, G.~A. (2025).
\newblock Parameter estimation for weakly interacting hypoelliptic diffusions.
\newblock {\em arXiv preprint arXiv:2508.04287}.

\bibitem[Jasra et~al., 2025]{jasra2025parameter}
Jasra, A., Maama, M., and Tempone, R.~F. (2025).
\newblock Parameter estimation for partially observed {McKean--Vlasov}
  diffusions.
\newblock {\em Royal Society Open Science}, 12(12):251918.

\bibitem[Jasra and Wu, 2025]{jasra2025bayesian}
Jasra, A. and Wu, A. (2025).
\newblock Bayesian parameter estimation for partially observed {McKean--Vlasov}
  diffusions using multilevel {Markov} chain {Monte Carlo}.
\newblock {\em Statistics and Computing}, 35(6):210.

\bibitem[Jourdain et~al., 2008]{jourdain2008nonlinear}
Jourdain, B., M{\'e}l{\'e}ard, S., and Woyczynski, W.~A. (2008).
\newblock Nonlinear {SDEs} driven by {L{\'e}vy} processes and related {PDEs}.
\newblock {\em ALEA: Latin American Journal of Probability and Mathematical
  Statistics}, 4:1--29.

\bibitem[Kasonga, 1990]{kasonga1990maximum}
Kasonga, R.~A. (1990).
\newblock Maximum likelihood theory for large interacting systems.
\newblock {\em SIAM Journal on Applied Mathematics}, 50(3):865--875.

\bibitem[Kuramoto, 1981]{kuramoto1981rhythms}
Kuramoto, Y. (1981).
\newblock Rhythms and turbulence in populations of chemical oscillators.
\newblock {\em Physica A: Statistical Mechanics and its Applications},
  106(1--2):128--143.

\bibitem[Lacker, 2018]{lacker2018strong}
Lacker, D. (2018).
\newblock On a strong form of propagation of chaos for {McKean--Vlasov}
  equations.
\newblock {\em Electronic Communications in Probability}, 23:1--11.

\bibitem[Lacker, 2023]{lacker2023hierarchies}
Lacker, D. (2023).
\newblock Hierarchies, entropy, and quantitative propagation of chaos for mean
  field diffusions.
\newblock {\em Probability and Mathematical Physics}, 4(2):377--432.

\bibitem[Lacker and Le~Flem, 2023]{lacker2023sharp}
Lacker, D. and Le~Flem, L. (2023).
\newblock Sharp uniform-in-time propagation of chaos.
\newblock {\em Probability Theory and Related Fields}, 187(1--2):443--480.

\bibitem[Lang and Lu, 2023]{lang2021identifiability}
Lang, Q. and Lu, F. (2023).
\newblock Identifiability of interaction kernels in mean-field equations of
  interacting particles.
\newblock {\em Foundations of Data Science}, 5(4):480--502.

\bibitem[Levanony et~al., 1994]{levanony1994recursive}
Levanony, D., Shwartz, A., and Zeitouni, O. (1994).
\newblock Recursive identification in continuous-time stochastic processes.
\newblock {\em Stochastic Processes and their Applications}, 49(2):245--275.

\bibitem[Liu and Wang, 2016]{liu2016stein}
Liu, Q. and Wang, D. (2016).
\newblock Stein variational gradient descent: A general purpose {Bayesian}
  inference algorithm.
\newblock In {\em Proceedings of the 30th Annual Conference on Neural
  Information Processing Systems (NeurIPS 2016)}.

\bibitem[Lu et~al., 2021]{lu2021learning}
Lu, F., Maggioni, M., and Tang, S. (2021).
\newblock Learning interaction kernels in heterogeneous systems of agents from
  multiple trajectories.
\newblock {\em Journal of Machine Learning Research}, 22(32):1--67.

\bibitem[Lu et~al., 2019]{lu2019nonparametric}
Lu, F., Zhong, M., Tang, S., and Maggioni, M. (2019).
\newblock Nonparametric inference of interaction laws in systems of agents from
  trajectory data.
\newblock {\em Proceedings of the National Academy of Sciences},
  116(29):14424--14433.

\bibitem[Lu{\c c}on and Poquet, 2021]{luccon2021periodicity}
Lu{\c c}on, E. and Poquet, C. (2021).
\newblock Periodicity induced by noise and interaction in the kinetic
  mean-field {FitzHugh--Nagumo} model.
\newblock {\em The Annals of Applied Probability}, 31(2):561--593.

\bibitem[Malrieu, 2001]{malrieu2001logarithmic}
Malrieu, F. (2001).
\newblock Logarithmic {Sobolev} inequalities for some nonlinear {PDE}'s.
\newblock {\em Stochastic Processes and their Applications}, 95(1):109--132.

\bibitem[Malrieu, 2003]{malrieu2003convergence}
Malrieu, F. (2003).
\newblock Convergence to equilibrium for granular media equations and their
  {Euler} schemes.
\newblock {\em The Annals of Applied Probability}, 13(2):540--560.

\bibitem[McKean, 1966]{mckean1966class}
McKean, Jr., H.~P. (1966).
\newblock A class of {Markov} processes associated with nonlinear parabolic
  equations.
\newblock {\em Proceedings of the National Academy of Sciences of the United
  States of America}, 56(6):1907--1911.

\bibitem[Mei et~al., 2018]{mei2018mean}
Mei, S., Montanari, A., and Nguyen, P.-M. (2018).
\newblock A mean field view of the landscape of two-layer neural networks.
\newblock {\em Proceedings of the National Academy of Sciences},
  115(33):E7665--E7671.

\bibitem[M{\'e}l{\'e}ard, 1996]{meleard1996asymptotic}
M{\'e}l{\'e}ard, S. (1996).
\newblock Asymptotic behaviour of some interacting particle systems;
  {McKean--Vlasov} and {Boltzmann} models.
\newblock In Talay, D. and Tubaro, L., editors, {\em Probabilistic Models for
  Nonlinear Partial Differential Equations}, volume 1627 of {\em Lecture Notes
  in Mathematics}, pages 42--95. Springer, Berlin, Heidelberg.

\bibitem[Mishura and Veretennikov, 2020]{mishura2020existence}
Mishura, Y.~S. and Veretennikov, A.~Y. (2020).
\newblock Existence and uniqueness theorems for solutions of {McKean--Vlasov}
  stochastic equations.
\newblock {\em Theory of Probability and Mathematical Statistics}, 103:59--101.

\bibitem[Nickl et~al., 2025]{nickl2025bayesian}
Nickl, R., Pavliotis, G.~A., and Ray, K. (2025).
\newblock Bayesian nonparametric inference in {McKean--Vlasov} models.
\newblock {\em The Annals of Statistics}, 53(1):170--193.

\bibitem[Oelschl{\"a}ger, 1984]{oelschlager1984martingale}
Oelschl{\"a}ger, K. (1984).
\newblock A martingale approach to the law of large numbers for weakly
  interacting stochastic processes.
\newblock {\em The Annals of Probability}, 12(2):458--479.

\bibitem[{\O}ksendal, 2003]{oksendal2003stochastic}
{\O}ksendal, B. (2003).
\newblock {\em Stochastic Differential Equations: An Introduction with
  Applications}.
\newblock Springer-Verlag, 6th edition.

\bibitem[Pavliotis et~al., 2025]{pavliotis2025filtered}
Pavliotis, G.~A., Reich, S., and Zanoni, A. (2025).
\newblock Filtered data based estimators for stochastic processes driven by
  colored noise.
\newblock {\em Stochastic Processes and their Applications}, 181:104558.

\bibitem[Pavliotis and Zanoni, 2022]{pavliotis2022eigenfunction}
Pavliotis, G.~A. and Zanoni, A. (2022).
\newblock Eigenfunction martingale estimators for interacting particle systems
  and their mean field limit.
\newblock {\em SIAM Journal on Applied Dynamical Systems}, 21(4):2338--2370.

\bibitem[Pavliotis and Zanoni, 2024]{pavliotis2024method}
Pavliotis, G.~A. and Zanoni, A. (2024).
\newblock A method of moments estimator for interacting particle systems and
  their mean field limit.
\newblock {\em SIAM/ASA Journal on Uncertainty Quantification}, 12(2):262--288.

\bibitem[Pavliotis and Zanoni, 2026]{pavliotis2026fourierbased}
Pavliotis, G.~A. and Zanoni, A. (2026).
\newblock A {Fourier-based} inference method for learning interaction kernels
  in particle systems.
\newblock {\em SIAM Journal on Applied Mathematics}, 86(2):615--643.

\bibitem[Rotskoff and Vanden-Eijnden, 2022]{rotskoff2022trainability}
Rotskoff, G.~M. and Vanden-Eijnden, E. (2022).
\newblock Trainability and accuracy of artificial neural networks: An
  interacting particle system approach.
\newblock {\em Communications on Pure and Applied Mathematics},
  75(9):1889--1935.

\bibitem[Sakaguchi et~al., 1988]{sakaguchi1988phase}
Sakaguchi, H., Shinomoto, S., and Kuramoto, Y. (1988).
\newblock Phase transitions and their bifurcation analysis in a large
  population of active rotators with mean-field coupling.
\newblock {\em Progress of Theoretical Physics}, 79(3):600--607.

\bibitem[Sharrock, 2022a]{sharrock2022theory}
Sharrock, L. (2022a).
\newblock {\em {On the Theory and Applications of Stochastic Gradient Descent
  in Continuous Time}}.
\newblock {PhD} thesis, Imperial College London.

\bibitem[Sharrock, 2022b]{sharrock2022twotimescale}
Sharrock, L. (2022b).
\newblock Two-timescale stochastic approximation for bilevel optimisation
  problems in continuous-time models.
\newblock In {\em Proceedings of the 39th International Conference on Machine
  Learning (ICML 2022): Workshop on Continuous Time Methods for Machine
  Learning}.

\bibitem[Sharrock and Kantas, 2022]{sharrock2022joint}
Sharrock, L. and Kantas, N. (2022).
\newblock Joint online parameter estimation and optimal sensor placement for
  the partially observed stochastic advection diffusion equation.
\newblock {\em SIAM/ASA Journal on Uncertainty Quantification}, 10(1):55--95.

\bibitem[Sharrock and Kantas, 2023]{sharrock2023twotimescale}
Sharrock, L. and Kantas, N. (2023).
\newblock Two-timescale stochastic gradient descent in continuous time with
  applications to joint online parameter estimation and optimal sensor
  placement.
\newblock {\em Bernoulli}, 29(2):1137--1165.

\bibitem[Sharrock et~al., 2022]{sharrock2022parameterestimation}
Sharrock, L., Kantas, N., Parpas, P., and Pavliotis, G.~A. (2022).
\newblock Parameter estimation for the {McKean--Vlasov} stochastic differential
  equation.
\newblock {\em arXiv preprint arXiv:2106.13751v3}.

\bibitem[Sharrock et~al., 2023]{sharrock2023online}
Sharrock, L., Kantas, N., Parpas, P., and Pavliotis, G.~A. (2023).
\newblock Online parameter estimation for the {McKean--Vlasov} stochastic
  differential equation.
\newblock {\em Stochastic Processes and their Applications}, 162:481--546.

\bibitem[Sharrock et~al., 2026]{sharrock2026efficient}
Sharrock, L., Kantas, N., and Pavliotis, G.~A. (2026).
\newblock Efficient online learning in interacting particle systems.
\newblock {\em arXiv preprint arXiv:2602.20875}.

\bibitem[Sirignano and Spiliopoulos, 2017]{sirignano2017stochastic}
Sirignano, J. and Spiliopoulos, K. (2017).
\newblock Stochastic gradient descent in continuous time.
\newblock {\em SIAM Journal on Financial Mathematics}, 8(1):933--961.

\bibitem[Sirignano and Spiliopoulos, 2020a]{sirignano2020mean}
Sirignano, J. and Spiliopoulos, K. (2020a).
\newblock Mean field analysis of neural networks: A law of large numbers.
\newblock {\em SIAM Journal on Applied Mathematics}, 80(2):725--752.

\bibitem[Sirignano and Spiliopoulos, 2020b]{sirignano2020stochastic}
Sirignano, J. and Spiliopoulos, K. (2020b).
\newblock Stochastic gradient descent in continuous time: A central limit
  theorem.
\newblock {\em Stochastic Systems}, 10(2):124--151.

\bibitem[Surace and Pfister, 2019]{surace2019online}
Surace, S.~C. and Pfister, J.-P. (2019).
\newblock Online maximum-likelihood estimation of the parameters of partially
  observed diffusion processes.
\newblock {\em IEEE Transactions on Automatic Control}, 64(7):2814--2829.

\bibitem[Sznitman, 1991]{sznitman1991topics}
Sznitman, A.-S. (1991).
\newblock Topics in propagation of chaos.
\newblock In {\em Ecole d'Et{\'e} de Probabilit{\'e}s de Saint-Flour XIX --
  1989}, volume 1464 of {\em Lecture Notes in Mathematics}, pages 165--251.
  Springer, Berlin, Heidelberg.

\bibitem[Vlasov, 1968]{vlasov1968vibrational}
Vlasov, A.~A. (1968).
\newblock The vibrational properties of an electron gas.
\newblock {\em Soviet Physics Uspekhi}, 10(6):721--733.

\bibitem[Wang and Sirignano, 2022]{wang2022forward}
Wang, Z. and Sirignano, J. (2022).
\newblock A forward propagation algorithm for online optimization of nonlinear
  stochastic differential equations.
\newblock {\em arXiv preprint arXiv:2207.04496}.

\bibitem[Wang and Sirignano, 2024]{wang2024continuous}
Wang, Z. and Sirignano, J. (2024).
\newblock Continuous-time stochastic gradient descent for optimizing over the
  stationary distribution of stochastic differential equations.
\newblock {\em Mathematical Finance}, 34(2):348--424.

\bibitem[Yao et~al., 2022]{yao2022meanfield}
Yao, R., Chen, X., and Yang, Y. (2022).
\newblock Mean-field nonparametric estimation of interacting particle systems.
\newblock In {\em Proceedings of the Thirty Fifth Conference on Learning Theory
  (COLT 2022)}, volume 178 of {\em Proceedings of Machine Learning Research},
  pages 2242--2275. PMLR.

\end{thebibliography}

\appendix 

\section{Proofs for Section~\ref{sec:methodology}}
\label{appendix:proofs-section3}

\begin{lemma}
\label{lem:appendix-diff-under-integral}
Suppose that Assumptions~\ref{assumption:moments}--\ref{assumption:drift} and
Assumption~\ref{ass:appendix-meanfield-c1} hold. Let $\Phi:\Theta\times\mathbb{R}^d\times\mathcal{P}(\mathbb{R}^d)\rightarrow\mathbb{R}^d$ be defined according to
\begin{equation}
\Phi(\theta,x,\mu)
=
\Phi_0(\theta,x)
+
\int_{\mathbb R^d} K(\theta,x,y)\,\mu(\mathrm d y),
\end{equation}
where $\Phi_0(\cdot,x), K(\cdot,x,y)\in C^1(\Theta)$ for all $x,y\in\mathbb{R}^d$, and $\Phi_0(\theta,\cdot), K(\theta,\cdot,\cdot)$ have polynomial growth for all $\theta\in\Theta$. Then
\begin{equation}
\partial_\theta \Phi(\theta,x,\pi_\theta)
=
\partial_\theta \Phi_0(\theta,x)
+
\int_{\mathbb R^d}\partial_\theta K(\theta,x,y)\,\pi_\theta(\mathrm d y)
+
\int_{\mathbb R^d}K(\theta,x,y)\,\nu_\theta(\mathrm d y).
\end{equation}
\end{lemma}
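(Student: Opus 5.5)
The plan is to differentiate the three pieces of $\Phi(\theta,x,\pi_\theta)$ separately, with $x$ held fixed. The term $\Phi_0(\theta,x)$ is $C^1$ in $\theta$ by hypothesis. For the integral term, write
\begin{equation}
F(\theta,\theta') := \int_{\mathbb R^d} K(\theta,x,y)\,\pi_{\theta'}(\mathrm d y),
\end{equation}
so that $\int K(\theta,x,y)\pi_\theta(\mathrm dy)=F(\theta,\theta)$. I will then prove that $F$ has continuous partial derivatives in each argument near the diagonal. The chain rule for $C^1$ functions of two variables then gives $\partial_\theta F(\theta,\theta)=\partial_1F(\theta,\theta)+\partial_2F(\theta,\theta)$, which is exactly the claimed formula.

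\textbf{Derivative in the first argument.} Fix $\theta'$. The derivative $\partial_1F(\theta,\theta')=\int \partial_\theta K(\theta,x,y)\pi_{\theta'}(\mathrm dy)$ follows by differentiating under the integral via dominated convergence. Take a small ball $U\subset\Theta$ around $\theta$. By the mean value theorem, the difference quotients are bounded by $\sup_{\vartheta\in U}\|\partial_\theta K(\vartheta,x,y)\|$. This supremum has polynomial growth in $y$: this is where I would use Assumption~\ref{assumption:drift}, applied to $\partial_\theta b$, with locally uniform constants in $\theta$. I am reading the polynomial-growth hypothesis as applying to $\partial_\theta K$ locally uniformly in $\theta$, as is implicit in the statement. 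The bound is integrable against $\pi_{\theta'}$ uniformly in $\theta'$ by Assumption~\ref{ass:appendix-meanfield-c1}. The same domination, together with continuity of $\partial_\theta K$ in $\theta$, gives continuity of $\partial_1F$ in $\theta$.

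\textbf{Derivative in the second argument.} This is the main obstacle. Fix $\theta$ and set $\varphi(y):=K(\theta,x,y)$. The definition of $\nu_{\theta'}$ gives $\partial_{\theta'}\int\varphi\,\mathrm d\pi_{\theta'}=\int\varphi\,\mathrm d\nu_{\theta'}$ only for sufficiently regular test functions, whereas $\varphi$ is merely continuous with polynomial growth. I would therefore extend the identity to this class by approximation. Choose $\varphi_n\in C^1$ with $|\varphi_n|\le C(1+\|y\|^r)$ and $\varphi_n\to\varphi$ locally uniformly, for instance by mollification followed by a smooth cutoff. Integrate the identity in $\theta'$ along a segment:
\begin{equation}
\int\varphi_n\,\mathrm d\pi_{\theta''}-\int\varphi_n\,\mathrm d\pi_{\theta'}=\int_0^1\Big\langle \nu_{\theta'+s(\theta''-\theta')},\varphi_n\Big\rangle(\theta''-\theta')\,\mathrm ds .
\end{equation}
Assumption~\ref{ass:appendix-meanfield-c1} gives uniform bounds on the moments of $\pi_\theta$ and on the weighted total-variation moments of $\nu_\theta$. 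With these, pass $n\to\infty$ on both sides by dominated convergence: split into a compact region, where convergence is uniform, and its complement, which is controlled by a higher moment. The integrated identity then holds for $\varphi$ itself.

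\textbf{Continuity and conclusion.} To conclude that $\partial_2F(\theta,\theta')=\langle\nu_{\theta'},\varphi\rangle$ and that it is continuous, I need $\theta'\mapsto\langle\nu_{\theta'},\varphi\rangle$ to be continuous. I would take this as part of the standing $C^1$ regularity of $\theta\mapsto\pi_\theta$ encoded in the definition of $\nu_\theta$. Granting it, dividing by $\|\theta''-\theta'\|$ and letting $\theta''\to\theta'$ yields the derivative. Joint continuity in $(\theta,\theta')$ follows by combining this with the domination from the first-argument step. Applying the chain rule at $\theta'=\theta$ then completes the argument.
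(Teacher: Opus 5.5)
Your proposal has the same structure as the paper's proof, which is a one-line appeal to two things. The first is dominated convergence for the explicit $\theta$-dependence of $K$, with polynomial growth plus the moment bounds of Assumption~\ref{ass:appendix-meanfield-c1} as majorant. The second is the definition of $\nu_\theta$ for the dependence through $\pi_\theta$. You make explicit two points the paper uses tacitly:
\begin{itemize}
\item the majorant needs polynomial growth of $\partial_\theta K$ locally uniformly in $\theta$, which is not literally among the lemma's hypotheses;
\item $K(\theta,x,\cdot)$ need not be an admissible test function in the definition of $\nu_\theta$, which your approximation argument addresses.
\end{itemize}

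As written, however, your argument proves the lemma only under an extra hypothesis: continuity of $\theta'\mapsto\langle\nu_{\theta'},\varphi\rangle$. This is neither in Assumption~\ref{ass:appendix-meanfield-c1} nor in the definition of $\nu_\theta$, which only asserts that partial derivatives exist. You also use it implicitly in your integrated identity along a general segment when $p>1$, because a directional derivative follows from partial derivatives only under such continuity.

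You can dispense with it. Work coordinatewise, which is all the componentwise identity requires. With $\varphi=K(\theta,x,\cdot)$ and $e_\ell$ the $\ell$-th coordinate vector, split
\begin{equation*}
\frac{F(\theta+he_\ell,\theta+he_\ell)-F(\theta,\theta)}{h}
=\int_{\mathbb R^d}\frac{K(\theta+he_\ell,x,y)-K(\theta,x,y)}{h}\,\pi_{\theta+he_\ell}(\mathrm d y)
+\frac1h\Bigl(\int_{\mathbb R^d}\varphi\,\mathrm d\pi_{\theta+he_\ell}-\int_{\mathbb R^d}\varphi\,\mathrm d\pi_\theta\Bigr).
\end{equation*}
Apply the mean value theorem along the coordinate segment to admissible test functions, then your approximation. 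This gives, for every continuous $\psi$ of polynomial growth,
\begin{equation*}
\Bigl|\int\psi\,\mathrm d\pi_{\theta+he_\ell}-\int\psi\,\mathrm d\pi_\theta\Bigr|\le |h|\sup_{\vartheta\in\Theta}\int|\psi|\,\mathrm d|\nu_{\vartheta,\ell}|.
\end{equation*}

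\textbf{First term.} Apply this bound to the difference quotient of $K$, which your locally uniform growth bound dominates by a fixed polynomial. This replaces the moving measure by $\pi_\theta$ at cost $O(|h|)$. Dominated convergence against the fixed measure $\pi_\theta$ then gives $\int\partial_{\theta_\ell}K(\theta,x,y)\,\pi_\theta(\mathrm d y)$. So you also avoid the joint continuity in $(\theta,\theta')$ that your conclusion asserts without real justification.

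\textbf{Second term.} Apply the same bound to $\varphi-\varphi_n$. Replacing $\varphi$ by $\varphi_n$ changes the difference quotient by at most $\omega_n:=\sup_{\vartheta}\int|\varphi-\varphi_n|\,\mathrm d|\nu_{\vartheta,\ell}|$, uniformly in $h$. Moreover $\omega_n\to0$ by local uniform convergence and the uniform higher-order weighted total-variation bounds of Assumption~\ref{ass:appendix-meanfield-c1}. You may therefore let $h\to0$ for fixed $n$, using only the definition of $\nu_\theta$ at the single point $\theta$, and then let $n\to\infty$.

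This asymmetric split is what the paper's ``dominated convergence plus the definition of $\nu_\theta$'' amounts to. With it, your proof goes through without the continuity you granted.
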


\begin{proof}
The result follows immediately from the dominated convergence theorem. In particular, the polynomial-growth bounds and the moment bounds in Assumption~\ref{ass:appendix-meanfield-c1} provide an integrable majorant, while the definition of $\nu_\theta$ yields the last term.
\end{proof}

\begin{proof}[Proof of Proposition~\ref{prop:ips-likelihood-n-t-limit-grad}]
Since the invariant distribution $\pi_{\theta_0}$ does not depend on the parameter $\theta$, differentiation under the integral sign gives
\begin{equation}
\partial_\theta \mathcal{J}(\theta)
=
\int_{\mathbb R^d}
\bigl[\partial_\theta B(\theta,x,\pi_\theta)\bigr]
(\sigma\sigma^{\top})^{-1}[B(\theta,x,\pi_\theta)-B(\theta_0,x,\pi_{\theta_0})]\,
\pi_{\theta_0}(\mathrm d x).
\end{equation}
By Lemma~\ref{lem:appendix-diff-under-integral}, we have
\begin{equation}
\partial_\theta B(\theta,x,\pi_\theta)
=
\int_{\mathbb R^d}\partial_\theta b(\theta,x,y)\,\pi_\theta(\mathrm d y)
+
\int_{\mathbb R^d}b(\theta,x,y)\,\nu_\theta(\mathrm d y)
=
G(\theta,x,\pi_\theta,\nu_\theta).
\end{equation}
Substituting this identity yields the claimed result.
\end{proof}

\begin{proof}[Proof of Proposition~\ref{prop:ips-likelihood-n-t-limit-grad-explicit}]
By Proposition~\ref{prop:ips-likelihood-n-t-limit-grad}, 
\begin{equation}
\partial_\theta \mathcal{J}(\theta)
=
\int_{\mathbb R^d}
G(\theta,x,\pi_\theta,\nu_\theta)(\sigma\sigma^{\top})^{-1}
\bigl(B(\theta,x,\pi_\theta)-B(\theta_0,x,\pi_{\theta_0})\bigr)\,
\pi_{\theta_0}(\mathrm d x).
\end{equation}
By definition, $G(\theta,x,\pi_\theta,\nu_\theta)
=
\int_{\mathbb R^d} g(\theta,x,y,\nu_\theta)\,\pi_\theta(\mathrm d y)$ and $
B(\theta,x,\pi_\theta)-B(\theta_0,x,\pi_{\theta_0})
=
\int_{\mathbb R^d}
\bigl(b(\theta,x,z)-B(\theta_0,x,\pi_{\theta_0})\bigr)\,
\pi_\theta(\mathrm d z)$. By Fubini's theorem, we thus have that
\begin{equation}
\partial_\theta \mathcal{J}(\theta)
=
\int_{(\mathbb R^d)^3}
h(\theta,x,y,\nu_\theta,z,\pi_{\theta_0})\,
\pi_{\theta_0}(\mathrm d x)\pi_\theta(\mathrm d y)\pi_\theta(\mathrm d z),
\end{equation}
which is exactly the claimed formula.
\end{proof}

\section{Proofs for Section~\ref{sec:prelim-results}}
\label{appendix:proofs-section43}

\begin{proof}[Proof of Proposition~\ref{prop:asymptotic-partial-log-lik-grad}]
We begin with the observation that the two finite-particle surrogate objectives can be rewritten as
\begin{align*}
\mathcal J^{i,N,M}(\theta)
&=
\int_{(\mathbb R^d)^N}
J(\theta,x^{i,N},\pi_\theta^{1,M},\pi_\theta^{1,M},\mu^N)\,
\pi_{\theta_0}^N(\mathrm d x^N),\\
\mathcal J^{i,j,k,N,M}(\theta)
&=
\int_{(\mathbb R^d)^N}\int_{\mathbb R^d}\int_{\mathbb R^d}
j(\theta,x^{i,N},y,z,\mu^N)\,
\pi_\theta^{1,M}(\mathrm d y)\pi_\theta^{1,M}(\mathrm d z)\,
\pi_{\theta_0}^N(\mathrm d x^N).
\end{align*}
Arguing as in the proof of Proposition~\ref{prop:ips-likelihood-n-t-limit-grad} and Proposition~\ref{prop:ips-likelihood-n-t-limit-grad-explicit}, but now with $\pi_\theta$ and $\nu_\theta$ replaced by $\pi_\theta^{1,M}$ and $\nu_\theta^{1,M}$, we then have that
\begin{align}
\partial_\theta \mathcal J^{i,N,M}(\theta)
&=
\int_{(\mathbb R^d)^N}
H(\theta,x^{i,N},\pi_\theta^{1,M},\nu_\theta^{1,M},\pi_\theta^{1,M},\mu^N)\,
\pi_{\theta_0}^N(\mathrm d x^N) \\
\partial_\theta \mathcal J^{i,j,k,N,M}(\theta)
&=
\int_{(\mathbb R^d)^N}\int_{\mathbb R^d}\int_{\mathbb R^d}
h(\theta,x^{i,N},y,\nu_\theta^{1,M},z,\mu^N)\,
\pi_\theta^{1,M}(\mathrm d y)\pi_\theta^{1,M}(\mathrm d z)\,
\pi_{\theta_0}^N(\mathrm d x^N).
\end{align}
\end{proof}

\begin{proof}[Proof of Proposition~\ref{prop:j-vp}]
Under $\Pi_\theta^{N,M}$ the two virtual particle systems are independent, each is exchangeable, and both have first marginal $\pi_\theta^{1,M}$. By the definition of $j$ and bilinearity of the inner product,
\begin{equation}
\mathcal J^{i,N,M}(\theta)
=
\frac1{M^2}\sum_{a=1}^M\sum_{b=1}^M
\mathbb E_{\Pi_\theta^{N,M}}
\Bigl[
j(\theta,x^{i,N},\hat x^{a,M},\tilde x^{b,M},\mu^N)
\Bigr].
\end{equation}
By exchangeability, every summand on the right-hand side is the same, hence equal to $\mathcal J^{i,j,k,N,M}(\theta)$. This establishes that $\mathcal J^{i,N,M}(\theta)=\mathcal J^{i,j,k,N,M}(\theta)$. Meanwhile, integrating out the virtual blocks gives
\begin{align*}
\mathcal J^{i,j,k,N,M}(\theta)
&=
\int_{(\mathbb R^d)^N}\int_{\mathbb R^d}\int_{\mathbb R^d}
j(\theta,x^{i,N},y,z,\mu^N)\,
\pi_\theta^{1,M}(\mathrm d y)\pi_\theta^{1,M}(\mathrm d z)\,
\pi_{\theta_0}^N(\mathrm d x^N)\\
&=
\int_{(\mathbb R^d)^N}
J(\theta,x^{i,N},\pi_\theta^{1,M},\pi_\theta^{1,M},\mu^N)\,
\pi_{\theta_0}^N(\mathrm d x^N)=:\mathcal J_{\mathrm{vp}}^{i,N,M}(\theta).
\end{align*}
Finally, if in addition the assumptions of Proposition~\ref{prop:asymptotic-partial-log-lik-grad} hold, then the gradient identities in Proposition~\ref{prop:asymptotic-partial-log-lik-grad} imply
\begin{equation}
\partial_\theta \mathcal J^{i,N,M}(\theta)
=
\partial_\theta \mathcal J^{i,j,k,N,M}(\theta)
=
\int_{(\mathbb R^d)^N}
H(\theta,x^{i,N},\pi_\theta^{1,M},\nu_\theta^{1,M},\pi_\theta^{1,M},\mu^N)\,
\pi_{\theta_0}^N(\mathrm d x^N).
\end{equation}
\end{proof}

\begin{lemma}
\label{lem:appendix-H-lipschitz}
Suppose that Assumption~\ref{assumption:drift} holds. Then, writing $\smash{M_m(\gamma):=\int_{\mathbb R^d}(1+\|u\|^m)\,\gamma(\mathrm d u)}$ for non-negative measures $\gamma$, there exist integers $m,r\ge 1$ and a constant $C<\infty$ such that, for every $\theta\in\Theta$,
\begin{align*}
&\Big\|H(\theta,x,\mu,\eta,\bar\mu,\lambda)-H(\theta,x',\mu',\eta',\bar\mu',\lambda')\Big\| \\
&\le
C\Big(
1+\!\!\sum_{z\in\{x,x'\}}\|z\|^m+\!\!\!\!\!\!\!\!\sum_{\gamma\in\{\mu,\mu',\bar{\mu},\bar{\mu}',\lambda,\lambda'\}}\!\!\!\!\!\!M_m(\gamma)+\!\!\sum_{\gamma\in\{\eta,\eta'\}}\!\!M_m(|\gamma|)
\Big)
\Big(
\|x-x'\|
+\!\!\!\!\!\sum_{\gamma\in\{\mu,\eta,\bar{\mu}\}}\!\!\|\gamma-\gamma'\|_{\mathrm{TV},r} 
+ 
\mathsf{W}_{1,r}(\lambda,\lambda')
\Big).
\end{align*}
\end{lemma}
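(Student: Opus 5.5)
The plan is to write $H=G\,(\sigma\sigma^\top)^{-1}D$, where $G=G(\theta,x,\mu,\eta)$ and $D=B(\theta,x,\bar\mu)-B(\theta_0,x,\lambda)$. We then use the product decomposition
\begin{equation}
H-H'=(G-G')(\sigma\sigma^{\top})^{-1}D+G'(\sigma\sigma^{\top})^{-1}(D-D').
\end{equation}
It then suffices to prove two things. First, growth bounds $\|G\|,\|D\|\le C(1+\|x\|^{m}+M_m(\mu)+M_m(|\eta|)+M_m(\bar\mu)+M_m(\lambda))$. Second, increment bounds for $G-G'$ and $D-D'$ of the same product form as the claim. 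Multiplying a polynomial prefactor by another polynomial prefactor is again dominated by $C(1+\sum\|z\|^{2m}+\sum M_{2m})$, since $M_a(\gamma)M_b(\gamma')\le M_{a+b}(\gamma)+M_{a+b}(\gamma')$ up to constants. We can therefore relabel $m$, and $r$ may be enlarged freely because $\|\cdot\|_{\mathrm{TV},r}$ and $\mathsf W_{1,r}$ increase with $r$. The growth bounds follow directly from the polynomial growth of $b$ and $\partial_\theta b$ in Assumption~\ref{assumption:drift}, integrated against $\mu,|\eta|,\bar\mu,\lambda$.

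For the increments, each of $G$ and $D$ is a sum of terms of the form $\int\phi(x,y)\,\gamma(\mathrm dy)$ with $\phi\in\{b(\theta,\cdot,\cdot),\partial_\theta b(\theta,\cdot,\cdot),b(\theta_0,\cdot,\cdot)\}$. For each such term we split
\begin{equation}
\int\phi(x,y)\gamma(\mathrm dy)-\int\phi(x',y)\gamma'(\mathrm dy)=\int[\phi(x,y)-\phi(x',y)]\gamma(\mathrm dy)+\int\phi(x',y)(\gamma-\gamma')(\mathrm dy).
\end{equation}
The first piece is handled by local Lipschitz continuity with polynomial growth of the Lipschitz constant. Via the mean value theorem and the polynomial growth of $\partial_x b$ and $\partial_x\partial_\theta b$, this gives $|\phi(x,y)-\phi(x',y)|\le C(1+\|x\|^m+\|x'\|^m+\|y\|^m)\|x-y'\|$, where the argument is really $\|x-x'\|$. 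Integrating against $\gamma$ yields the factor $M_m(\gamma)$.

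The second piece depends on the kind of measure involved. For $\gamma\in\{\mu,\eta,\bar\mu\}$ we use $|\phi(x',y)|\le C(1+\|x'\|^m)(1+\|y\|^r)$, which gives the bound $C(1+\|x'\|^m)\|\gamma-\gamma'\|_{\mathrm{TV},r}$. For $\lambda$, which only enters through $B(\theta_0,x',\lambda)$, we take an arbitrary coupling $\Gamma$ of $(\lambda,\lambda')$ and write the difference as $\int[b(\theta_0,x',u)-b(\theta_0,x',v)]\Gamma(\mathrm du,\mathrm dv)$. Local Lipschitz continuity in $y$, with constant $C(1+\|x'\|^m)(1+\|u\|^r+\|v\|^r)$, bounds this by $C(1+\|x'\|^m)\int(1+\|u\|^r+\|v\|^r)\|u-v\|\,\mathrm d\Gamma$. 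Taking the infimum over couplings gives $\mathsf W_{1,r}(\lambda,\lambda')$.

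The main obstacle is a regularity gap in the hypotheses. Assumption~\ref{assumption:drift} literally gives local Lipschitz continuity with polynomial growth for $b,\partial_\theta b,\partial_x b,\partial_y b$, but the $G$-increment in $x$ requires the same for $\partial_\theta b$ in $x$, uniformly in $\theta$. I would interpret ``locally Lipschitz of polynomial growth'' quantitatively: a Lipschitz modulus bounded by $C(1+\|x\|^m+\|x'\|^m+\|y\|^m)$, uniform over $\theta\in\Theta$. This is the reading needed to make $C$ independent of $\theta$. The $\eta$ term involves the signed measure only through $\int b\,\mathrm d\eta$, so no further smoothness is required there.
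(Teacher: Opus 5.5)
Your route is the same as the paper's sketch: the product decomposition of $H-H'$, increments of $G$ and $D$ split into an $x$-part and a measure part, TV bounds for the $\mu,\eta,\bar\mu$ increments, and a coupling for $\mathsf W_{1,r}(\lambda,\lambda')$. The step that fails is absorbing products of prefactors into one additive prefactor.

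\textbf{Why the absorption fails.} The inequality $M_a(\gamma)M_b(\gamma')\le C(M_{a+b}(\gamma)+M_{a+b}(\gamma'))$ comes from Young plus Jensen, so it needs $\gamma,\gamma'$ to be probability measures. But $M_m(|\eta|)$ is homogeneous of degree one in $\eta$. Hence products such as $\|x\|^m|\eta|(\mathbb R^d)$ or $M_m(\bar\mu)M_m(|\eta|)$ cannot be dominated by $1+\|x\|^{m'}+M_{m'}(\bar\mu)+M_{m'}(|\eta|)$. These products are unavoidable. Your growth bound for $G$ is already wrong, since $\int(1+\|x\|^m+\|z\|^m)\,|\eta|(\mathrm dz)=(1+\|x\|^m)|\eta|(\mathbb R^d)+\int\|z\|^m|\eta|(\mathrm dz)$. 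The same happens in the $x$-increment of $\int b(\theta,x,z)\eta(\mathrm dz)$: ``integrating against $\gamma$ yields the factor $M_m(\gamma)$'' is true only for probability $\gamma$.

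\textbf{The lemma is false as written.} Take $d=p=1$, $\sigma=1$ and $b(\theta,x,y)=xy$, so $\partial_\theta b=0$ and Assumption~\ref{assumption:drift} holds uniformly in $\theta$. Set $x=x'=R\ge 1$, $\mu=\mu'=\lambda=\lambda'=\delta_0$, $\bar\mu=\delta_1$, $\bar\mu'=\delta_0$ and $\eta=\eta'=K\delta_1$. Then $G=G'=RK$ and $B(\theta,R,\delta_1)=R$, while all other drift terms vanish, so $\|H-H'\|=R^2K$. The right-hand side equals $3C(8+2R^m+4K)$. Taking $K=R^m$ and $R\to\infty$ violates the inequality for every fixed $m,r,C$. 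The paper's proof (``applying the product rule \dots yields the stated estimate'') passes over exactly this point.

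\textbf{What your argument does prove.} It gives the estimate with prefactor $C\bigl(1+\sum_{z}\|z\|^m+\sum_{\gamma}M_m(\gamma)\bigr)\bigl(1+M_m(|\eta|)+M_m(|\eta'|)\bigr)$, summing over the two points and the six probability measures, with $m$ enlarged. To see this, write the $\eta$-part of $G-G'$ as $\int[b(\theta,x,z)-b(\theta,x',z)]\,\eta(\mathrm dz)+\int b(\theta,x',z)\,(\eta-\eta')(\mathrm dz)$. The second term contributes only through $\|\eta-\eta'\|_{\mathrm{TV},r}$ in the metric factor. Every other term of $H-H'$ is linear in $|\eta|$ or in $|\eta'|$. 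All remaining products involve only points and probability measures, where your Young--Jensen absorption is valid.

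This weaker form is all that is used later. In Propositions~\ref{prop:inf-n-convergence-1} and~\ref{prop:add:separate-N-M} one has $\eta\in\{\nu_\theta,\nu_\theta^{1,M}\}$, whose weighted total variation norms are uniformly bounded by Assumptions~\ref{ass:appendix-meanfield-c1} and~\ref{ass:appendix-derivative-poc}.

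\textbf{On the obstacle you flagged.} Assumption~\ref{assumption:drift} does explicitly cover $\partial_\theta b$. You also do not need $\partial_x\partial_\theta b$ or the mean value theorem; a polynomially growing Lipschitz modulus for $\partial_\theta b$ suffices. What is genuinely being added, by you and silently by the paper, is that these moduli and growth constants are uniform in $\theta\in\Theta$.
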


\begin{proof}
By Assumption~\ref{assumption:drift}, the maps $b$, $\partial_\theta b$, and $\partial_y b$ are locally Lipschitz in the spatial variables with polynomially growing Lipschitz constants. Consequently, there exist integers $m,r\ge1$ and a constant $C<\infty$ such that, uniformly in $\theta\in\Theta$,
\begin{align*}
\|B(\theta,x,\mu)-B(\theta,x',\mu')\|
&\le C\big(1+\!\!\sum_{z\in\{x,x'\}}\|z\|^m +\!\!\sum_{\gamma\in\{\mu,\mu'\}}M_m(\gamma)\big)
\big(\|x-x'\|+\|\mu-\mu'\|_{\mathrm{TV},r}\big),
\\
\|G(\theta,x,\mu,\eta)-G(\theta,x',\mu',\eta')\|
&\le C\big(1+\!\!\sum_{z\in\{x,x'\}}\|z\|^m +\!\!\sum_{\gamma\in\{\mu,\mu'\}}M_m(\gamma)+\!\!\sum_{\gamma\in\{\eta,\eta'\}}M_m(|\gamma|)\big)
\\[-1mm]
&\quad\times\big(\|x-x'\|+\!\!\sum_{\gamma\in\{\mu,\eta\}}\|\gamma-\gamma'\|_{\mathrm{TV},r}\big).
\end{align*}
and, similarly, 
\begin{align*}
\|B(\theta_0,x,\lambda)-B(\theta_0,x',\lambda')\|
&\le C\big(1+\!\!\sum_{z\in\{x,x'\}}\|z\|^m +\!\!\sum_{\gamma\in\{\lambda,\lambda'\}}M_m(\gamma)\big)
\big(\|x-x'\|+\mathsf{W}_{1,r}(\lambda,\lambda')\big),
\end{align*}
By definition, $H(\theta,x,\mu,\eta,\bar\mu,\lambda)
=
G(\theta,x,\mu,\eta)(\sigma\sigma^{\top})^{-1}
(B(\theta,x,\bar\mu)-B(\theta_0,x,\lambda))$. 
Applying the product rule, the previous three bounds, and the polynomial-growth bounds implied by Assumption~\ref{assumption:drift}, yields the stated estimate.
\end{proof}

\begin{proof}[Proof of Proposition~\ref{prop:inf-n-convergence-1}]
Due to the uniform-in-time propagation-of-chaos imposed in Assumption~\ref{assumption:model}, there exists a deterministic sequence $\varepsilon_N\downarrow 0$ such that
\begin{equation}
\mathbb E\Bigl[\|x^{i,N}-\bar x\|+\mathsf{W}_{1,r}(\mu^N,\pi_{\theta_0})\Bigr]\le \varepsilon_N.
\end{equation}
where the expectation is with respect to a suitable coupling of $x^{i,N}\sim \pi_{\theta_0}^{1,N}$, $\bar x\sim \pi_{\theta_0}$, and $\mu^{N}=\mu^N(\boldsymbol{x}^N)$ with $\boldsymbol{x}^N\sim \pi_{\theta_0}^N$. Fix $\theta\in\Theta$. By Proposition~\ref{prop:ips-likelihood-n-t-limit-grad} and Proposition~\ref{prop:j-vp},
\begin{align*}
&\|\partial_\theta \mathcal J_{\mathrm{vp}}^{i,N,M}(\theta)-\partial_\theta \mathcal{J}(\theta)\|\le
\mathbb E\Bigl[
\bigl\|
H(\theta,x^{i,N},\pi_\theta^{1,M},\nu_\theta^{1,M},\pi_\theta^{1,M},\mu^N)
-
H(\theta,\bar x,\pi_\theta,\nu_\theta,\pi_\theta,\pi_{\theta_0})
\bigr\|
\Bigr].
\end{align*}
Apply Lemma~\ref{lem:appendix-H-lipschitz}. The moment factor is uniformly bounded by Assumption~\ref{assumption:model} and Assumption~\ref{ass:appendix-meanfield-c1}. Meanwhile, the metric factor is bounded by
\begin{equation}
\mathbb E\Bigl[\|x^{i,N}-\bar x\|+\mathsf{W}_{1,r}(\mu^N,\pi_{\theta_0})\Bigr]
+
\sup_{\theta\in\Theta}
\Bigl(
\|\pi_\theta^{1,M}-\pi_\theta\|_{\mathrm{TV},r}
+
\|\nu_\theta^{1,M}-\nu_\theta\|_{\mathrm{TV},r}
\Bigr).
\end{equation}
By construction and Assumption~\ref{ass:appendix-derivative-poc}, this is at most $\varepsilon_N+\delta_M$. Taking the supremum over $\theta\in\Theta$ proves the claim.
\end{proof}

\begin{proof}[Proof of Proposition~\ref{prop:add:markovian-projection}]
Let $x^N\sim\pi_{\theta_0}^N$, and define $U^{i,N}:=B(\theta_0,x^{i,N},\mu^N)$. By definition of $\mathcal J_N^i$, it follows that
\begin{equation}
\mathcal J_N^i(\theta)
=
\frac12
\mathbb E\Bigl[
\|B(\theta,x^{i,N},\pi_\theta)-U^{i,N}\|_{\sigma\sigma^{\top}}^2
\Bigr].
\end{equation}
Suppose we now add and subtract $B_{\theta_0}^{i,N}(x^{i,N})$ inside the norm. Expanding the resulting quadratic, we arrive at the decomposition
\begin{align*}
\mathcal J_N^i(\theta)
&=
\frac{1}{2}\mathbb E\Bigl[
\|B(\theta,x^{i,N},\pi_\theta)-B_{\theta_0}^{i,N}(x^{i,N})\|_{\sigma\sigma^{\top}}^2
\Bigr] +
\frac{1}{2}\mathbb E\Bigl[
\|U^{i,N}-B_{\theta_0}^{i,N}(x^{i,N})\|_{\sigma\sigma^{\top}}^2
\Bigr]\\
&\quad+
\mathbb E\Bigl[
\bigl\langle
B(\theta,x^{i,N},\pi_\theta)-B_{\theta_0}^{i,N}(x^{i,N}),
B_{\theta_0}^{i,N}(x^{i,N})-U^{i,N}
\bigr\rangle_{\sigma\sigma^{\top}}
\Bigr].
\end{align*}
The last term vanishes when conditioning on $x^{i,N}$, since $\mathbb E[B_{\theta_0}^{i,N}(x^{i,N})-U^{i,N}\mid x^{i,N}]=0$ by the definition of $B_{\theta_0}^{i,N}$. We thus have, as claimed, that 
\begin{equation}
\mathcal J_N^i(\theta)
=
C_N^i
+
\frac12
\int_{\mathbb R^d}
\|B(\theta,x,\pi_\theta)-B_{\theta_0}^{i,N}(x)\|_{\sigma\sigma^{\top}}^2\,
\pi_{\theta_0}^{i,N}(\mathrm d x),
\qquad C_N^i
:=
\frac12
\mathbb E\Bigl[
\|U^{i,N}-B_{\theta_0}^{i,N}(x^{i,N})\|_{\sigma\sigma^{\top}}^2
\Bigr].
\end{equation}
\end{proof}

\begin{proof}[Proof of Proposition~\ref{prop:add:separate-N-M}]
Arguing as in the proof of Proposition~\ref{prop:ips-likelihood-n-t-limit-grad}, now with $\mu^N$ fixed and only the mean-field law depending on $\theta$, we have that
\begin{equation}
\partial_\theta \mathcal J_N^i(\theta)
=
\int_{(\mathbb R^d)^N}
H(\theta,x^{i,N},\pi_\theta,\nu_\theta,\pi_\theta,\mu^N)\,
\pi_{\theta_0}^N(\mathrm d x^N).
\end{equation}
For the quantitative estimates, fix $\theta\in\Theta$. Then, using Proposition~\ref{prop:ips-likelihood-n-t-limit-grad} and the displayed gradient formula above, we have
\begin{align*}
\|\partial_\theta \mathcal J_{\mathrm{vp}}^{i,N,M}(\theta)-\partial_\theta \mathcal J_N^i(\theta)\| \le
\mathbb E\Bigl[
\bigl\|
H(\theta,x^{i,N},\pi_\theta^{1,M},\nu_\theta^{1,M},\pi_\theta^{1,M},\mu^N)
-
H(\theta,x^{i,N},\pi_\theta,\nu_\theta,\pi_\theta,\mu^N)
\bigr\|
\Bigr],
\end{align*}
where $(x^{i,N},\mu^N)$ is distributed under $\pi_{\theta_0}^N$. Apply Lemma~\ref{lem:appendix-H-lipschitz}. The spatial moment factor is uniformly bounded by Assumption~\ref{assumption:model} and Assumption~\ref{ass:appendix-meanfield-c1}, while the metric factor is bounded by
\begin{equation}
\sup_{\theta\in\Theta}
\Bigl(
\|\pi_\theta^{1,M}-\pi_\theta\|_{\mathrm{TV},r}
+
\|\nu_\theta^{1,M}-\nu_\theta\|_{\mathrm{TV},r}
\Bigr)
\le \delta_M.
\end{equation}
This proves the first estimate. For the second estimate, let $(x^{i,N},\mu^N,\bar x)$ be the coupling used in the proof of Proposition~\ref{prop:inf-n-convergence-1}, so that
\begin{equation}
\mathbb E\Bigl[\|x^{i,N}-\bar x\|+\mathsf{W}_{1,r}(\mu^N,\pi_{\theta_0})\Bigr]\le \varepsilon_N.
\end{equation}
Then, using the displayed formula for $\partial_\theta \mathcal J_N^i(\theta)$ and Proposition~\ref{prop:ips-likelihood-n-t-limit-grad}, it follows that
\begin{align}
\|\partial_\theta \mathcal J_N^i(\theta)-\partial_\theta \mathcal{J}(\theta)\| \le
\mathbb E\Bigl[
\bigl\|
H(\theta,x^{i,N},\pi_\theta,\nu_\theta,\pi_\theta,\mu^N)
-
H(\theta,\bar x,\pi_\theta,\nu_\theta,\pi_\theta,\pi_{\theta_0})
\bigr\|
\Bigr].
\end{align}
Applying Lemma~\ref{lem:appendix-H-lipschitz}, together with the same moment bounds, shows that the right-hand side is at most $C\varepsilon_N$. Finally, taking the supremum over $\theta\in\Theta$ proves the claim.
\end{proof}

\section{Proofs for Section~\ref{sec:main-results}}
\label{appendix:proofs-section44}
Let $N,M\in\mathbb{N}$ be fixed. Let $\boldsymbol z_t^{N,M}$ denote the concatenated state appearing in the update equation in \eqref{eq:appendix-alg-H}, and $\widetilde{\boldsymbol z}_t^{N,M}$ the concatenated state appearing in the update equation in \eqref{eq:appendix-alg-h}. Due to Propositions~\ref{prop:asymptotic-partial-log-lik-grad} and~\ref{prop:j-vp}, the two algorithms can then be rewritten in the form
\begin{align}
\mathrm d\theta_t
&=
-\gamma_t\partial_\theta \mathcal J_{\mathrm{vp}}^{i,N,M}(\theta_t)\,\mathrm d t
-\gamma_t F_H^{i,N,M}(\theta_t,\boldsymbol{z}_t^{N,M})\,\mathrm d t
+\gamma_t G^{i,N,M}(\theta_t,\boldsymbol{z}_t^{N,M})\sigma^{-\top}\,\mathrm d w_t^{i,N},
\label{eq:appendix-alg-H}\\
\mathrm d\vartheta_t
&=
-\gamma_t\partial_\theta \mathcal J_{\mathrm{vp}}^{i,N,M}(\vartheta_t)\,\mathrm d t
-\gamma_t F_h^{i,N,M}(\vartheta_t,\widetilde{\boldsymbol{z}}_t^{N,M})\,\mathrm d t
+\gamma_t g^{i,j,N,M}(\vartheta_t,\widetilde{\boldsymbol{z}}_t^{N,M})\sigma^{-\top}\,\mathrm d w_t^{i,N},
\label{eq:appendix-alg-h}
\end{align}
where we have introduced the notation
\begin{equation}
F_H^{i,N,M}(\theta,z):=H^{i,N,M}(\theta,z)-\partial_\theta \mathcal J_{\mathrm{vp}}^{i,N,M}(\theta),
\qquad
F_h^{i,N,M}(\theta,z):=h^{i,j,k,N,M}(\theta,z)-\partial_\theta \mathcal J_{\mathrm{vp}}^{i,N,M}(\theta).
\end{equation}
In addition, due to Proposition~\ref{prop:asymptotic-partial-log-lik-grad}, the functions $F_H^{i,N,M}$ and $F_h^{i,N,M}$ are both centred with respect to the invariant distribution $\Pi_\theta^{N,M}$.

\begin{lemma}
\label{lemma:fluctuations}
Suppose that Assumptions~\ref{assumption:learning-rate} - \ref{assumption:pgp} hold. Let $N,M\in\mathbb{N}$ be fixed. In addition, fix $i\in[N]$, and $j,k\in[M]$. Then
\begin{equation}
\int_0^t \gamma_s F^{i,N,M}_H(\theta_s,\boldsymbol{z}_s^{N,M})\,ds
\qquad\text{and}\qquad
\int_0^t \gamma_s F^{i,N,M}_h(\vartheta_s,\widetilde{\boldsymbol{z}}_s^{N,M})\,ds
\end{equation}
converge almost surely as $t\to\infty$. In particular, for any stopping times
$\tau_n\le \sigma_n\to\infty$ such that $
\sup_n \int_{\tau_n}^{\sigma_n}\gamma_s\,ds < \infty$, it holds almost surely that
\begin{equation}
\int_{\tau_n}^{\sigma_n}\gamma_s F^{i,N,M}_H(\theta_s,\boldsymbol{z}_s^{N,M})\,ds \to 0,
\qquad
\int_{\tau_n}^{\sigma_n}\gamma_s F^{i,N,M}_h(\vartheta_s,\widetilde{\boldsymbol{z}}_s^{N,M})\,ds \to 0.
\end{equation}
\end{lemma}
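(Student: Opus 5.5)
We follow the Poisson-equation approach of \citet{sirignano2017stochastic}, applied to the concatenated process for fixed $N,M$. We treat the averaged estimator; the particlewise case is identical, with \eqref{eq:IPS-PGP-b} in place of \eqref{eq:IPS-PGP-a}.

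\textbf{Step 1: the Poisson equation.} By Proposition~\ref{prop:asymptotic-partial-log-lik-grad}, each component of $F_H^{i,N,M}$ has zero mean under $\Pi_\theta^{N,M}$. By Assumption~\ref{assumption:pgp}(i.a) it lies in $\bar{\mathbb G}^{a,K,N,M}$; subtracting the deterministic term $\partial_\theta\mathcal J_{\mathrm{vp}}^{i,N,M}$ preserves this class, because that term and its $\theta$-derivatives are bounded using Assumption~\ref{assumption:poisson}(ii)--(iii). Assumption~\ref{assumption:poisson}(iv) then gives a unique solution $v(\theta,\boldsymbol z)$ of $\mathcal A_\theta^{N,M}v=F_H^{i,N,M}$. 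The solution $v$, its derivatives $\partial_\theta v$ and $\partial_\theta^2 v$, and the mixed derivative $\partial_{\boldsymbol z}\partial_\theta v$ all have polynomial growth, and $v$ is $C^2$ in $\boldsymbol z$.

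\textbf{Step 2: It\^o's formula.} Apply It\^o's formula to $u(t,\theta_t,\boldsymbol z_t)=\gamma_t v(\theta_t,\boldsymbol z_t^{N,M})$. Here $\boldsymbol z_t^{N,M}$ evolves with generator $\mathcal A_{\theta_t}^{N,M}$, since the virtual systems are integrated at $\theta_t$. The parameter $\theta_t$ follows \eqref{eq:appendix-alg-H}, with drift of order $\gamma_t$ and diffusion coefficient $\gamma_tG^{i,N,M}\sigma^{-\top}$. Solving for the term containing $\mathcal A v$ gives
\begin{align*}
\int_0^t\gamma_sF_H^{i,N,M}(\theta_s,\boldsymbol z_s)\,\mathrm ds
&=\gamma_tv(\theta_t,\boldsymbol z_t)-\gamma_0v(\theta_0,\boldsymbol z_0)-\int_0^t\dot\gamma_sv\,\mathrm ds
-\int_0^t\gamma_s\partial_\theta v\,\mathrm d\theta_s\\
&\quad-\tfrac12\int_0^t\gamma_s^3\,\mathrm{tr}\bigl(\partial^2_\theta v\,(G\sigma^{-\top})(G\sigma^{-\top})^{\top}\bigr)\mathrm ds
-\int_0^t\gamma_s^2\,\partial_{\boldsymbol z}\partial_\theta v:\mathrm d\langle\boldsymbol z,\boldsymbol w^{i,N}\rangle_s
-\int_0^t\gamma_s\partial_{\boldsymbol z}v\,\Sigma_{N,M}\,\mathrm d\boldsymbol b_s.
\end{align*}
The $\partial_\theta v\,\mathrm d\theta_s$ term splits into a $\gamma_s^2$ drift part and a $\gamma_s^2$ stochastic integral. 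The cross-variation term arises because $x^{i,N}$ and $\theta$ share the Brownian motion $w^{i,N}$.

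\textbf{Step 3: control of each term.} We bound each term using polynomial growth together with the moment bounds of Assumption~\ref{assumption:poisson}(v).
\begin{itemize}
\item The boundary term: $\mathbb E[\gamma_t^2\sup_{s\le t}|v|^2]\le C\gamma_t^2\sqrt t$. Taking $\rho$ large enough, $\gamma_tt^{\rho}\to0$ from Assumption~\ref{assumption:learning-rate} makes this decay. A Borel--Cantelli argument along integers, together with control of increments between integers, gives $\gamma_tv(\theta_t,\boldsymbol z_t)\to0$ almost surely.
\item The $\dot\gamma$ term: its expectation is bounded by $\int|\dot\gamma_s|\,\mathbb E|v|\,\mathrm ds<\infty$, so it converges absolutely almost surely.
\item The $\gamma_s^2$ and $\gamma_s^3$ drift terms converge absolutely, since $\int\gamma_s^2\,\mathrm ds<\infty$ and the moments are bounded.
\item The stochastic integrals: their quadratic variations have finite expectation, bounded by $C\int\gamma_s^2(1+\mathbb E\|\cdot\|^q)\,\mathrm ds<\infty$. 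They are therefore $L^2$-bounded martingales and converge almost surely.
\end{itemize}
This yields almost sure convergence of $\int_0^t\gamma_sF_H\,\mathrm ds$. The second claim follows immediately: for stopping times $\tau_n\le\sigma_n\to\infty$, the integral over $[\tau_n,\sigma_n]$ is a difference of values of a convergent process, hence tends to zero. The hypothesis $\sup_n\int_{\tau_n}^{\sigma_n}\gamma_s\,\mathrm ds<\infty$ is not even needed.

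\textbf{Main obstacle.} The delicate point is the moment control under the coupled dynamics. The moment bounds must hold uniformly along the random, time-varying $\theta_s$. This includes the tangent process $\hat{\boldsymbol y}$, and it requires the unboundedness of $\Theta$-dependent constants to be avoided, using $\theta_t\in\Theta$ and the uniformity in Assumption~\ref{assumption:poisson}(ii)--(v). A second delicate point is the almost sure, rather than only $L^1$, vanishing of the boundary term $\gamma_tv$. For this we would first try the standard argument of \citet{sirignano2017stochastic}: a union bound over the intervals $[n,n+1]$ using $\mathbb E\sup_{s\le n+1}\|\cdot\|^q\le K_q\sqrt{n+1}$, combined with a high moment $q$ and $\gamma_n n^{\rho}\to0$.
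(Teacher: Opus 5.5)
Your proposal is correct and follows the same route as the paper's proof: solve the Poisson equation for the centred fluctuation $F_H^{i,N,M}$, apply It\^o's formula to $\gamma_t v(\theta_t,\boldsymbol z_t^{N,M})$, control the boundary, $\dot\gamma$, $\gamma^2$/$\gamma^3$ drift and martingale terms via polynomial growth, the moment bounds and Assumption~\ref{assumption:learning-rate}, and obtain the interval statement by differencing a convergent process (you are also right that the bound on $\int_{\tau_n}^{\sigma_n}\gamma_s\,\mathrm ds$ is not needed for that step). The only slip is that $\rho$ is fixed by Assumption~\ref{assumption:learning-rate} and cannot be taken large, so the boundary term should be handled as in your final paragraph: use a sufficiently high moment $q$ together with $\gamma_n n^{\rho}\to 0$ in the Borel--Cantelli bound.
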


\begin{proof}
We prove the statement for $F^{i,N,M}_H$; the proof for $F^{i,N,M}_h$ is identical. The proof follows closely the proof of \citet[][Lemma 3.1]{sirignano2017stochastic}; see also \cite[][Lemma 1]{surace2019online}. In the interest of brevity, we just outline the main details.

By Proposition~\ref{prop:asymptotic-partial-log-lik-grad}, Proposition~\ref{prop:j-vp}, Assumption~\ref{assumption:poisson}(ii)--(iii), and
Assumption~\ref{assumption:pgp}, the map $\smash{\theta\mapsto \partial_\theta \mathcal J^{i,N,M}_{\mathrm{vp}}(\theta)}$ is $C^1$, with first derivative uniformly bounded on $\Theta$. Indeed, writing
$\smash{\Sigma_{\theta,\ell}:=\partial_{\theta_\ell}\Pi_\theta^{N,M}}$, we have componentwise that 
\begin{equation}
\partial_{\theta_\ell}\partial_{\theta_m}\mathcal J_{i,N,M}^{\mathrm{vp}}(\theta)
=
\int_{\mathbb R^K}\partial_{\theta_\ell}H_{i,N,M,m}(\theta,z)\,\Pi_\theta^{N,M}(dz)
+
\int_{\mathbb R^K}H_{i,N,M,m}(\theta,z)\,\Sigma_{\theta,\ell}(dz),
\end{equation}
with both terms uniformly bounded by the IPS-PGP bounds from Assumption~\ref{assumption:pgp}, together with the uniform moment bounds in Assumption~\ref{assumption:poisson}(ii)--(iii). Consequently, the function 
\begin{equation}
F^{i,N,M}_H(\theta,\boldsymbol z)=H^{i,N,M}(\theta,\boldsymbol z)-\partial_\theta \mathcal J^{i,N,M}_{\mathrm{vp}}(\theta)
\end{equation}
belongs componentwise to the centred class covered by Assumption~\ref{assumption:poisson}(iv). Thus, the Poisson equation
\begin{equation}
\mathcal A_\theta^{N,M}v(\theta,\cdot)=F_H^{i,N,M}(\theta,\cdot)
\end{equation}
has a unique solution with the regularity and polynomial growth bounds stated in Assumption~\ref{assumption:poisson}(iv). Applying It\^o's formula to $\gamma_t v(\theta_t,\boldsymbol{z}_t^{N,M})$, we obtain the decomposition
\begin{equation}
\int_0^t \gamma_s F^{i,N,M}_H(\theta_s,\boldsymbol{z}_s^{N,M})\,ds
=
\text{boundary term}
+
\dot\gamma\text{-term}
+
\gamma^2\text{-drift terms}
+
\text{martingale terms}.
\end{equation}
The polynomial growth bounds from Assumption~\ref{assumption:poisson}(iv), the moment bounds in
Assumption~\ref{assumption:poisson}(v), and the summability conditions in Assumption~\ref{assumption:learning-rate} imply that every drift term
on the right-hand side is absolutely integrable on $[0,\infty)$ and every martingale term has
finite quadratic variation. Hence each term converges almost surely, which proves the first claim.
The interval version follows by taking differences of the convergent process.
\end{proof}

\begin{proof}[Proof of Proposition~\ref{prop:finite-n-m-convergence}]
We establish the result for $(\theta_t)_{t\ge0}$. The proof for $(\vartheta_t)_{t\ge0}$ is identical after replacing $F^{i,N,M}_H$ by $F^{i,N,M}_h$. 

The proof follows the argument introduced by \citet{sirignano2017stochastic}; see also \citet{surace2019online,sharrock2023online}. Similar to above, we here just outline the main points. First, by definition, we have that
\begin{equation}
\mathcal J^{i,N,M}_{\mathrm{vp}}(\theta)\ge 0 \qquad \text{for all $\theta\in\Theta$}. 
\end{equation}
Second, arguing as in the proof of Lemma~\ref{lemma:fluctuations}, the norm of the gradient and the Hessian are both bounded above, viz
\begin{equation}
\sup_{\theta\in\Theta}\|\partial_\theta \mathcal J^{i,N,M}_{\mathrm{vp}}(\theta)\| < \infty,
\qquad
\sup_{\theta\in\Theta}\|\nabla_\theta^2 \mathcal J^{i,N,M}_{\mathrm{vp}}(\theta)\| < \infty.
\end{equation}
Third, due to Lemma~\ref{lemma:fluctuations}, the fluctuation term in \eqref{eq:appendix-alg-H} is asymptotically negligible over stopping intervals whose
$\int \gamma_s\,ds$-length is uniformly bounded. Finally, for the martingale term, Assumptions~\ref{assumption:learning-rate}, \ref{assumption:poisson}(v), and~\ref{assumption:pgp} imply
\begin{equation}
\int_0^\infty \gamma_s^2 \|G^{i,N,M}(\theta_s,\boldsymbol z_s^{N,M})\|^2\,ds<\infty
\qquad\text{almost surely.}
\end{equation}
Thus, in particular, the martingale term in the update equation \eqref{eq:appendix-alg-H} has finite quadratic variation on $[0,\infty)$.

We can now apply the standard stopping-time argument in \citet{sirignano2017stochastic,surace2019online}. For any fixed $\kappa>0$, we begin by defining the usual cycle of random times corresponding to periods of time for which $\|\partial_\theta \mathcal J^{i,N,M}_{\mathrm{vp}}(\theta_t)\|\in[0,\kappa)$ or otherwise. Each
large-gradient cycle results in a deterministic decrease in $\mathcal J^{i,N,M}_{\mathrm{vp}}$, up to error terms which vanish almost surely by Lemma~\ref{lemma:fluctuations} and the martingale estimate above. Since $\mathcal J_{i,N,M}^{\mathrm{vp}}\ge0$, infinitely many such cycles are impossible. Therefore, almost
surely, only finitely many excursions above level $\kappa$ can occur. Since $\kappa>0$ was chosen
arbitrarily, this establishes that
\begin{equation}
\lim_{t\to\infty}\|\partial_\theta \mathcal J^{i,N,M}_{\mathrm{vp}}(\theta_t)\|=0.
\end{equation}
\end{proof}

\begin{proof}[Proof of Proposition~\ref{prop:inf-n-m-convergence}]
We once again prove the result for $(\theta_t^{N,M})_{t\geq 0}$; the argument for $(\vartheta_t^{N,M})_{t\ge0}$ is identical. By the triangle inequality,
\begin{equation}
\limsup_{t\to\infty}\|\partial_\theta \mathcal J(\theta_t^{N,M})\|
\le
\sup_{\theta\in\Theta}\|\partial_\theta \mathcal J_{\mathrm{vp}}^{i,N,M}(\theta)-\partial_\theta \mathcal{J}(\theta)\|
+
\limsup_{t\to\infty}\|\partial_\theta \mathcal J_{\mathrm{vp}}^{i,N,M}(\theta_t^{N,M})\|.
\end{equation}
The second term is zero almost surely by Proposition~\ref{prop:finite-n-m-convergence}, while the first is bounded by $C(\varepsilon_N+\delta_M)$ by Proposition~\ref{prop:inf-n-convergence-1}. Hence
\begin{equation}
\limsup_{t\to\infty}\|\partial_\theta \mathcal J(\theta_t^{N,M})\|
\le C(\varepsilon_N+\delta_M)
\qquad\text{almost surely.}
\end{equation}
The claim now follows by sending $N,M\to\infty$.
\end{proof}

\begin{proof}[Proof of Corollary~\ref{cor:optional-iterated-parameter-consistency}]
Similar to the previous results, we establish the result for $(\theta_t^{N,M})_{t\geq 0}$. Fix $\varepsilon>0$ and define
\begin{equation}
c_\varepsilon
:=
\inf\Bigl\{
\|\partial_\theta \mathcal{J}(\theta)\|
:
\theta\in\Theta,\ \|\theta-\theta_0\|\ge \varepsilon
\Bigr\}
>0.
\end{equation}
By Proposition~\ref{prop:inf-n-m-convergence}, almost surely there exist $N_0,M_0$ such that for all $N\ge N_0$ and $M\ge M_0$,
\begin{equation}
\limsup_{t\to\infty}\|\partial_\theta \mathcal J(\theta_t^{N,M})\|<c_\varepsilon,
\qquad
\limsup_{t\to\infty}\|\partial_\theta \mathcal J(\vartheta_t^{N,M})\|<c_\varepsilon.
\end{equation}
Suppose, for contradiction, that for some such $N,M$,
\begin{equation}
\limsup_{t\to\infty}\|\theta_t^{N,M}-\theta_0\|\ge \varepsilon.
\end{equation}
Then there exists a sequence $t_n\to\infty$ such that $\|\theta_{t_n}^{N,M}-\theta_0\|\ge \varepsilon$ for all $n$. By the definition of $c_\varepsilon$, it follows that $\smash{ \|\partial_\theta \mathcal J(\theta_{t_n}^{N,M})\|\ge c_\varepsilon}$ for all $n$. But this contradicts $\limsup_{t\to\infty}\|\partial_\theta \mathcal J(\theta_t^{N,M})\|<c_\varepsilon$. Hence
\begin{equation}
\limsup_{t\to\infty}\|\theta_t^{N,M}-\theta_0\|<\varepsilon.
\end{equation}
Since $\varepsilon>0$ was arbitrary, the claim follows.
\end{proof}

\section{Proofs for Section~\ref{sec:numerics-quadratic}}
\label{appendix:quadratic-model-proofs}

In this appendix, we prove the closed-form formulas for the one-dimensional quadratic confinement and quadratic interaction model (cf. Section~\ref{sec:numerics-quadratic}). Let $\alpha:=\theta_1+\theta_2$ and $\alpha_0:=\theta_{0,1}+\theta_{0,2}$. Throughout, we assume $\theta_{0,1}>0$ and $\alpha_0>0$, so that the true finite-$N$ system and the mean-field limit admit the invariant Gaussian laws used below. In addition, whenever formulas are evaluated at a generic parameter $\theta$, we assume \(\theta_1>0\) and \(\alpha>0\).

\begin{proposition}
\label{prop:appendix-quadratic-meanfield}
Suppose that \(\theta_1>0\) and \(\alpha:=\theta_1+\theta_2>0\).  
The invariant mean-field law for the model with quadratic confinement and quadratic interaction is $\pi_\theta = \mathcal N\!\left(0,\frac{\sigma^2}{2\alpha}\right)$. In addition, the mean-field objective is given exactly by
\begin{equation}
\mathcal{J}(\theta)=\frac{(\alpha-\alpha_0)^2}{4\alpha_0}.
\end{equation}
\end{proposition}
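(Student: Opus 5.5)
We first identify the mean-field stationary law and then substitute it into the definition of $\mathcal J$ in Proposition~\ref{prop:mvsde-likelihood-t-limit}. For the quadratic model the interaction kernel is $b(\theta,x,y)=-\theta_1x-\theta_2(x-y)$. Hence
\begin{equation}
B(\theta,x,\mu)=-\alpha x+\theta_2 m(\mu),
\end{equation}
where $m(\mu):=\int y\,\mu(\mathrm dy)$ is the mean of $\mu$.

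\textbf{Step 1: stationary law.} Take expectations in the MVSDE \eqref{eq:MVSDE}. The mean $m_t:=\mathbb E[x_t^\theta]$ then satisfies $\dot m_t=-\alpha m_t+\theta_2 m_t=-\theta_1 m_t$.

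Any invariant law therefore has $m(\pi_\theta)=0$, since $\theta_1>0$. With zero mean, the stationary dynamics reduce to the linear Ornstein--Uhlenbeck SDE $\mathrm dx=-\alpha x\,\mathrm dt+\sigma\,\mathrm dw$. Because $\alpha>0$, this SDE has the unique invariant law $\mathcal N(0,\sigma^2/(2\alpha))$.

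Conversely, this Gaussian law is invariant for the MVSDE: started from it, the mean stays $0$, so the McKean--Vlasov drift coincides with the OU drift. Uniqueness of $\pi_\theta$ is already part of Assumption~\ref{assumption:model}, so only this consistency check is needed.

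\textbf{Step 2: the objective.} Since $m(\pi_\theta)=m(\pi_{\theta_0})=0$, we have
\begin{equation}
B(\theta,x,\pi_\theta)-B(\theta_0,x,\pi_{\theta_0})=-(\alpha-\alpha_0)x.
\end{equation}
In one dimension $\|z\|^2_{\sigma\sigma^\top}=z^2/\sigma^2$. Therefore
\begin{equation}
\mathcal J(\theta)=\frac{(\alpha-\alpha_0)^2}{2\sigma^2}\int x^2\,\pi_{\theta_0}(\mathrm dx)=\frac{(\alpha-\alpha_0)^2}{2\sigma^2}\cdot\frac{\sigma^2}{2\alpha_0}=\frac{(\alpha-\alpha_0)^2}{4\alpha_0},
\end{equation}
as claimed.

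The only step needing any care is showing the stationary mean vanishes, i.e.\ that no nonzero-mean invariant law exists. The mean ODE above settles this directly: a stationary mean must satisfy $\theta_1 m=0$, which forces $m=0$ because $\theta_1>0$. The rest is routine Gaussian moment computation.
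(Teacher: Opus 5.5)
Your proposal is correct and follows the paper's proof essentially step for step: the stationary mean must satisfy $\theta_1 m = 0$, so the mean-field dynamics reduce to an Ornstein--Uhlenbeck process with rate $\alpha$, and the objective then follows from $B(\theta,x,\pi_\theta)-B(\theta_0,x,\pi_{\theta_0})=-(\alpha-\alpha_0)x$ together with the second moment $\sigma^2/(2\alpha_0)$ of $\pi_{\theta_0}$. Your converse check that the Gaussian law is actually invariant for the MVSDE is a harmless addition the paper leaves implicit, as is your use of $\theta_{0,1}>0$ and $\alpha_0>0$, which the appendix assumes throughout.
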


\begin{proof}
The mean-field model is given by
\begin{equation}
\mathrm d x_t^\theta
=
\bigl[-\theta_1 x_t^\theta-\theta_2(x_t^\theta-m_t)\bigr]\mathrm d t
+
\sigma\,\mathrm d w_t,
\qquad
m_t:=\mathbb E[x_t^\theta].
\end{equation}
At stationarity, the mean $m_t$ is constant and therefore satisfies $0=-\theta_1 m_t$, so $m_t=0$. The stationary mean-field dynamics are thus given by
\begin{equation}
\mathrm d x_t^\theta=-\alpha x_t^\theta\,\mathrm d t+\sigma\,\mathrm d w_t,
\end{equation}
This is just an Ornstein-Uhlenbeck (OU) process, whose unique invariant law is $\mathcal N(0,\frac{\sigma^2}{2\alpha})$. It follows that $B(\theta,x,\pi_\theta) = -\theta_1 x-\theta_2\bigl(x-\pi_\theta[\mathrm{id}]\bigr) = -\alpha x$ and thus
\begin{equation}
\mathcal{J}(\theta)
=
\frac1{2\sigma^2}
\int_{\mathbb R}
\bigl((\alpha-\alpha_0)x\bigr)^2\,\pi_{\theta_0}(\mathrm d x)=
\frac1{2\sigma^2}
(\alpha-\alpha_0)^2
\frac{\sigma^2}{2\alpha_0}
=
\frac{(\alpha-\alpha_0)^2}{4\alpha_0}.
\end{equation}
\end{proof}

\begin{lemma}
\label{lem:appendix-quadratic-covariances}
Let $x^N=(x^{1,N},\dots,x^{N,N})\sim\pi_{\theta_0}^N$ be the invariant law of the true finite-$N$ system. Define $\bar x^N:=\frac1N\sum_{a=1}^N x^{a,N}$ and $\Xi^{i,N}:=x^{i,N}-\bar x^N$. Then, for every $i\in[N]$,
\begin{equation}
\mathbb E[\bar x^N]=0,~
\mathrm{Var}(\bar x^N)=\frac{\sigma^2}{2N\theta_{0,1}}
, \qquad 
\mathbb E[\Xi^{i,N}]=0,~
\mathrm{Var}(\Xi^{i,N})=\frac{N-1}{N}\frac{\sigma^2}{2\alpha_0}
,\qquad
\mathrm{Cov}(\bar x^N,\Xi^{i,N})=0.
\end{equation}
Thus, in particular, 
\begin{equation}
V_N:=\mathrm{Var}(x^{i,N})
=
\frac{\sigma^2}{2N\theta_{0,1}}
+
\frac{N-1}{N}\frac{\sigma^2}{2\alpha_0},
\qquad
C_N:=\mathrm{Cov}(x^{i,N},\bar x^N)
=
\frac{\sigma^2}{2N\theta_{0,1}}.
\end{equation}
\end{lemma}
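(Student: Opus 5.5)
The idea is to exploit the linear structure of \eqref{eq:IPS_linear_model} at $\theta=\theta_0$ and split each particle into the empirical mean plus a fluctuation, $x^{i,N}=\bar x^N+\Xi^{i,N}$. Each of these two pieces evolves as an Ornstein--Uhlenbeck process, and the pair $(\bar x^N_t,\Xi^{i,N}_t)$ is a two-dimensional OU process with diagonal drift. Its stationary covariance can therefore be read off from the Lyapunov equation.

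\textbf{Step 1 (empirical mean).} Averaging \eqref{eq:IPS_linear_model} over $i$ makes the interaction term vanish, since $\frac1N\sum_i\sum_j(x^i-x^j)=0$. This gives
\begin{equation}
\mathrm d\bar x^N_t=-\theta_{0,1}\bar x^N_t\,\mathrm dt+\sigma\,\mathrm d\bar w_t,\qquad \bar w_t:=\tfrac1N\textstyle\sum_{a}w_t^{a,N}.
\end{equation}
Here $\langle\bar w\rangle_t=t/N$.

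\textbf{Step 2 (fluctuation).} Subtracting Step 1 from the $i$-th equation gives
\begin{equation}
\mathrm d\Xi^{i,N}_t=-\alpha_0\,\Xi^{i,N}_t\,\mathrm dt+\sigma\,\mathrm d(w^{i,N}_t-\bar w_t).
\end{equation}
The quadratic variation rate of $w^{i,N}-\bar w$ is $(1-\tfrac1N)^2+\tfrac{N-1}{N^2}=\tfrac{N-1}{N}$. The cross-variation rate with $\bar w$ is $\tfrac1N-\tfrac1N=0$.

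\textbf{Step 3 (stationary moments).} The pair $(\bar x^N,\Xi^{i,N})$ is a linear SDE with drift matrix $-\mathrm{diag}(\theta_{0,1},\alpha_0)$, which is stable because $\theta_{0,1},\alpha_0>0$. Its noise covariance rate is $\sigma^2\,\mathrm{diag}(\tfrac1N,\tfrac{N-1}{N})$. Under $\pi_{\theta_0}^N$, which is the unique invariant law by Assumption~\ref{assumption:model}, the pair is therefore a centred Gaussian. The means vanish because the stationary mean $m$ of a linear SDE with zero forcing satisfies $Am=0$ with $A$ invertible. The stationary covariance $S$ solves $AS+SA^\top+Q=0$, where $A$ is the drift matrix and $Q$ the noise covariance rate above. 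Since $A$ is diagonal and $Q$ has zero off-diagonal entries, the off-diagonal equation reads $-(\theta_{0,1}+\alpha_0)S_{12}=0$, so $S_{12}=0$. The diagonal entries are $S_{11}=\frac{\sigma^2}{2N\theta_{0,1}}$ and $S_{22}=\frac{N-1}{N}\frac{\sigma^2}{2\alpha_0}$.

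\textbf{Step 4 (conclusion).} From $x^{i,N}=\bar x^N+\Xi^{i,N}$ and $\mathrm{Cov}(\bar x^N,\Xi^{i,N})=0$ we obtain $V_N=S_{11}+S_{22}$ and $C_N=\mathrm{Var}(\bar x^N)=S_{11}$, as claimed.

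No step is a serious obstacle. The only point needing care is the identification of the invariant law of the full system with the stationary law of the pair, which follows from uniqueness of $\pi_{\theta_0}^N$. Apart from that, the main thing to get right is the quadratic and cross-variation bookkeeping in Step 2.
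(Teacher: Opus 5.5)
Your proposal is correct and is essentially the paper's proof: the paper likewise writes $x^{i,N}=\bar x^N+\Xi^{i,N}$ and identifies $\bar x^N$ and $\Xi^{i,N}$ as Ornstein--Uhlenbeck processes with rates $\theta_{0,1}$ and $\alpha_0$ and noise intensities $\sigma^2/N$ and $\sigma^2(N-1)/N$; it then obtains $\mathrm{Cov}(\bar x^N,\Xi^{i,N})=0$ from It\^o's formula on $\bar x^N_t\,\Xi^{i,N}_t$, which is exactly the off-diagonal entry of your Lyapunov equation. One small point: Step 3 actually relies on $\pi_{\theta_0}^N$ being invariant with finite second moments, so that the projected pair is stationary and the moment equations hold, rather than on its uniqueness.
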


\begin{proof}
By summing the SDEs for each particle in the IPS, and then dividing by $N$, we see that the empirical mean evolves according to
\begin{align}
\mathrm d\bar x_t^N=
-\theta_{0,1}\bar x_t^N\,\mathrm d t
+
\frac{\sigma}{N}\sum_{a=1}^N \mathrm d w_t^{a,N} =
-\theta_{0,1}\bar x_t^N\,\mathrm d t
+
\frac{\sigma}{\sqrt N}\,\mathrm d b_t,
\end{align}
where $b=(b_t)_{t\geq 0}$ denotes a standard Brownian motion. Thus, $(\bar x_t^N)_{t\geq 0}$ is an OU process with invariant law $\mathcal N(0,\sigma^2/(2N\theta_{0,1}))$. This proves the first result. Next observe that
\begin{align*}
\mathrm d\Xi_t^{i,N}=
\mathrm d x_t^{i,N}-\mathrm d\bar x_t^N=
-\alpha_0 \Xi_t^{i,N}\,\mathrm d t
+
\sigma\big(
\mathrm d w_t^{i,N}
-
\frac1N\sum_{a=1}^N \mathrm d w_t^{a,N}
\big).
\end{align*}
The driving martingale in the last display has quadratic variation equal to $\smash{(1-\frac1N)\mathrm d t =\frac{N-1}{N}\,\mathrm d t}$. Thus, $(\Xi_t^{i,N})_{t\geq 0}$ is also an OU process, this time with invariant law $\smash{\mathcal{N}(0,\frac{N-1}{N}\frac{\sigma^2}{2\alpha_0})}$.

Finally, the quadratic covariation of the two driving martingales is zero, and the drift equations are decoupled. Applying It\^o's formula to $\bar x_t^N\Xi_t^{i,N}$, we have
\begin{equation}
\frac{\mathrm d}{\mathrm d t}\mathbb E[\bar x_t^N\Xi_t^{i,N}]
=
-(\theta_{0,1}+\alpha_0)\mathbb E[\bar x_t^N\Xi_t^{i,N}].
\end{equation}
Thus, at stationarity, we must have $\mathbb E[\bar x^N\Xi^{i,N}]=0$. Finally, the formulas for $V_N$ and $C_N$ follow from the fact that $x^{i,N}=\bar x^N+\Xi^{i,N}$.
\end{proof}

\begin{proposition}
\label{prop:appendix-quadratic-finiteN}
For every finite $N$ and every $M\in\mathbb N$, the finite-particle surrogate objective is exactly independent of $M$, and given by
\begin{equation}
\mathcal J_{\mathrm{vp}}^{i,N,M}(\theta)=\mathcal J_N^i(\theta)=
\frac{(N-1)\theta_{0,2}^2}{4N\bigl(N\theta_{0,1}+\theta_{0,2}\bigr)}
+
\frac{N\theta_{0,1}+\theta_{0,2}}{4N\alpha_0\theta_{0,1}}
\bigl(\alpha-\alpha_N^\star\bigr)^2,
\end{equation}
where the pseudo-minimiser $\alpha_N^{\star}$ is given by
\begin{equation}
\alpha_N^\star
=
\frac{N\theta_{0,1}\alpha_0}{N\theta_{0,1}+\theta_{0,2}}
=
\alpha_0-\frac{\alpha_0\theta_{0,2}}{N\theta_{0,1}+\theta_{0,2}}.
\end{equation}
Meanwhile, if $\theta_2$ is known and only $\theta_1$ is estimated, then the finite-$N$ pseudo-minimiser is
\begin{equation}
\theta_{1,N}^\star
=
\alpha_N^\star-\theta_{0,2}
=
\frac{N\theta_{0,1}^2-\theta_{0,2}^2}{N\theta_{0,1}+\theta_{0,2}}
=
\theta_{0,1}-\frac{\theta_{0,2}(\theta_{0,1}+\theta_{0,2})}{N\theta_{0,1}+\theta_{0,2}};
\end{equation}
Finally, if $\theta_1$ is known and only $\theta_2$ is estimated, then the finite-$N$ pseudo-minimiser is
\begin{equation}
\theta_{2,N}^\star
=
\alpha_N^\star-\theta_{0,1}
=
\frac{(N-1)\theta_{0,1}\theta_{0,2}}{N\theta_{0,1}+\theta_{0,2}}
=
\theta_{0,2}-\frac{\theta_{0,2}(\theta_{0,1}+\theta_{0,2})}{N\theta_{0,1}+\theta_{0,2}}.
\end{equation}
\end{proposition}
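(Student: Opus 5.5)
The plan is to use the fact that, in this model, the drift depends on the measure argument only through its mean. This makes every finite-particle quantity reduce to Gaussian second moments already computed in Lemma~\ref{lem:appendix-quadratic-covariances}. Concretely, $B(\theta,x,\mu)=-\theta_1x-\theta_2(x-\mu[\mathrm{id}])=-\alpha x+\theta_2\,\mu[\mathrm{id}]$.

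\textbf{Step 1: reduce both objectives to the same integrand.} I first show that $\pi_\theta^{1,M}$ and $\pi_\theta$ both have mean zero, so that $B(\theta,x,\pi_\theta^{1,M})=B(\theta,x,\pi_\theta)=-\alpha x$.
\begin{itemize}
\item For $\pi_\theta$, this is Proposition~\ref{prop:appendix-quadratic-meanfield}.
\item For $\pi_\theta^{1,M}$, I repeat the argument of Lemma~\ref{lem:appendix-quadratic-covariances} at parameter $\theta$ with $M$ particles. The virtual empirical mean is an OU process with drift $-\theta_1\bar{\tilde x}$, so its stationary mean is $0$. By exchangeability, every first marginal then has mean $0$.
\end{itemize}
Hence the integrands defining $\mathcal J_{\mathrm{vp}}^{i,N,M}$ (Proposition~\ref{prop:j-vp}) and $\mathcal J_N^i$ coincide pointwise. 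This gives equality and independence of $M$ immediately.

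\textbf{Step 2: compute the common objective.} On the true side, $B(\theta_0,x^{i,N},\mu^N)=-\alpha_0x^{i,N}+\theta_{0,2}\bar x^N$. Therefore
\begin{equation}
\mathcal J_N^i(\theta)=\frac{1}{2\sigma^2}\,\mathbb E\Bigl[\bigl((\alpha_0-\alpha)x^{i,N}-\theta_{0,2}\bar x^N\bigr)^2\Bigr]
=\frac{1}{2\sigma^2}\Bigl[(\alpha_0-\alpha)^2V_N-2(\alpha_0-\alpha)\theta_{0,2}C_N+\theta_{0,2}^2C_N\Bigr],
\end{equation}
using $\mathrm{Var}(\bar x^N)=C_N$ from Lemma~\ref{lem:appendix-quadratic-covariances}.

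\textbf{Step 3: complete the square in $\alpha$.} This gives the minimiser $\alpha_0-\alpha_N^\star=\theta_{0,2}C_N/V_N$, the quadratic coefficient $V_N/(2\sigma^2)$, and the constant $\theta_{0,2}^2C_N(1-C_N/V_N)/(2\sigma^2)$. Combining the two fractions in Lemma~\ref{lem:appendix-quadratic-covariances} gives
\begin{equation}
V_N=\frac{\sigma^2(N\theta_{0,1}+\theta_{0,2})}{2N\theta_{0,1}\alpha_0},\qquad \frac{C_N}{V_N}=\frac{\alpha_0}{N\theta_{0,1}+\theta_{0,2}}.
\end{equation}
Substituting these yields:
\begin{itemize}
\item the formula for $\alpha_N^\star$;
\item the quadratic coefficient $(N\theta_{0,1}+\theta_{0,2})/(4N\alpha_0\theta_{0,1})$;
\item the constant $(N-1)\theta_{0,2}^2/\bigl(4N(N\theta_{0,1}+\theta_{0,2})\bigr)$, since $1-C_N/V_N=(N-1)\theta_{0,1}/(N\theta_{0,1}+\theta_{0,2})$.
\end{itemize}

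\textbf{Step 4: one-parameter cases.} The objective depends on $\theta$ only through $\alpha$ and is a strictly convex quadratic in $\alpha$. So if one component is fixed at its true value, the minimiser over the other is obtained by setting $\alpha=\alpha_N^\star$. This gives $\theta_{1,N}^\star=\alpha_N^\star-\theta_{0,2}$ and $\theta_{2,N}^\star=\alpha_N^\star-\theta_{0,1}$, and the displayed closed forms follow by algebra.

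The only step requiring any care is Step 1, specifically that the virtual first marginal has mean exactly zero for every finite $M$; this follows from the empirical-mean OU argument with $\theta_1>0$. The rest is routine algebra.
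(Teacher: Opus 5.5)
Your proposal is correct and follows essentially the same route as the paper's proof. You show that the virtual first marginal $\pi_\theta^{1,M}$ has mean zero, so that $B(\theta,x,\pi_\theta^{1,M})=-\alpha x=B(\theta,x,\pi_\theta)$ and the two objectives coincide; you reduce $\mathcal J_N^i$ to the moments $V_N$, $C_N$ of Lemma~\ref{lem:appendix-quadratic-covariances} and complete the square in $\alpha$; and you then restrict to the lines $\theta_2=\theta_{0,2}$ and $\theta_1=\theta_{0,1}$. Your explicit evaluation of the constant via $1-C_N/V_N=(N-1)\theta_{0,1}/(N\theta_{0,1}+\theta_{0,2})$ fills in the step the paper leaves as ``substituting back'', and your algebra is correct throughout.
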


\begin{proof}
Fix $M\in\mathbb N$. For the frozen virtual $M$-particle system evaluated at $\theta$, exchangeability implies that each stationary marginal has the same mean, say $m_\theta^M$. Thus, averaging the $M$ drift equations and taking expectations at stationarity gives $0=-\theta_1 m_\theta^M$, and so $m_\theta^M=0$. This implies, in particular, that
\begin{equation}
B(\theta,x,\pi_\theta^{1,M})
=
-\theta_1 x-\theta_2(x-m_\theta^M)
=
-\alpha x
=
B(\theta,x,\pi_\theta).
\end{equation}
Hence $\mathcal J_{\mathrm{vp}}^{i,N,M}(\theta)=\mathcal J_N^i(\theta)$ for all $\theta\in\Theta$. Suppose now that $x^N\sim\pi_{\theta_0}^N$. The observed finite-$N$ drift under the true parameter is given by
\begin{equation}
B(\theta_0,x^{i,N},\mu^N)
=
-\theta_{0,1}x^{i,N}-\theta_{0,2}(x^{i,N}-\bar x^N)
=
-\alpha_0 x^{i,N}+\theta_{0,2}\bar x^N,
\end{equation}
Meanwhile, the mean-field drift at $\pi_{\theta}$ is given by $B(\theta,x^{i,N},\pi_\theta)=-\alpha x^{i,N}$. Thus, using the definition of $\mathcal J_N^i$, we have that
\begin{equation}
\mathcal J_N^i(\theta)
=
\frac1{2\sigma^2}
\mathbb E\Bigl[
\bigl((\alpha_0-\alpha)x^{i,N}-\theta_{0,2}\bar x^N\bigr)^2
\Bigr].
\end{equation}
Expanding the square and using the results obtained in  Lemma~\ref{lem:appendix-quadratic-covariances}, it follows that
\begin{equation}
\mathcal J_N^i(\theta)
=
\frac1{2\sigma^2}
\Bigl(
(\alpha-\alpha_0)^2V_N
+
2(\alpha-\alpha_0)\theta_{0,2}C_N
+
\theta_{0,2}^2C_N
\Bigr).
\end{equation}
Since the coefficient of $(\alpha-\alpha_0)^2$ is positive, $\mathcal J_N^i$ is a strictly convex quadratic function of $\alpha$, and its unique minimising value of $\alpha$ is obtained by setting the derivative with respect to $\alpha$ equal to zero: $2(\alpha-\alpha_0)V_N+2\theta_{0,2}C_N=0$. Thus
\begin{equation}
\alpha_N^\star=\alpha_0-\theta_{0,2}\frac{C_N}{V_N}.
\end{equation}
Using the explicit formulae from Lemma~\ref{lem:appendix-quadratic-covariances}, we can calculate the ratio of the covariance and the variance as
\begin{equation}
\frac{C_N}{V_N}
=
\frac{\frac1{\theta_{0,1}}}
{\frac1{\theta_{0,1}}+\frac{N-1}{\alpha_0}}
=
\frac{\alpha_0}{N\theta_{0,1}+\theta_{0,2}},
\end{equation}
Thus, substituting into the previous expression, the pseudo-minimiser $\alpha_N^{*}$ is given by
\begin{equation}
\alpha_N^\star
=
\alpha_0\left(1-\frac{\theta_{0,2}}{N\theta_{0,1}+\theta_{0,2}}\right)
=
\frac{N\theta_{0,1}\alpha_0}{N\theta_{0,1}+\theta_{0,2}}.
\end{equation}
Finally, substituting this value back into the quadratic polynomial in $\alpha$ gives the completed-square form, viz
\begin{equation}
\mathcal J_N^i(\theta)
=
\frac{(N-1)\theta_{0,2}^2}{4N\bigl(N\theta_{0,1}+\theta_{0,2}\bigr)}
+
\frac{N\theta_{0,1}+\theta_{0,2}}{4N\alpha_0\theta_{0,1}}
\bigl(\alpha-\alpha_N^\star\bigr)^2.
\end{equation}
The last two statements in the proposition follow straightforwardly, restricting the previous result to the affine lines $\theta_2=\theta_{0,2}$ and $\theta_1=\theta_{0,1}$ respectively.
\end{proof}

\end{document}